\newtheorem{theorem}{Theorem}[section]
\newtheorem{proposition}{Proposition}[section]
\newtheorem{lemma}{Lemma}[section]
\newtheorem{assump}{Assumption}
\newtheorem{remark}{Remark}[section]
\numberwithin{equation}{section}
\renewcommand{\P}{\mathbb{P}}
\newcommand{\R}{\mathbb{R}}
\newcommand{\E}{\mathbb{E}}
\newcommand{\cE}{\mathcal{E}}
\newcommand{\N}{\mathbb{N}}
\newcommand{\F}{\mathcal{F}}
\newcommand{\B}{\mathcal{B}}
\newcommand{\cH}{\mathcal{H}}
\newcommand{\cM}{\mathcal{M}}
\newcommand{\T}{\mathcal{T}}
\newcommand{\cD}{\mathcal{D}}
\newcommand{\cZ}{\mathcal{Z}}
\newcommand{\tZ}{\widetilde{Z}}
\newcommand{\cW}{\mathcal{W}}
\newcommand{\cC}{\mathcal{C}}
\newcommand{\cG}{\mathcal{G}}
\newcommand{\M}{\mathcal{M}}
\newcommand{\cP}{\mathcal{P}}
\newcommand{\A}{\mathcal{A}}
\newcommand{\cL}{\mathcal{L}}
\newcommand{\cS}{\mathcal{S}}
\newcommand{\eps}{\varepsilon}
\newcommand{\ind}{\mathbbm{1}}
\DeclareMathOperator*{\esssup}{ess\,sup}
\DeclareMathOperator*{\essinf}{ess\,inf}
\newcommand{\nada}[1]{}
\definecolor{gb}{rgb}{0, 0.2, 0.8}
\title{Epstein-Zin Utility Maximization on a Random Horizon}
\author{Joshua Aurand\thanks{
		Verus Research, 6100 Uptown Blvd NE, Suite 260, Albuquerque, NM 87110, USA, email: \texttt{joshua.aurand@verusresearch.net}.} \and Yu-Jui Huang\thanks{
		University of Colorado, Department of Applied Mathematics, Boulder, CO 80309-0526, USA, email: \texttt{yujui.huang@colorado.edu}. Partially supported by National Science Foundation (DMS-1715439) and the University of Colorado (11003573).}}
\begin{document}
	\maketitle
	\begin{abstract}
		This paper solves the consumption-investment problem under 
		Epstein-Zin preferences on a random horizon. In an incomplete market, we take the random horizon to be a stopping time adapted to the market filtration, generated by all observable, but not necessarily tradable, state processes. Contrary to prior studies, we do not impose any fixed upper bound for the random horizon, allowing for truly unbounded ones. Focusing on the empirically relevant case where the risk aversion and the elasticity of intertemporal substitution are both larger than one, we characterize the optimal consumption and investment strategies using backward stochastic differential equations with superlinear growth on unbounded random horizons. This characterization, compared with the classical fixed-horizon result, involves an additional stochastic process that serves to capture the randomness of the horizon. As demonstrated in two concrete examples, changing from a fixed horizon to a random one drastically alters the optimal strategies. 
	\end{abstract}
	
\textbf{MSC (2010):} 
93E20, 
91G10.   	
\smallskip

\textbf{JEL:}
G11, 	
C61. 
\smallskip

\textbf{Keywords:} Consumption-investment problem, Epstein-Zin utilities, Random horizons, Backward stochastic differential equations.

	\section{Introduction}\label{Intro}
	Classical time-separable utilities unintentionally impose an artificial relation between an agent's risk aversion (denoted by $\gamma$) and elasticity of intertemporal substitution (EIS, denoted by $\psi$): the latter has to be the reciprocal of the former. Such a relation is widely rejected empirically. Bansal and Yaron \cite{Bansal04}, Bansal \cite{Bansal07}, Bhamra \cite{Bhamra10}, and Benzoni \cite{Benzoni11} all point to the fact $\psi>1$ from empirical data, 
	while estimations in Vissing-J{\o}rgensen and Attanasio \cite{Vissing03}, Bansal and Yaron \cite{Bansal04}, and Hansen et al. \cite{Hansen07} indicate $\gamma>1$. To disentangle EIS from risk aversion, Epstein and Zin \cite{EZ89} specifies a recursive utility in discrete time, whose continuous-time counterpart is formulated in Duffie and Epstein \cite{DuffieEpstein92}. These Epstein-Zin type utilities have proved instrumental in resolving observed market anomalies; see \cite{Bansal04, Bansal07, Bhamra10, Benzoni11}, among others.  
	
Since the seminal works \cite{EZ89, DuffieEpstein92}, the consumption-investment problem under Epstein-Zin preferences has been extensively studied on a fixed time horizon $T>0$, see e.g. \cite{DuffieLions, Schroder96, Kraft13, Seifried16, Kraft17, Xing}, while the infinite horizon case was recently approached in \cite{Melnyk20}. 
In practice, an agent need not have a fixed planning horizon in mind, either finite or infinite, upon entering the market. The time to exit can be random, depending on various factors in the market. 

In this paper, we study optimal consumption and investment under Epstein-Zin preferences on a random horizon. We focus on the empirically relevant case $\gamma, \psi>1$, and consider an incomplete market where an agent observes all the state processes, but cannot trade all of them. The random horizon $\tau$ is taken to be a stopping time adapted to the market filtration, generated by all observable (but not necessarily tradable) state processes. That is, the realization of $\tau$ depends on market conditions, but the involved uncertainty can be hedged against only {\it partially} through trading. 

Prior studies on a random horizon $\tau$, all under time-separable utilities, include \cite{Yaari65, Merton69, GH19} (where $\tau$ is independent of the market), \cite{Karatzas00, Kraft06} (where $\tau$ depends completely on the market), \cite{Blanchet08, Monique15} (where $\tau$ depends on the market and other external factors), among others. In all these works, $\tau$ is required {\it a priori} to be bounded (i.e. $\tau\le T$ a.s. for a known $T>0$). That is, a fixed known horizon $T>0$ is still in place, only less explicitly. Our framework, by contrast, dismisses the presence of any fixed horizon, allowing for truly unbounded random horizons. Moreover, all the above, except \cite{Monique15}, require market completeness, which we relax for practical applications. 

Our major finding is that a random horizon $\tau$ drastically alters the optimal consumption and investment strategies. Compared with the fixed-horizon result in Xing \cite{Xing}, our optimal strategies, in \eqref{optimalstrategies} below, involve an additional process $\hat Z$. While superfluous for the fixed-horizon case, $\hat Z$ serves to capture the effect of the random horizon $\tau$ on the optimal strategies; see Remarks~\ref{rem:hat Z} and \ref{rem:hat Z=0} for details.

Two concrete examples illustrate our finding explicitly. In Section~\ref{Example 1}, the random default time of a firm is shown to reduce the optimal consumption and investment from their classical levels in the fixed-horizon case, more significantly when the firm is closer to bankruptcy. In addition, we find that the optimal investment ratio is sensitive to, and actually increases with, EIS $\psi$---in contrast to prior findings on a fixed horizon; see Figure~\ref{fig:cons.invest.default} and the discussion below it. In Section~\ref{subsec:Heston}, a Heston model of stochastic volatility is used to exemplify the {\it fundamental risk} and the {\it noise trader risk} faced by a rational investor, motivated by the seminal work De Long et al.\ \cite{DeLong90}. 
We show that while the fixed-horizon optimal strategies dictate a constant proportion of wealth in the risky asset, the consideration of a random exit time---in the event that noise traders distort the asset price too significantly---changes the optimal investment ratio into a function of market states; see Figures~\ref{fig:zeroEpsilon}, \ref{fig:positiveEpsilon} and the discussion below them. 


Our analysis is based on techniques of backward stochastic differential equations (BSDEs). On a fixed horizon, BSDEs are fairly versatile for Epstain-Zin utility maximization, as shown in \cite{Xing}. A random horizon, nonetheless, poses a series of challenges.   

The first step of our investigation is to prove the existence of the Epstein-Zin utility process, given a consumption stream $(c_t)_{t\ge 0}$. This translates into solving a random-horizon BSDE with {\it non-uniform} superlinear growth: its generator grows super-linearly in one variable and the growth is not uniform in other variables. 
The literature of BSDEs on a random horizon $\tau$ frequently imposes ``$\tau\le T$ a.s. for a fixed $T>0$'', a condition we aim to relax. Among the few results that tackle unbounded $\tau$ (see e.g. \cite{Darling97, Pardoux99, Kobylanski00, Royer04, Briand00}), none of them allows for non-uniform superlinear growth. 
In response, we introduce a truncated BSDE on the interval $[0,n]$, for all $n\in\N$. With a fixed horizon $n$, the construction in \cite{Xing} can be used to deal with the non-uniform superlinear growth, whence a unique solution to each truncated BSDE exists. Motivated by Pardoux \cite{Pardoux99} and Briand and Carmona \cite{Briand00}, we prove that this sequence of solutions is Cauchy in a complete space of stochastic processes. The limit, as $n\to\infty$, exists and solves the original random-horizon BSDE. See Proposition~\ref{P1} and Theorem~\ref{thm:EZ exists} for details.   

Next, we look for consumption and investment strategies that maximize the Epstein-Zin utility. By dynamic programming, we derive a random-horizon BSDE, i.e. \eqref{EQ11} below, from which the candidate optimal strategies can be derived. This BSDE is non-standard: its generator has {\it quadratic} and {\it exponential} growth in several different variables. 
By a delicate truncation technique, we contain the exponential growth, making the generator grow linearly and remain strictly increasing in one of its variables---whence the results for random-horizon quadratic BSDEs can now be applied. Specifically, a careful use of the existence result in Briand and Confortola \cite[Theorem 3.3]{Confortola08}, followed by a comparison principle in Kobylanski \cite[Theorem 2.3]{Kobylanski00}, yields a solution to our non-standard BSDE; see Proposition~\ref{P5}. Note that the truncation technique we use is different from that in \cite{Xing}, as the latter requires a fixed horizon; see Remark~\ref{rem:new truncation}.

With the candidate optimal strategy $(\pi^*,c^*)$ derived, it remains to show its optimality among 
the set of {\it permissible} strategies, defined in \eqref{P} below. BMO arguments, useful in the fixed-horizon case to establish permissibility (as shown in \cite{Xing}), no longer work in the present setting: the BMO norms can easily blow up on an unbounded random horizon $\tau$. 
In view of this, we directly impose appropriate exponential moment conditions on $\tau$ (i.e. Assumption~\ref{A4} below), from which the permissibility of $(\pi^*,c^*)$ can be established; see Lemma~\ref{C4}. The optimality of $(\pi^*,c^*)$ then follows from standard arguments; see Theorem~\ref{T1}, the main result of this paper. 
Note that the conditions imposed on $\tau$ are less restrictive than what they might seem---they readily cover all prior studies on random-horizon consumption-investment problems (where ``$\tau\le T$ a.s. for a fixed $T>0$'' is imposed) and are commonly-seen in the literature of random-horizon BSDEs; see Remark~\ref{rem:not restrictive}.
 		

It is of interest to consider more general random horizons $\tau$ that are not limited to the market filtration. The companion paper \cite{AH20} pursued this direction: it studies optimal consumption, investment, and healthcare spending under Epstein-Zin preferences over an agent's random lifetime; namely, $\tau$ is the death time of the agent, which need not depend on the financial market. 

The rest of the paper is organized as follows. Section~\ref{EZ} establishes the existence and uniqueness of the Epstein-Zin utility process, for a given consumption stream. Section~\ref{CI} introduces the consumption-investment problem, derives the candidate optimal strategies, and proves that they are indeed optimal. By connecting the BSDE framework to the settings of hitting times, Section~\ref{Examples} illustrates numerically how a random horizon drastically alters the optimal strategies in two concrete financial models. Appendices contain auxiliary results and all the proofs.

	


	\section{Epstein-Zin Preferences on a Random Horizon}\label{EZ}
	Let $(\Omega,\mathcal{F},\mathbb{P})$ be a probability space that supports a $d$-dimensional Brownian motion $(B_{t})_{t\ge 0}$. Let $\mathbb{F}=(\mathcal{F}_{t})_{t\ge 0}$ be the $\P$-augmentation of the natural filtration generated by $B$ and $\T$ be the set of all $\mathbb F$-stopping times that are finite a.s. 
	
	Let us fix a random horizon $\tau\in \T$.  An agent obtains utility from a consumption stream $c=(c_t)_{0\le t\le  \tau}$, a nonnegative progressively measurable process, defined on the random horizon $[0,\tau]$. Here, $c_t$ represents the consumption rate at time $t$ for all $0\le t < \tau$, while $c_\tau$ stands for a lump-sum consumption at time $\tau$. Let $\delta > 0$ be the discount rate, $\gamma > 0 \neq 1$ be the relative risk aversion, and $\psi > 0$ be the elasticity of intertemporal substitution (EIS). Assume that the bequest utility function of the agent is $U(c) = \frac{c^{1-\gamma}}{1-\gamma}$. Then, given a consumption stream $c$, the \textit{Epstein-Zin utility process} on the random horizon $\tau$ is a process $V^c = (V_{t}^{c})_{t\ge 0}$ that satisfies
	\begin{equation}\label{EQ2}
	V_{t}^{c} 
	= \mathbb{E}_{t}\bigg[\int_{t\wedge\tau}^{\tau}f(c_s,V_{s}^{c})ds + \frac{c^{1-\gamma}_{\tau}}{1-\gamma}\bigg],\quad \forall t\ge 0,
	\end{equation} 
	where $\mathbb{E}_{t}[\cdot]$ denotes $\mathbb{E}[\cdot\mid\mathcal{F}_{t}]$ and the function $f(c,v)$, 
	the {\it Epstein-Zin aggregator}, is defined by 
	\begin{equation}\label{f}
	\begin{split}
	f(c,v)&:=\delta\frac{(1-\gamma)v}{1-\frac{1}{\psi}}\left(\bigg(\frac{c}{((1-\gamma)v)^{\frac{1}{1-\gamma}}}\bigg)^{1-\frac{1}{\psi}}-1\right)\\
	&=\delta\frac{c^{1-\frac{1}{\psi}}}{1-\frac{1}{\psi}}\big((1-\gamma)v\big)^{1-\frac{1}{\theta}}-\delta\theta v,\qquad\text{with}\quad \theta := \frac{1-\gamma}{1-\frac{1}{\psi}}.
	\end{split}
	\end{equation}
	In this paper, we focus on the specification $\gamma, \psi>1$, which is the empirically relevant case, as discussed in the introduction. Note that this implies $\theta<0$, which will be used frequently. 
	
	The goal of this section is to establish existence and uniqueness of the Epstein-Zin utility process $V^c$ in \eqref{EQ2}. This has been done for the fixed-horizon case (i.e. $\tau\equiv T$ for a fixed $T>0$) or the $\theta>0$ case; see e.g. \cite{Schroder96, Kraft17, Xing}.   
	We will construct $V^c$ in \eqref{EQ2} via the BSDE
	\begin{equation}\label{EQ3}
	V^c_{t} = \frac{c^{1-\gamma}_{\tau}}{1-\gamma} + \int_{t\wedge\tau}^{\tau}f(c_s,V^c_{s})ds - \int_{t\wedge\tau}^{\tau}Z^c_{s}dB_s ,\quad \forall t\ge 0.
	\end{equation}
	As observed in \cite{Xing}, with $\gamma,\psi>1$, $f(c,v)$ has superlinear growth in $v$ and is thus non-Lipschitz. Following the transformation in \cite[Section 2.1]{Xing}, we consider $(Y_{t},Z_{t}) := e^{-\delta\theta t}(1-\gamma)(V^c_t,Z^c_t)$, with the corresponding BSDE
	\begin{equation}\label{EQ5}
	Y_{t} = e^{-\delta\theta\tau}c^{1-\gamma}_{\tau} + \int_{t\wedge\tau}^{\tau}F(s,c_s,Y_{s})ds - \int_{t\wedge\tau}^{\tau}Z_{s}dB_s , \quad \forall t\ge 0,
	\end{equation}
	where  
	\[
	F(t,c,y) : = \delta\theta e^{-\delta t}c^{1-\frac{1}{\psi}}y^{1-\frac{1}{\theta}}.
	\]
	It is expected that \eqref{EQ5} is more manageable as it satisfies the {\it monotonicity condition}: $F(t,c,y)$ is decreasing in $y$, thanks to $\theta<0$ and $y\ge 0$. The set of admissible consumption streams is taken as   
	\begin{equation}\label{C}
	\mathcal{C}:=\bigg\{c\in\mathcal{R}_{+}:\mathbb{E}\bigg[\int_{0}^{\tau}e^{-2\delta s}c_{s}^{2(1-\frac{1}{\psi})}ds\bigg]<\infty\quad \text{and}\quad \mathbb{E}\bigg[e^{-2\delta\theta(2-\frac{1}{\theta})\tau}c_{\tau}^{2(2-\frac{1}{\theta})(1-\gamma)}\bigg]<\infty \bigg\},
	\end{equation}
	where $\mathcal{R}_{+}$ is the set of all nonnegative 
	progressively measurable processes. 
	
	\begin{remark}\label{rem1}
	Our admissible set $\mathcal{C}$ is larger than the commonly-used one under Epstein-Zin utilities on a fixed horizon (i.e. $\tau\equiv T$ for a fixed $T>0$). For example, \cite{Schroder96} requires 
	\[
	\mathbb{E}\bigg[\int_{0}^{T}c_{t}^{\ell}dt\bigg]<\infty\quad \text{and}\quad \mathbb{E}[c_{T}^{\ell}]<\infty,\quad \hbox{for all}\ \ell\in\mathbb{R}.
	\]
	Recently,  \cite{Xing} proposed a much weaker condition
	\[
	\mathbb{E}\bigg[\int_{0}^{T}e^{-\delta s}c_s^{1-{1}/{\psi}}ds\bigg] < \infty\quad \hbox{and}\quad  \mathbb{E}\big[c_T^{1-\gamma}\big]< \infty. 
	\]
    The integrability imposed in \eqref{C} is stronger than this, yet with a purpose---As will be seen, the additional integrability helps extend results in \cite{Xing} from a fixed horizon to a random one.	\end{remark}
	
	To state the main result of this section, let us introduce the following notation.  
	\begin{itemize}
		\item For any $q>1$, let $\mathcal{S}_{q}$ denote the set of $\R$-valued progressively measurable processes $Y$ such that $\|Y\|_{\mathcal{S}_{q}}^{q}:=\mathbb{E}[\sup_{t\ge 0}|Y_{t\wedge\tau}|^{q}]<\infty$. 
		\item Let $\mathcal S_\infty$ denote the set of $\R$-valued progressively measurable processes $Y$ such that $\|Y\|_\infty := \inf\{C\ge 0 : |Y_t|\le C\ \forall t\ge 0\ \hbox{a.s.}\} <\infty$.
		\item For any $q>1$, let $\mathcal{M}_{q}$ denote the set of $\R^d$-valued progressively measurable processes $Z$ such that $\|Z\|_{\mathcal{M}_{q}}^{q}:=\mathbb{E}[(\int_{0}^{\tau}\|Z_{t}\|^{2}dt)^{\frac{q}{2}}]<\infty$.
		\item For any $q>1$, $\mathcal{B}_{q} := \mathcal{S}_{q}\times\mathcal{M}_{q}$, with the norm $\|(Y,Z)\|_{\mathcal{B}_{q}}^{q} := \|Y\|_{\mathcal{S}_{q}}^{q}+\|Z\|_{\mathcal{M}_{q}}^{q}$. 
	\end{itemize}
	
	\begin{proposition}\label{P1}
		Suppose $\gamma,\psi>1$ and $c\in\mathcal{C}$. Then, \eqref{EQ5} admits a unique solution $(Y,Z)$ in $\mathcal{B}_{2}$ with $Y\ge 0$ a.s.
	\end{proposition}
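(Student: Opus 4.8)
The plan is to follow the truncation-and-limit strategy sketched in the introduction: construct solutions on the deterministic intervals $[0,n]$, show the sequence is Cauchy in $\mathcal{B}_2$, and pass to the limit. First I would set up, for each $n\in\N$, the truncated BSDE on $[0,n]$ with terminal time $\tau\wedge n$, namely
\begin{equation*}
Y^n_t = e^{-\delta\theta(\tau\wedge n)}c_\tau^{1-\gamma}\ind_{\{\tau\le n\}} + \ind_{\{\tau>n\}}\cdot(\text{a suitable }\mathcal{F}_n\text{-measurable terminal value}) + \int_{t\wedge\tau\wedge n}^{\tau\wedge n} F(s,c_s,Y^n_s)\,ds - \int_{t\wedge\tau\wedge n}^{\tau\wedge n} Z^n_s\,dB_s,
\end{equation*}
or more simply a BSDE whose terminal time is exactly $\tau\wedge n$ with terminal condition $e^{-\delta\theta(\tau\wedge n)}(\cdots)$ matching \eqref{EQ5} when $\tau\le n$ and given by the conditional expectation structure of \eqref{EQ2} otherwise. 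On a fixed horizon the generator $F(t,c,y)=\delta\theta e^{-\delta t}c^{1-1/\psi}y|y|^{-1/\theta}$ has the non-uniform superlinear growth in $y$ discussed above, but it is monotone (decreasing in $y$ since $\theta<0$), and the construction of \cite{Xing} applies verbatim on $[0,n]$ to give a unique solution $(Y^n,Z^n)\in\mathcal{B}_2$ with $Y^n\ge 0$ — this is where I invoke the fixed-horizon result rather than reprove it. The integrability built into $\mathcal{C}$ (which is stronger than Xing's, see Remark~\ref{rem1}) is exactly what guarantees the terminal data and the generator sit in the right $L^2$-type spaces uniformly in $n$.

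Next I would derive uniform a priori bounds. Using the monotonicity of $F$ in $y$, Itô's formula applied to $|Y^n_t|^2$ (or to $e^{\beta t}|Y^n_t|^2$ for a suitable $\beta$, as in Pardoux \cite{Pardoux99} and Briand–Carmona \cite{Briand00}), together with the two integrability conditions defining $\mathcal{C}$ and Young/Hölder inequalities to absorb the $c^{1-1/\psi}|y|^{1-1/\theta}$ term, should yield $\sup_n \|(Y^n,Z^n)\|_{\mathcal{B}_2} < \infty$. The key structural input is that the superlinear term comes with a favorable sign after multiplying by $Y^n$ (because $\theta<0$ and $Y^n\ge 0$), so it can be dropped or controlled rather than fought. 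For the Cauchy property, I would fix $m>n$, write the BSDE for $Y^m - Y^n$ on $[0,n]$ (the difference of terminal data at time $n$ being controlled by the a priori bound on $Y^m_{n\wedge\tau}$ together with the decay from $e^{-\delta\theta t}$ and the moment conditions on $\tau$ in $\mathcal{C}$), and again apply Itô to $|Y^m_t - Y^n_t|^2$. The monotonicity condition makes the generator difference $(F(s,c_s,Y^m_s)-F(s,c_s,Y^n_s))(Y^m_s-Y^n_s)\le 0$, so one controls $\|Y^m-Y^n\|_{\mathcal{S}_2}^2 + \|Z^m-Z^n\|_{\mathcal{M}_2}^2$ purely by the terminal difference at time $n$, which tends to $0$ as $n\to\infty$ thanks to the integrability in $\mathcal{C}$ (a tail estimate of the form $\E[\ind_{\{\tau>n\}}(\cdots)]\to 0$). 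Completeness of $\mathcal{B}_2$ then gives a limit $(Y,Z)$, and a standard argument (passing to the limit in the BSDE on each $[0,n]$, using that the stochastic integrals and the generator integrals converge in $L^1$) shows $(Y,Z)$ solves \eqref{EQ5} on the full random horizon, with $Y\ge 0$ a.s.\ inherited from $Y^n\ge 0$. Uniqueness follows from the same Itô/monotonicity estimate applied to the difference of two solutions in $\mathcal{B}_2$, whose terminal values agree.

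The main obstacle I anticipate is the superlinear, non-uniform growth of $F$ in $y$ when establishing the uniform a priori bound and the Cauchy estimate: the term $c^{1-1/\psi}|y|^{1-1/\theta}$ is not Lipschitz and its growth rate in $y$ is modulated by the path-dependent factor $c^{1-1/\psi}$, so the usual linearization and Gronwall arguments do not apply directly. The resolution is to exploit the sign: since $\theta<0$ and $Y^n\ge0$, one has $y\,F(t,c,y)\le 0$ for the relevant sign, so in the Itô expansion of $|Y^n|^2$ the dangerous term is nonpositive and gets discarded, leaving only the terminal data and (for the $Z$ estimate) the martingale part to handle — and those are controlled by the hypotheses in $\mathcal{C}$. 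A secondary technical point is making precise the terminal condition of the truncated BSDE on $\{\tau>n\}$ so that the truncated solutions are genuinely consistent (the restriction of $(Y^m,Z^m)$ to $[0,n]$ should essentially be $(Y^n,Z^n)$ up to the terminal mismatch); the cleanest route is to define the $[0,n]$-problem with terminal time $\tau\wedge n$ and terminal value $Y^m_{\tau\wedge n}$ won't be available, so instead one uses the conditional-expectation form \eqref{EQ2} to prescribe $Y^n_{n} := \E_n[\cdots]$ and verifies this data is in $L^2$ uniformly and converges, which is again a consequence of the moment conditions in $\mathcal{C}$.
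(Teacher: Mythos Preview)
Your overall strategy---truncate to a finite horizon, invoke Xing's fixed-horizon construction, show the sequence is Cauchy in $\mathcal{B}_2$ via It\^o's formula and the monotonicity $(F(s,c_s,y)-F(s,c_s,y'))(y-y')\le 0$, then pass to the limit---is exactly the paper's approach. Two concrete points, however, need repair. First, your Cauchy estimate is carried out only on $[0,n]$ and bounds $\E[\sup_{0\le t\le n}|\Delta Y_{t\wedge\tau}|^2]+\E[\int_0^{n\wedge\tau}\|\Delta Z_s\|^2\,ds]$ by $\E[|\Delta Y_{n\wedge\tau}|^2]$; but the $\mathcal{B}_2$-norm requires control over all $t\ge 0$, so you must also estimate $\sup_{n\le t\le m}|\Delta Y_{t\wedge\tau}|^2$ and $\int_{n\wedge\tau}^{m\wedge\tau}\|\Delta Z_s\|^2\,ds$. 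On that interval the generator term in $\Delta Y$ is not a difference but a single $F(s,c_s,Y^m_s)$, and the sign trick no longer applies; the paper handles this by using the pointwise bound $0\le Y^n_s,Y^m_s\le \E[\xi\mid\F_s]$ (with $\xi:=e^{-\delta\theta\tau}c_\tau^{1-\gamma}$), which comes for free from Xing's construction once the truncated terminal datum is taken to be $\E[\xi\mid\F_n]$, and then controls $\Delta Y_s\,F(s,c_s,Y^m_s)$ by $\delta|\theta|e^{-\delta s}c_s^{1-1/\psi}\E[\xi\mid\F_s]^{2-1/\theta}$, which is integrable precisely by the two conditions defining $\mathcal{C}$.

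Second, your mechanism for $\E[|\Delta Y_{n\wedge\tau}|^2]\to 0$ is not right as written: there is no ``decay from $e^{-\delta\theta t}$'' (recall $\theta<0$, so this factor \emph{grows}), and Section~\ref{EZ} imposes no finiteness on $\tau$, so a naive tail estimate $\E[\ind_{\{\tau>n\}}(\cdots)]\to 0$ is unavailable. The paper's device is again the pointwise bound $|\Delta Y_{n\wedge\tau}|^2\le 2\E[\xi^2\mid\F_{n\wedge\tau}]$, which is uniformly integrable because $\E[\xi^{2(2-1/\theta)}]<\infty$ (the second condition in $\mathcal{C}$); combined with $Y^n_{n\wedge\tau},Y^m_{n\wedge\tau}\to\xi$ a.s.\ (martingale convergence, since both equal $\E[\xi\mid\F_{n\wedge\tau}]$ on $\{\tau>n\}$ and equal $\xi$ on $\{\tau\le n\}$), this gives the limit. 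In short, the missing ingredient throughout is the explicit upper barrier $\E[\xi\mid\F_\cdot]$, which resolves both the terminal-data ambiguity you flagged and the two estimates above; the paper obtains it by truncating the \emph{generator} (multiplying by $\ind_{[0,n]}$) rather than the horizon, with terminal value $\E[\xi\mid\F_n]$ on $[0,n]$ and extension $Y^n_t:=\E[\xi\mid\F_t]$ for $t>n$.
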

	
	The proof of Proposition~\ref{P1} is relegated to Appendix~\ref{subsec:proof of Proposition P1}. 
	
	\begin{remark}
	Proposition~\ref{P1} does not follow from \cite[Theorem 5.2]{Briand03}, contrary to what it may seem. Indeed, our condition $c\in\mathcal{C}$ is weaker than the integrability required in \cite{{Briand03}}. For \cite[Theorem 5.2]{Briand03} to be applicable, we would need a stronger condition: there exists $\eps>0$ such that
    \begin{equation*}
   	\mathbb{E}\bigg[\int_{0}^{\tau}e^{-2(\delta-\eps) s}c_{s}^{2(1-\frac{1}{\psi})}ds\bigg]<\infty\quad\text{and}\quad \mathbb{E}\bigg[e^{-2(\delta+\eps)\theta(2-\frac{1}{\theta})\tau}c_{\tau}^{2(2-\frac{1}{\theta})(1-\gamma)}\bigg]<\infty.
    \end{equation*}
	\end{remark}
	
	\begin{remark}
	To prove Proposition~\ref{P1}, Appendix~\ref{subsec:proof of Proposition P1} devises a sequence of solutions, using the fixed-horizon construction in \cite{Xing}, whose limit ultimately solves \eqref{EQ5}. Alternatively, one could devise a sequence of solutions via bounded consumption streams, in line with \cite[Theorem 3.1]{Royer04}, whose limit would solve \eqref{EQ5} by a monotonicity argument. The proof, however, is no simpler than Appendix~\ref{subsec:proof of Proposition P1}, and the monotonicity argument also requires conditions like \eqref{C}. 
	\end{remark}
	
	The Epstein-Zin utility process can now be constructed.
	
	\begin{theorem}\label{thm:EZ exists}
		Suppose $\gamma,\psi>1$. For any $c\in\cC$, let $(Y,Z)$ be the unique solution of \eqref{EQ5} in $\B_2$. Then, $(V^c_t,Z^c_t) := \frac{e^{\delta \theta t}}{1-\gamma}(Y_t,Z_t)$ is the unique solution to \eqref{EQ3} in $\mathcal{B}_{2}$ that satisfies \eqref{EQ2} a.s.  
	\end{theorem}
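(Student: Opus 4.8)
\textbf{Proof proposal for Theorem~\ref{thm:EZ exists}.}
The plan is to transfer the existence-uniqueness statement from \eqref{EQ5} to \eqref{EQ3} via the (deterministic, strictly positive, bounded-on-compacts) change of variables already introduced in the text, and then to verify that the resulting process actually satisfies the integral identity \eqref{EQ2}, not merely the BSDE \eqref{EQ3}. Concretely, given the unique solution $(Y,Z)\in\B_2$ of \eqref{EQ5} furnished by Proposition~\ref{P1}, I would set $(V^c_t,Z^c_t):=\frac{e^{\delta\theta t}}{1-\gamma}(Y_t,Z_t)$ and simply substitute this back. Since $1-\gamma<0$ and $Y\ge 0$ a.s., we get $V^c\le 0$ a.s., which is exactly the range on which the aggregator $f$ in \eqref{f} is defined; and the algebraic identity $F(t,c,y)=\delta\theta e^{-\delta t}c^{1-1/\psi}y|y|^{-1/\theta}$ was precisely engineered (in \cite[Section 2.1]{Xing}) so that, after multiplying \eqref{EQ5} by $\frac{e^{\delta\theta t}}{1-\gamma}$ and using $d(e^{\delta\theta t}Y_t)$-type bookkeeping, the driver $f(c_s,V^c_s)$ and the terminal term $\frac{c_\tau^{1-\gamma}}{1-\gamma}$ reappear. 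This step is a direct computation: one checks $e^{\delta\theta t}\cdot\frac{\delta\theta e^{-\delta t}}{1-\gamma}c^{1-1/\psi}Y|Y|^{-1/\theta}$ reassembles into $f(c,\frac{e^{\delta\theta t}}{1-\gamma}Y)$ after accounting for the extra $-\delta\theta V^c$ term coming from differentiating the $e^{\delta\theta t}$ factor, matching the second line of \eqref{f}.

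For membership in $\B_2$, note the change of variables multiplies $Y$ and $Z$ by $\frac{e^{\delta\theta t}}{1-\gamma}$; since $\theta<0$, the factor $e^{\delta\theta t}$ is bounded by $1$ on $[0,\infty)$, so $\|V^c\|_{\mathcal S_2}\le\frac{1}{|1-\gamma|}\|Y\|_{\mathcal S_2}$ and likewise $\|Z^c\|_{\mathcal M_2}\le\frac{1}{|1-\gamma|}\|Z\|_{\mathcal M_2}$ (one must be slightly careful with the $\mathcal M_2$ estimate because the weight sits inside the $dt$-integral, but $e^{2\delta\theta t}\le 1$ handles it). For uniqueness in $\B_2$, the map $(V,Z)\mapsto(Y,Z):=e^{-\delta\theta t}(1-\gamma)(V,Z)$ is a bijection between solutions of \eqref{EQ3} in $\B_2$ and solutions of \eqref{EQ5} in $\B_2$ (again because $e^{-\delta\theta t}$, though unbounded, is locally bounded and strictly positive, so it preserves the relevant integrability on the \emph{solution} side up to the constants just discussed — here one should double-check that an a priori $\B_2$ solution of \eqref{EQ3} really does map into $\B_2$ and not merely into some larger space), so uniqueness for \eqref{EQ5} from Proposition~\ref{P1} pushes forward to uniqueness for \eqref{EQ3}.

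It then remains to pass from the BSDE formulation \eqref{EQ3} to the conditional-expectation formulation \eqref{EQ2}. This is the step that needs the integrability in $\cC$ and the $\B_2$ bounds: rewriting \eqref{EQ3} as
\begin{equation*}
V^c_t = \mathbb E_t\Big[\frac{c_\tau^{1-\gamma}}{1-\gamma} + \int_{t\wedge\tau}^\tau f(c_s,V^c_s)\,ds\Big] + \mathbb E_t\Big[\int_{t\wedge\tau}^\tau Z^c_s\,dB_s\Big],
\end{equation*}
one wants the stochastic-integral term to vanish in conditional expectation. Since $\tau$ is a general (possibly unbounded) stopping time, $\int_0^{\cdot\wedge\tau}Z^c_s\,dB_s$ need not be a true martingale a priori; I would argue it is a uniformly integrable martingale by showing $\mathbb E[(\int_0^\tau\|Z^c_s\|^2\,ds)^{1/2}]<\infty$ — which follows from $Z^c\in\mathcal M_2$ via the Burkholder–Davis–Gundy inequality — and likewise check $\mathbb E[\int_0^\tau|f(c_s,V^c_s)|\,ds]<\infty$ using $|f(c,v)|\lesssim c^{1-1/\psi}|v|^{1-1/\theta}+|v|$, Hölder's inequality, the $\cC$-integrability of $c$ and the $\mathcal S_2$-bound on $V^c$ (here one benefits from the fact that $\cC$ in \eqref{C} contains exactly the second moments needed — this is the point of Remark~\ref{rem1}). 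With both integrability facts in hand, optional stopping / dominated convergence gives $\mathbb E_t[\int_0^{\cdot\wedge\tau}Z^c\,dB]=0$ and hence \eqref{EQ2}.

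The main obstacle I anticipate is not the algebra of the change of variables but the integrability bookkeeping on an unbounded horizon: verifying that the stochastic integral is genuinely a (UI) martingale and that the $f$-driver is integrable over $[0,\tau]$, uniformly enough to justify taking conditional expectations and to conclude the $\B_2$ bijection is honest. In a bounded-horizon setting these are automatic; here they are exactly where the strengthened admissible set $\cC$ and the $\B_2$ a priori estimates from Proposition~\ref{P1} have to be invoked, and care is needed because the back-transformation weight $e^{-\delta\theta t}$ grows without bound.
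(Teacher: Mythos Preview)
Your change-of-variables argument and the $\B_2$ membership check are fine and match what the paper does (it calls this a ``direct calculation''). The honest worry you raise about the uniqueness bijection---that the back-transformation weight $e^{-\delta\theta t}$ is unbounded---is real; the paper is equally terse here, and in both cases uniqueness is most cleanly handled by running the monotonicity argument (Step~4 of the proof of Proposition~\ref{P1}) directly on \eqref{EQ3}, rather than via the bijection.

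The substantive difference from the paper, and the place where your sketch has a gap, is the passage from \eqref{EQ3} to \eqref{EQ2}. You propose to show $\E\big[\int_0^\tau |f(c_s,V^c_s)|\,ds\big]<\infty$ directly, using $|f(c,v)|\lesssim c^{1-1/\psi}|v|^{1-1/\theta}+|v|$, H\"older, the $\cC$-integrability of $c$, and the $\mathcal S_2$-bound on $V^c$. The $\mathcal S_2$-bound is not enough: the exponent $1-1/\theta$ exceeds $2$ whenever $\theta\in(-1,0)$ (e.g.\ $\gamma=1.5$, $\psi=3$ gives $\theta=-3/4$ and $1-1/\theta=7/3$), so $\E[\sup|V^c|^2]<\infty$ does not control the term $|V^c|^{1-1/\theta}$ after you pair $e^{-\delta s}c^{1-1/\psi}$ with its $L^2$-dual. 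What actually works is the \emph{pointwise} bound $0\le Y_s\le \E[\xi\mid\F_s]$ established inside the proof of Proposition~\ref{P1} (Step~3), combined with the terminal moment $\E[\xi^{2(2-1/\theta)}]<\infty$ that $\cC$ provides; this is precisely the estimate labeled \eqref{abudu} in the paper.

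The paper avoids absolute integrability of $f$ altogether by a different tactic: it first shows $t\mapsto V^c_t+\int_0^{t\wedge\tau}f(c_s,V^c_s)\,ds$ is a genuine martingale (using $Z^c\in\mathcal M_2$, as you also do), freezes a finite level $m\ge t$, and then lets $m\to\infty$ after splitting $f(c,v)=\delta\frac{c^{1-1/\psi}}{1-1/\psi}((1-\gamma)v)^{1-1/\theta}-\delta\theta v$ into a nonnegative and a nonpositive piece, applying monotone convergence to each and using that $V^c\le 0$ is of class~D for the $\E_t[V^c_m]$ term. The only residual finiteness to check is then $\E_t\big[\int_{t\wedge\tau}^\tau V^c_s\,ds\big]$, which is \emph{linear} in $V^c$ and therefore needs only the pointwise bound above, not any higher moment. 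Either your route (once you substitute the pointwise bound for the $\mathcal S_2$ bound) or the paper's monotone-convergence route gives \eqref{EQ2}; the latter is tidier because it never confronts $|V^c|^{1-1/\theta}$ head-on.
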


The proof of Theorem~\ref{thm:EZ exists} is relegated to Appendix~\ref{subsec:proof of Theorem EZ exists}. 

\begin{remark}
It is worth noting that $Y, V^c\in \mathcal{S}_{2}$ particularly implies that they are of class D.
\end{remark}


\section{The Consumption-Investment Problem}\label{CI}
In this section, we introduce the consumption-investment problem in an incomplete market, under the framework of Section~\ref{EZ}. By dynamic programming, we derive a BSDE on a random horizon $\tau$, 
from which the candidate optimal strategies can be deduced. Under appropriate conditions on market coefficients (Assumption~\ref{A1}), a solution to the BSDE exists; see Proposition~\ref{P5}. On strength of exponential moment conditions on $\tau$ (Assumption~\ref{A4}), the candidate strategies, given in \eqref{optimalstrategies} below, are indeed optimal among an appropriate class of strategies; see Theorem~\ref{T1}.  


\subsection{The Markovian Setup}\label{subsec:setup}
We take up the framework in Section~\ref{EZ}, with $B=(W,\hat W)$ a two-dimensional Brownian motion (i.e. $d=2$). Let $E$ be an open domain in $\R$, and consider an $E$-valued state process 
\begin{equation}\label{EQ8}
dY_{t} = a(t,Y_{t})dt + b(t,Y_{t})dW_{t},\quad Y_{0} = y\in E,
\end{equation}
where $a,b:\mathbb{R}_{+}\times E\rightarrow\mathbb{R}$ are given Borel measurable functions. 
The market consists of a riskfree asset $S^{0}$ and a risky asset $S_{t}$, satisfying the dynamics
\begin{equation}\label{B&S}
\begin{split}
dS_{t}^{0}&=r(t,Y_{t})S_{t}^{0}dt,\\
dS_{t}&=S_{t}\left((r(t,Y_{t}) +\lambda(t,Y_{t}))dt+\sigma(t,Y_{t})\left(\rho(t,Y_{t}) dW_{t} + \hat{\rho}(t,Y_{t})d\hat{W}_{t}\right)\right),
\end{split}
\end{equation}
where $r, \lambda, \sigma,\rho,\hat{\rho}:\mathbb{R}_{+}\times E\rightarrow\mathbb{R}$ are given Borel measurable functions. In particular, $\rho$ and $\hat{\rho}$, called the correlation functions, satisfy $\rho^{2}(t,y)+\hat{\rho}^2(t,y) = 1$ for all $(t,y)\in \mathbb{R}_{+}\times E$.  

An agent, with initial wealth $x>0$, must decide a proportion $\pi_{t}\in \R$ of wealth to invest in the risky asset and a consumption rate ${c}_{t}\ge 0$ at every moment $t\ge 0$ before the random terminal time $\tau$. The corresponding wealth process $X^{\pi,c}$ is given by 
\begin{align}\label{EQ13}
dX_{t}&= X_{t}\left[(r_{t}+\pi_{t}\lambda_{t}))dt + \pi_{t}\sigma_{t}\big(\rho_{t} dW_{t} + \hat{\rho}_{t}d\hat{W}_{t}\big)\right]-{c}_{t} dt, \notag\\
&= X_{t}\left[(r_{t}+\pi_{t}\lambda_{t}))dt + \pi_{t}\sigma_{t} dW^\rho_{t}\right]-{c}_{t} dt,\quad X_{0}= x,
\end{align}
where $r_t$, $\lambda_t$, $\sigma_t$, $\rho_{t}$, $\hat{\rho}_{t}$ represent $r(t,Y_t)$, $\lambda(t,Y_{t})$, $\sigma(t,Y_{t})$, $\rho(t,Y_{t})$, $\hat{\rho}(t,Y_{t})$, respectively, and 
\[
	W_{t}^{\rho} := \int_{0}^{t}\rho_{s} dW_{s} + \int_{0}^{t}\hat{\rho}_{s}d\hat{W}_{s},\quad t\ge 0,
\]
is again a Brownian motion. 
We enforce the following conditions on the market coefficients.

\begin{assump}\label{A1}
The coefficients $\sigma, r, \lambda, \rho, \hat{\rho}, a,$ and $b$ are locally Lipschitz in $E$; the process $Y$ does not reach the boundary of $E$ in finite time a.s.; $\{r_{t\wedge \tau}\}_{t\ge 0}$ and $\{\frac{\lambda_{t\wedge \tau}}{\sigma_{t\wedge\tau}}\}_{t\ge 0}$ are bounded processes (i.e. belong to $\cS_\infty$); $\inf_K \sigma(t,y)>0$ and $\inf_K b(t,y)>0$ for any compact subset $K$ of $\R_+\times E$. 
\end{assump}

A strategy $(\pi,c)$ is called \textit{admissible} if it belongs to 
\[
\mathcal{A} : = \{(\pi,c) : c\in\mathcal{C},\ c_\tau = X^{\pi,c}_\tau,\ X_{t}^{\pi,c}> 0\ \hbox{for all $0\le t\le \tau$ a.s.}\}.
\] 
The agent intends to maximize her Epstein-Zin utility $V^c_0$ by choosing a pair $(\pi^*,c^*)$ from some appropriate collection $\mathcal P\subseteq\A$. That is, the goal is to attain the optimal value 
\begin{equation}\label{the problem}
V^*_0 := \sup_{(\pi,c)\in\mathcal{P}} V_{0}^{c},
\end{equation}
where $V^c$ is the solution to \eqref{EQ2} with $c_\tau = X^{\pi,c}_\tau$, by some strategy $(\pi^*,c^*)\in \mathcal P$. The collection $\mathcal P\subseteq \A$ is up to the agent's choice. In this paper, we will take $\mathcal{P}$ to be the set of {\it permissible} strategies, defined precisely in \eqref{P} below.



\subsection{The Ansatz}\label{subsec:ansatz}
Motivated by the classical decomposition of time-separable power utilities (see e.g. \cite[Section 3]{Pham02}) and the decomposition of the Epstein-Zin utility in \cite[(2.9)]{Xing} on a fixed horizon, 
we suspect that the optimal utility process $V^*$ 
can be decomposed into
\begin{equation}\label{EQ9}
V^*_{t} 
= \frac{X_{t\wedge\tau}^{1-\gamma}}{1-\gamma}e^{D_{t\wedge\tau}}\quad t\ge 0,
\end{equation}
where $D$ is a process satisfying the BSDE
\begin{equation}\label{EQ11}
D_{t} = \int_{t\wedge\tau}^{\tau}H(s,D_{s},Z_{s},\hat{Z}_{s})ds -\int_{t\wedge\tau}^{\tau}Z_{s}dW_{s}-\int_{t\wedge\tau}^{\tau}\hat{Z}_{s}d\hat{W}_{s},\quad t\ge 0,
\end{equation}
for some generator $H$ to be determined. Note that \eqref{EQ9} and \eqref{EQ2} 
suggest that the process
\begin{equation}\label{R}
t\mapsto \frac{X_{t\wedge\tau}^{1-\gamma}}{1-\gamma}e^{D_{t\wedge\tau}} + \int_{0}^{t\wedge\tau}f\left(c_{s},\frac{X_{s}^{1-\gamma}}{1-\gamma}e^{D_{s}}\right)ds
\end{equation}
should be a supermartingale for any $(\pi,c)\in\mathcal{P}$ and a martingale for an optimal strategy $(\pi^*,c^*)$. Detailed calculations, similar to those in \cite[p. 234]{Xing}, yield the drift term of the above process:
\begin{align*}
\frac{X_{t}^{1-\gamma}}{1-\gamma}e^{D_{t}}\bigg(&-H(t,D_{t},Z_{t},\hat{Z}_{t}) + \frac{Z_{t}^{2}+\hat{Z}_{t}^{2}}{2}+(1-\gamma)(r_{t}-\tilde{c}_{t}+\pi_{t}(\lambda_{t}+\sigma_{t}\rho_{t} Z_{t}+\sigma_{t}\hat{\rho}_{t} \hat{Z}_{t}))\\&\quad -\frac{\gamma(1-\gamma)}{2}(\pi_{t}\sigma_{t})^{2} +\delta\theta\tilde{c}_{t}^{1-\frac{1}{\psi}}e^{-\frac{D_{t}}{\theta}} - \delta\theta\bigg),
\end{align*}
where $\tilde{c}_t := c_t/X_t$ is the proportion of wealth consumed per unit of time. This indicates that
\begin{align}\label{EQ10}
H(t,D_{t},Z_{t},\hat{Z}_{t}) &= (1-\gamma)r_{t}+\frac{Z_{t}^{2}+\hat{Z}_{t}^{2}}{2}- \delta\theta + \inf_{\tilde c\ge 0} \left(-(1-\gamma)\tilde{c}+\delta\theta\tilde{c}^{1-\frac{1}{\psi}}e^{-\frac{D_{t}}{\theta}}  \right)\nonumber\\ 
&\hspace{0.2in}+\inf_{\pi\in \R} \left((1-\gamma)\pi\left(\lambda_{t}+\sigma_{t}\rho_{t} Z_{t}+\sigma_{t}\hat{\rho}_{t} \hat{Z}_{t}\right)-\frac{\gamma(1-\gamma)}{2} \pi^2\sigma^2_{t}\right).
\end{align}
Solving the involved minimization problems yields the candidate optimal strategies $(\pi^{*},\tilde{c}^{*})$:
\begin{equation}\label{optimalstrategies}
\pi^{*}_t  = \frac{\lambda_{t}+\sigma_{t}(\rho_{t} Z_{t}+\hat{\rho}_{t} \hat{Z}_{t})}{\gamma\sigma_{t}^{2}}\quad\hbox{and}\quad \frac{c^*_t}{X^*_t}=\tilde{c}^{*}_t = \delta^{\psi}e^{-\frac{\psi}{\theta}D_{t}}\qquad \forall t\in[0,\tau),
\end{equation}
where $X^* := X^{\pi^*,c^*}$ is the candidate optimal wealth process. Plugging these into \eqref{EQ10}, we have
\begin{equation*}
H(t,D_{t},Z_{t},\hat{Z}_{t}) = \frac{Z_{t}^{2}+\hat{Z}_{t}^{2}}{2} + (1-\gamma)r_{t} - \delta\theta + \frac{\delta^{\psi}\theta}{\psi}e^{-\frac{\psi}{\theta}D_{t}} + \frac{(1-\gamma)(\lambda_{t}+\sigma_{t}\rho_{t} Z_{t}+\sigma_{t}\hat{\rho}_{t} \hat{Z}_{t})^{2}}{2\gamma\sigma_{t}^{2}}.
\end{equation*}
Rearranging and simplifying terms  gives
\begin{align}\label{H}
H(t,D_{t},&\ Z_{t},\hat{Z}_{t})=\nonumber\\
&\frac{Z_{t}^{2}}{2}\left(1 + \frac{(1-\gamma)}{\gamma}\rho_{t}^{2}\right)+\frac{\hat{Z}_{t}^{2}}{2}\left(1 + \frac{(1-\gamma)}{\gamma}\hat{\rho}_{t}^{2}\right)+\frac{(1-\gamma)\lambda_{t}}{\gamma\sigma_{t}}\rho_{t}Z_{t}+\frac{(1-\gamma)\lambda_{t}}{\gamma\sigma_{t}}\hat{\rho}_{t}\hat{Z}_{t}\nonumber\\
&+ \frac{(1-\gamma)}{\gamma}\rho_{t}\hat{\rho}_{t}Z_{t}\hat{Z}_{t}+\frac{\delta^{\psi}\theta}{\psi}e^{-\frac{\psi}{\theta}D_{t}} + (1-\gamma)\bigg(r_{t} + \frac{\lambda_{t}^{2}}{2\gamma\sigma_{t}^{2}}\bigg)-\delta\theta.
\end{align}

\begin{remark}\label{rem:hat Z}
A significant departure from the fixed-horizon case is the involvement of $\hat Z$ in \eqref{optimalstrategies}-\eqref{H}. Indeed, the formulas in \cite[p.235]{Xing} can be obtained by taking $\hat Z\equiv 0$ in \eqref{optimalstrategies}-\eqref{H}, leading to a simpler generator and a more straightforward investment strategy. This simplification does not hold in our case: the randomness of $\tau$ can be fully captured only with the additional process $\hat Z$, as explained in detail in Appendix~\ref{sec:Markov}; see Remark~\ref{rem:hat Z=0} particularly. 
\end{remark}

To make sense of the heuristic derivations above, the first task is to show the existence of a solution to the BSDE \eqref{EQ11}, with the generator $H$ given in \eqref{H}. This can be tricky: $H$ has quadratic growth in both $Z$ and $\hat Z$, along with exponential growth in $D$ (as $\theta<0$). 

BSDEs with quadratic growth were investigated in \cite{Kobylanski00}, which had been expanded upon by \cite{Briand06, Briand07, Confortola08}. The results in \cite{Briand06, Briand07} seem promising to our case, yet the exponential growth in $D$ prohibits us from using them. While there is a truncation technique in \cite{Xing} to tame the exponential growth in $D$, it requires a bounded time horizon. Ultimately, we will devise a new truncation technique to curb the exponential growth in $D$ on a possibly unbounded random horizon; see Remark~\ref{rem:new truncation} for details. With exponential growth contained, a delicate use of \cite{Confortola08} and \cite{Kobylanski00} in sequence (see Remark \ref{rem:in order} for details) yields the following.
 
\begin{proposition}\label{P5}
	Suppose $\gamma,\psi>1$ and Assumption \ref{A1} holds. Then, 
	there exists a solution $(D,Z,\hat Z)\in\cS_{\infty}\times\cM_{2}$ to \eqref{EQ11}, with $H$ given in \eqref{H}.
\end{proposition}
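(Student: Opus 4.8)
The plan is to solve \eqref{EQ11} by a \emph{truncate--solve--verify} scheme. The only obstruction to applying existing random-horizon quadratic BSDE theory to \eqref{EQ11}--\eqref{H} is the term $\frac{\delta^{\psi}\theta}{\psi}e^{-\frac{\psi}{\theta}D}$, which grows exponentially as $D\to+\infty$ because $\theta<0$; by Assumption~\ref{A1} the coefficients multiplying $Z^{2},\hat Z^{2},Z\hat Z,Z,\hat Z$ are bounded, so $H$ otherwise has the standard quadratic-in-$(Z,\hat Z)$ structure. First I would fix a constant $\overline d\ge 0$ large enough that $H(t,\overline d,0,0)\le 0$ for all $t$; this is possible since $(1-\gamma)\big(r_t+\frac{\lambda_t^{2}}{2\gamma\sigma_t^{2}}\big)-\delta\theta$ is bounded (Assumption~\ref{A1}) while $\frac{\delta^{\psi}\theta}{\psi}e^{-\frac{\psi}{\theta}\overline d}\to-\infty$. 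Then I would replace $e^{-\frac{\psi}{\theta}D}$ in \eqref{H} by a globally Lipschitz, strictly increasing function $g$ that coincides with $e^{-\frac{\psi}{\theta}D}$ on $(-\infty,\overline d]$ and equals its tangent line at $\overline d$ on $(\overline d,\infty)$, producing a modified generator $\tilde H$. By construction $\tilde H$ has \emph{linear} growth in $D$, is strictly decreasing in $D$ (inheriting the sign $\frac{\delta^{\psi}\theta}{\psi}<0$), still has quadratic growth in $(Z,\hat Z)$, and agrees with $H$ on $\{D\le\overline d\}$.

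Second, I would apply the existence theorem for quadratic BSDEs with random terminal time, \cite[Theorem 3.3]{Confortola08}, to \eqref{EQ11} with $H$ replaced by $\tilde H$ and terminal value $0$. The required conditions hold: the terminal condition is bounded; $\tilde H(\cdot,0,0,0)$ is bounded; $\tilde H$ is monotone and Lipschitz in $D$ and of quadratic growth in $(Z,\hat Z)$, with coefficient processes that are well behaved thanks to Assumption~\ref{A1} (local Lipschitzness on $E$ and non-attainment of $\partial E$ before $\tau$); and $\tau<\infty$ a.s. It is exactly the dissipative, decreasing-in-$D$ structure produced by the truncation that makes no exponential moment of $\tau$ necessary at this stage, in contrast to the bounded-horizon truncation of \cite{Xing}; see Remark~\ref{rem:new truncation}. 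This yields a solution $(D,Z,\hat Z)$ with $D\in\cS_{\infty}$ and $(Z,\hat Z)\in\cM_{2}$.

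Third, I would check that this solution stays in $\{D\le\overline d\}$, so that $\tilde H(t,D_t,Z_t,\hat Z_t)=H(t,D_t,Z_t,\hat Z_t)$ and $(D,Z,\hat Z)$ solves \eqref{EQ11}. The constant $\overline d$ obeys $\overline d\ge 0=D_\tau$ and $\tilde H(t,\overline d,0,0)=H(t,\overline d,0,0)\le 0$, hence is a supersolution of the truncated equation, and the comparison principle forces $D_t\le\overline d$ for all $t$. Since the comparison result I would invoke, \cite[Theorem 2.3]{Kobylanski00}, is stated on a fixed horizon, I would localize: on each $[0,n]$ the triple $(D,Z,\hat Z)$ solves a fixed-horizon quadratic BSDE with terminal datum $D_{\tau\wedge n}$, to which Kobylanski's comparison applies against $\overline d$; letting $n\to\infty$ and using $\tau<\infty$ a.s. gives $D\le\overline d$. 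No lower bound on $D$ is needed, since the truncation altered $H$ only above $\overline d$ and $D\in\cS_\infty$ is already bounded from below. Putting the three steps together proves Proposition~\ref{P5}.

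The hard part will be the first step: the truncation must be designed so that $\tilde H$ is simultaneously of linear growth in $D$ (so that \cite{Confortola08} applies on the possibly unbounded $[0,\tau]$), strictly monotone in $D$ (so that no moment condition on $\tau$ is needed for existence), and equal to $H$ on a set the comparison step guarantees the solution cannot exit. Carrying the fixed-horizon comparison of \cite{Kobylanski00} over to the random horizon by localization is a secondary technical matter.
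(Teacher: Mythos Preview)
Your overall strategy---truncate the exponential term, apply \cite[Theorem 3.3]{Confortola08}, then compare to show the solution never leaves the untruncated region---is exactly the paper's strategy. But your one-sided truncation has a gap. You leave $e^{-\frac{\psi}{\theta}d}$ untouched on $(-\infty,\overline d]$; its derivative $-\tfrac{\psi}{\theta}e^{-\frac{\psi}{\theta}d}$ is positive but tends to $0$ as $d\to-\infty$, so $\tilde H$ is strictly decreasing in $d$ but \emph{not uniformly} so. The strict-monotonicity hypothesis in \cite[Assumption A.1(ii)]{Confortola08} is the dissipativity condition $(y-y')\big(f(\cdot,y,z)-f(\cdot,y',z)\big)\le -\mu|y-y'|^{2}$ with a fixed $\mu>0$, and this positive $\mu$ is precisely what allows existence on an unbounded random horizon without any moment condition on $\tau$. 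Your $\tilde H$ does not satisfy it. The paper's truncation $J(d)$ in \eqref{J} is two-sided for exactly this reason: it extends linearly with slope $-\psi/\theta>0$ on \emph{both} tails $|d|>n$, so the derivative of $J$ is bounded below by $-\tfrac{\psi}{\theta}e^{\frac{\psi}{\theta}n}>0$ globally, and uniform monotonicity holds. A one-line fix---also replacing $g$ by its tangent line below some threshold---would repair your argument.

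Two smaller points on your comparison step. First, \cite[Theorem 2.3]{Kobylanski00} does allow random (even infinite) horizons; it is Theorem 2.6 there that is fixed-horizon. Your localization is therefore unnecessary, though not incorrect. Second, that theorem gives comparison for the \emph{maximal} solution of a BSDE, not a general sub/supersolution comparison; to use it against the constant $\overline d$ you would need to exhibit $\overline d$ as the maximal solution of some dominating BSDE. The paper handles this by building an auxiliary generator $\overline H(d,z,\hat z)=\tfrac{3}{2}(z^{2}+\hat z^{2})-\delta^{\psi}d+C_{1}$, linear in $d$, for which \cite{Confortola08} gives a \emph{unique} (hence maximal) bounded solution $\overline D$; then \cite[Theorem 2.3]{Kobylanski00} yields $D^{n}\le\overline D\le\widetilde C$ uniformly in $n$, and for $n>\widetilde C$ the truncation is inactive. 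Your shortcut of comparing directly against the constant $\overline d$ is the right intuition but would need a different comparison principle (one for sub/supersolutions of quadratic BSDEs) to be rigorous as stated.
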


The proof of Proposition~\ref{P5} is relegated to Appendix~\ref{subsec:proof of Proposition P5}. 

\begin{remark}\label{rem:another ansatz}
Besides \eqref{EQ9}, another useful decomposition is $V^*_{t} = \frac{X_{t}^{1-\gamma}}{1-\gamma}P_{t}^{k}$, for some process $P$ and $k\in\R$; see \cite{Zariphopoulou01, Pham02}. This ansatz can potentially generate a simpler BSDE, without quadratic or exponential growth as seen in \eqref{H}; see Section~\ref{Example 1} below. This simplification, however, only works when $\hat Z\equiv 0$ and the correlation function is constant, i.e. $\rho(t,y)\equiv \rho$. 
\end{remark}

\begin{remark}
On a fixed horizon, \cite[Proposition 2.9]{Xing}, analogous to Proposition~\ref{P5}, is established without boundedness of market price of risk $\lambda/\sigma$. Specifically, by a change of measure using Girsanov's theorem, the generator in \cite[(2.13)]{Xing}, analogous to $H$ in \eqref{H}, is simplified---so as to mitigate the effect of $\lambda/\sigma$. Girsanov's theorem however requires a fixed horizon $T>0$. As our random horizon $\tau$ can be unbounded (i.e. $\P(\tau> T)>0$ for all $T>0$), it is unclear how the same technique in \cite{Xing} can be applied here. 
Hence, we still impose boundedness of $\lambda/\sigma$ in Assumption \ref{A1}. Note that this is not an uncommon assumption even for the fixed-horizon case; see \cite{Kraft17, Schroder96}.
\end{remark}


\subsection{Verification}\label{subsec:verification}
With $(\pi^*,c^*)$ in \eqref{optimalstrategies} well-defined, thanks to Proposition~\ref{P5}, it remains to show its optimality among an appropriate set of strategies. A strategy $(\pi,c)$ is called {\it permissible} if it belongs to 
\begin{equation}\label{P}
\mathcal{P} := \{(\pi,c)\in \A: (X^{\pi,c}_\cdot)^{1-\gamma}\ \text{is of class $D$}\}. 
\end{equation}
Note that this is in line with the set of permissible strategies in \cite{Xing}. In the paragraph under \cite[Proposition 2.9]{Xing}, it is required that $(X^{\pi,c}_\cdot)^{1-\gamma}e^{D_\cdot}$ is of class $D$. This is equivalent to \eqref{P} as the process $D_\cdot$ is bounded in our setting (by Proposition~\ref{P5}). The aim of this subsection is to establish the optimality within $\mathcal{P}$ of the candidate $(\pi^{*},c^{*})$, using verification arguments.
 
Before we start, let us first motivate our definition of $\cP$ in \eqref{P}. 
 
\begin{remark}\label{rem:why class D}
A crucial step in verification is to show $w\ge V^*_0$, for some candidate optimal value function $w$. This is a known challenge when $\gamma>1$, even for time-separable utilities. By the standard verification arguments (see e.g., the proof of \cite[Theorem 3.5.2]{Pham-book-09}), proving $w\ge V^*_0$ requires 
\begin{equation}\label{Fatou}
\lim_{t\to\infty} \E[w(t\wedge\tau, X^{\pi,c}_{t\wedge \tau})]\ge \E[w(\tau, X^{\pi,c}_{\tau})]. 
\end{equation}
Under the normal ansatz $w(t,x) = \phi(t) \frac{x^{1-\gamma}}{1-\gamma}$ for a bounded $\phi:\R_+\to\R_+$, when $0<\gamma<1$, we have $w(\cdot\wedge\tau,X^{\pi,c}_{\cdot\wedge\tau})\ge 0$ and thus \eqref{Fatou} holds by Fatou's lemma. However, for $\gamma>1$, 
$w(t\wedge\tau,X^{\pi,c}_{t\wedge\tau})$ may approach $-\infty$ when $X^{\pi,c}_{t\wedge\tau}$ gets near $0$, leaving \eqref{Fatou} in question. 

Hence, for $\gamma>1$, we in general need ``$w(\cdot\wedge\tau,X^{\pi,c}_{\cdot\wedge\tau})$ is bounded from below by a uniformly integrable process'', so that Fatou's lemma can be applied to conclude \eqref{Fatou}. As $w(\cdot\wedge\tau,X^{\pi,c}_{\cdot\wedge\tau})\le 0$ by definition, the above condition amounts to ``$w(\cdot\wedge\tau,X^{\pi,c}_{\cdot \wedge\tau})$ is uniformly integrable''. This corresponds precisely to ``$w(\cdot\wedge\tau, X^{\pi,c}_{\cdot\wedge\tau})$ is of class $D$'' in \cite{Cheridito11}. Under Epstein-Zin preferences, this condition is followed by \cite{Xing} and \eqref{P} above.
\end{remark}

\begin{remark}
Simple examples of permissible strategies include $\eps$-modifications considered in \cite{Janecek12} and \cite{GH19}. If an agent is given additional wealth $\eps>0$ but leaves it untouched, any original strategy $(\pi,c)$ turns into a new one $(\pi^\eps,c^\eps)$ such that $X^{\pi^\eps,c^\eps}_{\cdot\wedge\tau} \ge \eps$; see the explicit construction in \cite[p. 355]{GH19}. As $ 0 < (X^{\pi^\eps,c^\eps}_{\cdot\wedge\tau})^{1-\gamma} \le \eps^{1-\gamma}$, we have $(\pi^\eps,c^\eps)\in\cP$.
\end{remark}

In showing that $(\pi^{*},c^{*})$ is permissible, the random horizon $\tau$ poses nontrivial challenges. As opposed to the fixed-horizon case, the boundedness of $D$ (by Proposition~\ref{P5}) does not imply that $\int_0^\cdot Z_t dW_t$ and $\int_0^\cdot \hat Z_t d\hat W_t$ are BMO-martingales. Indeed, in estimating the BMO norms, some constants involved are horizon-dependent---they can easily blow up on a potentially unbounded horizon $\tau$; see e.g. \cite[Lemma 3.1]{Morlais09} and the comment below \cite[(2)]{Hu05}. In other words, the BMO arguments, useful in establishing permissibility of strategies on a fixed horizon (see e.g. \cite[Lemma B.2]{Xing} and \cite[Lemma 3.1]{Morlais09}), 
do not apply in our random-horizon case. 

To proceed, we directly impose appropriate integrability conditions on the random horizon $\tau$, from which the permissibility of $(\pi^*,c^*)$ can be derived. To this end, hereon we set 
\begin{align}\label{Market-Constants}
C_{{\lambda}/{\sigma}} := \left\|\frac{\lambda_{t\wedge\tau}}{\sigma_{t\wedge\tau}}\right\|^2_\infty,\quad \overline{r} := \esssup\left(\sup_{t\ge 0}r_{t\wedge\tau}\right),\quad \underline{r} := \essinf\left(\inf_{t\ge 0} r_{t\wedge\tau}\right).
\end{align}
Note that these constants are finite because of Assumption~\ref{A1}. Also, we consider 
\begin{equation}\label{p's}
p_{+} := 2\left(1-\frac{1}{\psi}\right)>0\quad \hbox{and}\quad p_{-}:= 2\left(2-\frac{1}{\theta}\right)(1-\gamma)<0. 
\end{equation}

\begin{assump}\label{A4}
Let $(D,Z,\hat{Z})\in \cS_\infty\times \cM_2$ be the solution to \eqref{EQ11} in Proposition~\ref{P5}, and set $\widetilde Z_{t} := \rho_{t}Z_{t}+\hat{\rho}_{t}\hat{Z}_{t}$. We assume that there exists $q>1$ such that 
	\begin{equation*}
	\E\left[\exp\left(q\left(2\overline{r}p_{+}+\left(p_{+}+\frac{4p_{+}^2}{\gamma^2}\right)C_{{\lambda}/{\sigma}}\right)\tau+\frac{4qp_{+}^2}{\gamma^2}\int_{0}^{\tau}\widetilde Z_{s}^2 ds\right)\right]<\infty
	\end{equation*}
	and 
	\begin{equation*}
	\E\left[\exp\left(\left(\frac{-p_{-}\delta}{1-1/\psi}+2\underline{r}p_{-}-2p_{-}\delta^{\psi}e^{-\frac{\psi}{\theta}\widetilde{C}}+\left(|p_{-}|+\frac{4p_{-}^2}{\gamma^2}\right)C_{{\lambda}/{\sigma}}\right)\tau+\frac{4 p_{-}^2-2p_-}{\gamma^2}\int_{0}^{\tau}\widetilde Z_{s}^2 ds\right)\right]<\infty
	\end{equation*}
	where $\widetilde{C}:= \esssup\big(\sup_{t\ge 0} D_t\big) <\infty$. 
\end{assump}

\begin{remark}\label{rem:not restrictive}
Assumption~\ref{A4} is not restrictive in view of the literature. Prior studies of the consumption-investment problem on a random horizon $\tau$ (with time-separable utilities) all require $\tau\le T$ a.s. for a fixed $T>0$; see e.g. \cite{Blanchet08, Bouchard04, ElKaroui10, Kharroubi13, Monique15}. Assumption~\ref{A4} covers this case trivially, while allowing for suitable unbounded $\tau$. Moreover, this type of exponential moment conditions is common for random-horizon BSDEs, such as 
\cite[(A4)]{Briand00}, \cite[(c), Section 4]{Pardoux99}, and \cite[(25)]{Darling97}. 
\end{remark}

\begin{remark}
Assumption \ref{A4} can be relaxed to some extent, depending on the specific market model employed. For instance, in the Heston model considered in Section \ref{subsec:Heston} below, $\widetilde Z$ is a bounded process, which largely simplifies the exponential moment conditions. 
\end{remark}

With the aid of Assumption~\ref{A4}, we are able to derive the permissibility of $(\pi^{*},c^{*})$ in \eqref{optimalstrategies}. 

\begin{lemma}\label{C4}
Suppose $\gamma,\psi>1$ and Assumptions \ref{A1} and \ref{A4} hold. Then, $(\pi^{*},c^{*})$ defined in \eqref{optimalstrategies} belongs to $\mathcal P$.
\end{lemma}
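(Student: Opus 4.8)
The plan is to show that $(\pi^*,c^*)\in\mathcal A$ first, then verify the class-$D$ condition defining $\mathcal P$. The starting point is the explicit wealth equation: plugging $\pi^*$ and $\tilde c^*=\delta^\psi e^{-\frac{\psi}{\theta}D}$ from \eqref{optimalstrategies} into \eqref{EQ13}, the candidate optimal wealth satisfies a linear SDE, so one can write
\begin{equation*}
X^*_{t} = x\exp\left(\int_0^t\left(r_s+\pi^*_s\lambda_s-\tilde c^*_s-\tfrac12(\pi^*_s\sigma_s)^2\right)ds+\int_0^t\pi^*_s\sigma_s\,dW^\rho_s\right),
\end{equation*}
which is manifestly strictly positive for all $t\le\tau$; and since $D\in\cS_\infty$, the consumption rate $c^*_t=\tilde c^*_t X^*_t$ is a well-defined nonnegative progressively measurable process with $c^*_\tau=X^*_\tau$. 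The substantive part of admissibility is $c^*\in\mathcal C$, i.e. the two integrability conditions in \eqref{C}. Using $\pi^*_s\sigma_s=\frac{\lambda_s+\sigma_s\widetilde Z_s}{\gamma\sigma_s}=\frac{\lambda_s/\sigma_s+\widetilde Z_s}{\gamma}$ and the boundedness of $\lambda/\sigma$, $r$, and $D$, raising $X^*$ (equivalently $\tilde c^* X^*$) to the powers $p_+$ and $p_-$ of \eqref{p's} produces, after taking expectations and applying the standard trick of compensating the stochastic exponential, exactly expressions of the form appearing in Assumption~\ref{A4} (the drift terms collect the $\overline r$, $\underline r$, $C_{\lambda/\sigma}$, and $\delta^\psi e^{-\psi\widetilde C/\theta}$ contributions, and the quadratic variation contributes the $\int_0^\tau\widetilde Z_s^2\,ds$ term). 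So the two finiteness assertions in Assumption~\ref{A4} are precisely what is needed to conclude $c^*\in\mathcal C$, hence $(\pi^*,c^*)\in\mathcal A$.

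The remaining step is to show $((X^*_\cdot)^{1-\gamma})$ is of class $D$, i.e. $\{(X^*_{\theta})^{1-\gamma}: \theta\in\T,\ \theta\le\tau\}$ is uniformly integrable. Since $1-\gamma<0$, $(X^*)^{1-\gamma}$ is a supermartingale-like object that is small when wealth is large; the danger is wealth going to zero. The plan is to control $(X^*_{t\wedge\tau})^{1-\gamma}$ by a power slightly above one of itself: one shows $\E[(X^*_{t\wedge\tau})^{(1-\gamma)q'}]$ is bounded uniformly in $t$ for some $q'>1$, which gives uniform integrability by de la Vallée-Poussin. Writing $(X^*_{t\wedge\tau})^{(1-\gamma)q'}$ via the exponential representation above, the drift picks up the bounded coefficients and the quadratic term $\tfrac12((1-\gamma)q')^2(\pi^*_s\sigma_s)^2$; bounding $(\pi^*_s\sigma_s)^2\le \tfrac{2}{\gamma^2}((\lambda_s/\sigma_s)^2+\widetilde Z_s^2)$ and using $r$, $\lambda/\sigma$ bounded, one arrives at $\E[(X^*_{t\wedge\tau})^{(1-\gamma)q'}]\le C\,\E[\exp(\kappa\tau+\kappa'\int_0^\tau\widetilde Z_s^2\,ds)]$ for suitable constants, which is finite by (the $p_-$ half of) Assumption~\ref{A4} after choosing $q'$ close enough to $1$ that the constants $\kappa,\kappa'$ fall below the thresholds there. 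A symmetric estimate with $p_+$ handles the complementary direction if needed. (For $c^*\in\mathcal C$ one does not need the extra $q'$, since Assumption~\ref{A4} already carries a spare exponent $q>1$; that spare exponent is exactly what lets the class-$D$ argument absorb the slightly larger power.)

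The main obstacle I expect is the bookkeeping in matching the drift/quadratic-variation coefficients produced by the power transformations of $X^*$ against the precise constants written in Assumption~\ref{A4} — in particular, correctly handling the cross term between the deterministic drift and the stochastic integral when one completes the square to turn $\exp(\int \pi^*\sigma\,dW^\rho)$ into a true martingale times a compensator (this is where the factors like $\frac{4p_+^2}{\gamma^2}$ versus $\frac{4p_-^2-2p_-}{\gamma^2}$ and the $|p_-|$ versus $p_-$ distinctions arise, owing to the sign of $p_-$). Once that accounting is done carefully, Hölder's inequality (to split the martingale part from the compensator part, paying with the spare exponent $q$) and Assumption~\ref{A4} close the argument; the supermartingale/class-$D$ conclusion for $(X^*)^{1-\gamma}$ then follows from the uniform $L^{q'}$ bound via de la Vallée-Poussin.
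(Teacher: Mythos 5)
Your overall skeleton matches the paper's: write $X^*$ explicitly as a stochastic exponential (the paper factors this out as Lemma~\ref{C3}), raise it to the powers in \eqref{p's}, and compare against Assumption~\ref{A4}; then get class~$D$ from a uniform moment bound. However, there is a concrete misunderstanding of how the pieces fit together, concentrated in your parenthetical remark about the spare exponent.

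You write that ``Assumption~\ref{A4} already carries a spare exponent $q>1$; that spare exponent is exactly what lets the class-$D$ argument absorb the slightly larger power,'' and that ``for $c^*\in\mathcal C$ one does not need the extra $q'$.'' This is backwards. The $q>1$ in Assumption~\ref{A4} appears \emph{only} in the $p_+$ condition, not in the $p_-$ one. In the paper's proof the class-$D$ claim is obtained from the $p_-$ condition alone: Lemma~\ref{C3} with $p=p_-$ gives $\sup_{\pi\in\T}\E[(X^*_\pi)^{p_-}]<\infty$, and since $p_-/(1-\gamma)=2(2-1/\theta)>1$ automatically (by $\theta<0$), the power above one needed for uniform integrability is already built into the definition of $p_-$ --- no spare exponent is invoked, and ``choosing $q'$ close to $1$'' is neither necessary nor what the constants in the $p_-$ condition are calibrated for. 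The spare $q>1$ is instead essential for the \emph{$p_+$ half of $c^*\in\mathcal C$}: after applying Fubini to $\E[\int_0^\tau e^{-2\delta s}(X^*_s)^{p_+}\,ds]$ and inserting Lemma~\ref{C3}'s $p_+$ estimate followed by Jensen, a factor of $\tau$ emerges inside the expectation alongside the exponential, and H\"older's inequality with exponent $q$ is precisely what separates $\E[\exp(q(\cdot))]$ from $\E[\tau^{q/(q-1)}]$. Your sketch of the $c^*\in\mathcal C$ step never produces this $\tau$ factor, so without the H\"older step the argument does not close. Finally, ``a symmetric estimate with $p_+$ handles the complementary direction'' for class~$D$ is a non-sequitur: since $1-\gamma<0$, only negative moments of $X^*$ enter the uniform-integrability argument. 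As a smaller point, class~$D$ requires uniformity over stopping times, not deterministic $t$ (you state the right target but then slip into ``uniformly in $t$''); Lemma~\ref{C3}'s $p<0$ bound is deliberately stated for arbitrary $\pi\in\T$ to handle this.
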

The proof of Lemma~\ref{C4} is relegated to Section~\ref{subsec:proof for Lemma C4}.
With $(\pi^{*},c^{*})\in\mathcal{P}$, it remains to show that $(\pi^{*},c^{*})$ is optimal within $\mathcal P$; namely, it solves \eqref{the problem}. In view of the arguments in Section~\ref{subsec:ansatz}, this boils down to showing that the process in \eqref{R} is a supermartingale for each $(\pi,c)\in\mathcal{P}$ and a martingale for $(\pi^{*},c^{*})$.  This can be done by modifying the arguments in \cite[Theorem 2.14]{Xing}.

\begin{theorem}\label{T1}
Suppose $\gamma,\psi>1$ and Assumptions \ref{A1} and \ref{A4} hold. Then, $(\pi^{*},c^{*})$ defined in \eqref{optimalstrategies}, with $(D,Z,\hat Z)\in \mathcal S_\infty\times\cM_2$ a solution to \eqref{EQ11}, is a maximizer of \eqref{the problem}. Moreover, for any initial wealth $x>0$, the optimal Epstein-Zin utility is given by  $V_{0}^{*} = \frac{x^{1-\gamma}}{1-\gamma}e^{D_{0}}$. 
\end{theorem}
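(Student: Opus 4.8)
## Proof Proposal for Theorem~\ref{T1}

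The plan is to follow the classical martingale-optimality (verification) principle, now on the random horizon $[0,\tau]$. Define, for any $(\pi,c)\in\mathcal P$, the process
\[
R^{\pi,c}_t := \frac{(X^{\pi,c}_{t\wedge\tau})^{1-\gamma}}{1-\gamma}e^{D_{t\wedge\tau}} + \int_0^{t\wedge\tau} f\!\left(c_s,\frac{(X^{\pi,c}_s)^{1-\gamma}}{1-\gamma}e^{D_s}\right)ds,\qquad t\ge 0.
\]
\textbf{Step 1 (drift computation).} Apply It\^o's formula to $\frac{(X^{\pi,c}_{t\wedge\tau})^{1-\gamma}}{1-\gamma}e^{D_{t\wedge\tau}}$ using \eqref{EQ13} and \eqref{EQ11}. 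The drift of $R^{\pi,c}$ is exactly the bracketed expression displayed before \eqref{EQ10}; because $H$ in \eqref{H} was \emph{defined} via the two infima in \eqref{EQ10}, this drift is $\le 0$ for every $(\pi,c)$ and $=0$ when $(\pi_t,\tilde c_t)=(\pi^*_t,\tilde c^*_t)$ as in \eqref{optimalstrategies}. Hence $R^{\pi,c}$ is a local supermartingale for every admissible pair, and $R^{\pi^*,c^*}$ is a local martingale.

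\textbf{Step 2 (from local to true (super)martingale).} Here is where Assumption~\ref{A4} and Lemma~\ref{C4} enter. For a general $(\pi,c)\in\mathcal P$, the local supermartingale $R^{\pi,c}$ is bounded above by the constant $\frac{1}{1-\gamma}\cdot 0=0$-type argument fails because $1-\gamma<0$ makes $\frac{(X)^{1-\gamma}}{1-\gamma}$ negative; instead, since $D$ is bounded (Proposition~\ref{P5}) and $f\le 0$-type bounds are not available, one uses class-$D$ of $(X^{\pi,c})^{1-\gamma}$ (the defining property of $\mathcal P$) together with boundedness of $D$ and the integrability $c\in\mathcal C$ to conclude that $R^{\pi,c}$ is of class $D$, hence a genuine supermartingale. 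For the optimal pair, Lemma~\ref{C4} gives $(\pi^*,c^*)\in\mathcal P$, so $(X^*)^{1-\gamma}$ is of class $D$; combined with boundedness of $D$ and an estimate showing $\int_0^\tau |f(c^*_s,\cdot)|\,ds$ is integrable (using $\tilde c^*=\delta^\psi e^{-\psi D/\theta}$ bounded and the exponential moment bounds of Assumption~\ref{A4}), the local martingale $R^{\pi^*,c^*}$ is of class $D$ and therefore a true martingale.

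\textbf{Step 3 (conclusion).} For $(\pi,c)\in\mathcal P$: supermartingale property plus $R^{\pi,c}_0=\frac{x^{1-\gamma}}{1-\gamma}e^{D_0}$ gives $\E[R^{\pi,c}_{t\wedge\tau}]\le \frac{x^{1-\gamma}}{1-\gamma}e^{D_0}$; letting $t\to\infty$ (using $\tau<\infty$ a.s. and uniform integrability from class D) and recognizing, via \eqref{EQ2}--\eqref{EQ3} and $c_\tau=X^{\pi,c}_\tau$, that $\lim_t \E[R^{\pi,c}_{t\wedge\tau}] = \E\big[\frac{c_\tau^{1-\gamma}}{1-\gamma}+\int_0^\tau f(c_s,V^c_s)\,ds\big]=V^c_0$ — here one must also identify $\frac{(X_{\tau})^{1-\gamma}}{1-\gamma}e^{D_\tau}=\frac{c_\tau^{1-\gamma}}{1-\gamma}$ using $D_\tau=0$ from \eqref{EQ11} and the admissibility constraint $c_\tau=X_\tau$ — yields $V^c_0\le \frac{x^{1-\gamma}}{1-\gamma}e^{D_0}$. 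For $(\pi^*,c^*)$ the martingale property makes this an equality, so $V^{c^*}_0=\frac{x^{1-\gamma}}{1-\gamma}e^{D_0}=\sup_{(\pi,c)\in\mathcal P}V^c_0=V^*_0$, proving optimality and the stated formula.

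\textbf{Main obstacle.} The delicate point is Step 2: verifying that $R^{\pi,c}$ (and $R^{\pi^*,c^*}$) is genuinely of class $D$ rather than merely a local (super)martingale, on an \emph{unbounded} horizon where BMO localization is unavailable. One must control $\int_0^\tau f(c_s,V^c_s)\,ds$ and the terminal term simultaneously; the integrability in $\mathcal C$ handles the $f$-integral, while class-$D$ of $(X^{\pi,c})^{1-\gamma}$ (built into $\mathcal P$) plus boundedness of $D$ handles the running supremum. A secondary subtlety is the sign bookkeeping: since $\gamma>1$, $V^c$ and $\frac{(X)^{1-\gamma}}{1-\gamma}e^{D}$ are negative, so "supermartingale" in $R^{\pi,c}$ translates correctly to an \emph{upper} bound on $V^c_0$ only after carefully tracking the constant $\frac{1}{1-\gamma}<0$ through the inequalities.
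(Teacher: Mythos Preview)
Your overall structure (Steps 1 and 2) tracks the paper's argument closely, but there is a genuine gap in Step 3 that breaks the proof for general $(\pi,c)\in\mathcal P$.

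You claim that as $t\to\infty$,
\[
\lim_{t\to\infty}\E\big[R^{\pi,c}_{t\wedge\tau}\big]
=\E\bigg[\frac{c_\tau^{1-\gamma}}{1-\gamma}+\int_0^\tau f(c_s,V^c_s)\,ds\bigg]=V^c_0.
\]
This identification is incorrect. The integrand inside your $R^{\pi,c}$ is $f\big(c_s,\tfrac{(X^{\pi,c}_s)^{1-\gamma}}{1-\gamma}e^{D_s}\big)$, \emph{not} $f(c_s,V^c_s)$. For an arbitrary strategy the ansatz process $\tfrac{(X^{\pi,c}_s)^{1-\gamma}}{1-\gamma}e^{D_s}$ is \emph{not} equal to $V^c_s$, so the supermartingale inequality you obtain,
\[
\frac{x^{1-\gamma}}{1-\gamma}e^{D_0}\ge \E\bigg[\frac{c_\tau^{1-\gamma}}{1-\gamma}+\int_0^\tau f\Big(c_s,\tfrac{(X^{\pi,c}_s)^{1-\gamma}}{1-\gamma}e^{D_s}\Big)ds\bigg],
\]
says nothing directly about $V^c_0$. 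One cannot simply compare the two integrals pointwise either, because $v\mapsto f(c,v)$ is not monotone in the needed direction under $\gamma,\psi>1$.

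The paper closes this gap differently: it does \emph{not} try to pass the supermartingale to a true supermartingale and take limits. Instead, from the local supermartingale decomposition it extracts that $\big(\tfrac{(X^{\pi,c}_{\cdot\wedge\tau})^{1-\gamma}}{1-\gamma}e^{D_{\cdot\wedge\tau}},\,Z^{\pi,c}\big)$ is a \emph{supersolution} to the Epstein--Zin BSDE \eqref{EQ3} with terminal datum $\tfrac{c_\tau^{1-\gamma}}{1-\gamma}$ (using $D_\tau=0$). Then a BSDE comparison principle (the one used in Step 3 of the proof of \cite[Proposition 2.2]{Xing}) yields $\tfrac{x^{1-\gamma}}{1-\gamma}e^{D_0}\ge V^c_0$ directly, at time $0$, without needing your class-D upgrade of $R^{\pi,c}$ or any limit argument. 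For $(\pi^*,c^*)$, the local martingale property makes the ansatz process an actual \emph{solution} to \eqref{EQ3}, and uniqueness (Theorem~\ref{thm:EZ exists}) gives equality. Your identification is therefore correct only at the optimum, where the comparison becomes uniqueness; for suboptimal strategies, comparison of BSDE (super)solutions is the missing ingredient.
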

The proof of Theorem~\ref{T1} is relegated to Section~\ref{subsec:proof of Theorem T1}.

\begin{remark}
The permissibility of $(\pi,c)$ prevents the wealth process $X^{\pi,c}$ from getting too close to 0 for too long. Indeed, if $X^{\pi,c}$ is very close to 0 for a long time, $(X^{\pi,c})^{1-\gamma}$ will approach $+\infty$ very often and likely violate the permissibility of $(\pi,c)$. 
Economically, this means that even if one can find some $(\pi,c)\in\A$ that outperforms $(\pi^*,c^*)\in\cP$, it is likely that $(\pi,c)$ is {\it not} practically viable, as the induced wealth can be very small for an extended time period. 
\end{remark}


\section{Examples}\label{Examples}
In this section, we consider two concrete examples in Sections~\ref{Example 1} and \ref{subsec:Heston}, respectively. The purpose is twofold. First, we demonstrate explicitly that, compared with the classical fixed-horizon case, the randomness of the horizon drastically alters the optimal strategies; see Figures~\ref{fig:cons.invest.default}, \ref{fig:zeroEpsilon}, and \ref{fig:positiveEpsilon}. Moreover, the examples showcase, on the technical side, how Assumptions~\ref{A1} and \ref{A4} can be verified in practice. For the case where the random horizon $\tau$ is bounded (Section~\ref{Example 1}), one can check the assumptions in a straightforward way. For the case where $\tau$ is unbounded (Section~\ref{subsec:Heston}), a detailed estimation of the moment generating function of $\tau$ is needed; see Lemma~\ref{lem:MGF}. 


\subsection{The Default Problem: Bounded Time Horizon}\label{Example 1}
In this example, we take up the framework of Kraft and Steffensen \cite{Kraft06} and consider a consumption-investment problem where a defaultable bond is involved. This will generalize \cite{Kraft06, Korn03} from time-separable utilities to Epstein-Zin preferences. 
Specifically, we take 
\eqref{B&S} to be 
\begin{equation}\label{B&S''}
\begin{split}
	dS_{t}^{0} &= rS_{t}^{0}dt,\quad S_{0}^{0} = 1,\\
	dS_{t} &= S_{t}\left(\alpha dt + \sigma dW_{t}\right),\quad S_{0} = s>0,
\end{split}
\end{equation}
where $r,\alpha\in\R$ and $\sigma>0$ are given constants. Note that this is a complete market: there is no dependence on either the additional market state $Y$ in \eqref{EQ8} or the additional Brownian motion $\hat W$ in \eqref{B&S} (particularly, we take $
\rho\equiv1$ and $\hat{\rho}\equiv0$ in \eqref{B&S}). In addition, as all coefficients in \eqref{B&S''} are constant, Assumption \ref{A1} is readily satisfied.

In line with \cite{Kraft06, BC76}, the process $S$ represents the firm value of a company, which has issued stock and a zero-coupon bond (with maturity $T>0$ and face value $K>0$). The bond has a safety covenant: before its maturity $T>0$, if the firm value falls to a certain threshold (the {\it bankruptcy level}), the company is forced to bankruptcy and passed over to the bondholder. Following \cite{Kraft06}, we take the bankruptcy level to be
\[
\mathfrak{B}_t = L e^{-\kappa (T-t)}\quad \hbox{$t\in[0,T]$},
\] 
for some constants $L,\kappa\ge0$. 
If the bankruptcy level is never reached by time $T$, the company simply redeems the bond (i.e. pays the bondholder $K>0$) at maturity. 

An agent, who invests in the stock of the company, aims to achieve the optimal Epstein-Zin utility value, i.e. $V^*_0$ in \eqref{the problem}, over the random horizon 
\[
\tau := \overline{\tau}\wedge T\quad \text{with}\quad \overline{\tau}:=\inf\{t\ge0: S_{t} \le \mathfrak{B}_t\};
\]
namely, the time to the forced bankruptcy or the maturity of the bond, whichever comes first. As $\tau$ is bounded, the arguments in \cite[Lemma B.2]{Xing} and \cite[Lemma 3.1]{Morlais09} can be applied, showing that $\int_{0}^{\cdot}Z_{s}dW_{s}$ is a BMO-martingale. This property, along with the boundedness of $\tau$, ensures that Assumption \ref{A4} is satisfied; see also Remark~\ref{rem:not restrictive}. 

In the present complete market setting, the ansatz $V^*_{t} = \frac{X_{t}^{1-\gamma}}{1-\gamma}P_{t}^{\gamma}$, for some process $P$, will help reduce the nonlinearity in \eqref{EQ11}; see Remark~\ref{rem:another ansatz} for related discussions. Specifically, as $\hat W$ is not involved in \eqref{B&S''}, we can drop the last term in \eqref{EQ11} by taking $\hat Z\equiv 0$. Under the change of variables $P_{t} = e^{\frac{1}{\gamma}D_{t}}$ and $Q_t = \frac{1}{\gamma}Z_{t}e^{\frac{1}{\gamma}D_{t}}$, the reduced BSDE \eqref{EQ11} for $(D,Z)$ (with $\hat Z\equiv 0$, $\rho\equiv1$, and $\hat\rho\equiv 0$) turns into the BSDE
\begin{equation}\label{BSDE'''}
P_t = 1 + \int_{t\wedge\tau}^{\tau}\cH(P_s,Q_s)ds - \int_{t\wedge\tau}^{\tau}Q_s dW_{s},\quad t\ge 0,
\end{equation}
for $(P,Q)$, 
where
\begin{align*}
	\cH(p,q) :=\frac{(1-\gamma)(\alpha-r)}{\gamma\sigma} q + \theta\frac{\delta^{\psi}}{\gamma\psi}\left(p^{1-\frac{\gamma\psi}{\theta}}\right) + \frac{1-\gamma}{\gamma}\left(r + \frac{(\alpha-r)^{2}}{2\gamma\sigma^{2}}\right)p -\frac{\delta\theta}{\gamma}p.
\end{align*}
The optimal strategies \eqref{optimalstrategies} can then be written in terms of $(P,Q)$ as 
\begin{equation}\label{optimalstrategies'''}
\pi^*_t = \frac{(\alpha-r) P_t + \gamma \sigma  Q_t}{P_t\gamma\sigma^2} = \frac{\alpha-r}{\gamma\sigma^2} +\frac{Q_t}{\sigma P_t}\quad \text{and}\quad \tilde{c}^{*}_t = \displaystyle\delta^{\psi}P_t^{-\frac{\psi}{\theta}\gamma}.
\end{equation}
To solve \eqref{BSDE'''}, let us introduce the auxiliary state process
$\mathcal Y_{t} = -\ln S_{t}$, $t\ge 0$.
By taking 
\begin{equation}\label{P,Q,u}
P_t= u(t,\mathcal Y_t)\quad \hbox{and}\quad Q_t=- \sigma u_{y}(t,\mathcal Y_{t})
\end{equation}
for some $u:[0,T]\times \R$ to be determined, \eqref{BSDE'''} yields the PDE
\begin{equation}\label{PDE_Ex1}
	\begin{split}
		u_{t}(t,y)+\cH(u,-\sigma u_{y}) = \Big(\alpha-\frac{1}{2}\sigma^{2}\Big)u_{y}(t,y) - \frac{1}{2}\sigma^{2}u_{yy}(t,y),\quad &y<-\ln(L)+\kappa(T-t),\ t<T; \\
		u(t,y)=1,\quad &\hbox{otherwise}.
	\end{split}
\end{equation}
The introduction of $\mathcal Y$, superfluous at first glance, ensures that the coefficients in front of $u_y$ and $u_{yy}$ in \eqref{PDE_Ex1} are constants, which facilitates the derivation of Proposition~\ref{Prop: Parabolic} below. If $\mathcal Y$ in \eqref{P,Q,u} is replaced by the original state $S$, the resulting PDE will be much more complicated.  

\begin{proposition}\label{Prop: Parabolic}
	For any $\alpha, r\in \R$, $\sigma>0$, and $L,\kappa\ge0$, 
	there exists a unique bounded solution $u\in C^{1,2}([0,T]\times\R)$ to \eqref{PDE_Ex1}.
\end{proposition}

The proof of Proposition~\ref{Prop: Parabolic} is relegated to Appendix~\ref{subsec: Parabolic PDE}.
In view of \eqref{optimalstrategies'''} and \eqref{P,Q,u}, the optimal strategies are given by 
\begin{equation}\label{optimal_u}
\pi^*_t =\frac{\alpha-r}{\gamma\sigma^{2}}-\frac{u_{y}(t,\mathcal Y_t)}{u(t,\mathcal Y_t)}\quad \text{and}\quad \tilde{c}^{*}_t = \delta^{\psi}u(t,\mathcal Y_t)^{-\frac{\gamma\psi}{\theta}},
\end{equation}
where $u$ is the unique classical solution in Proposition~\ref{Prop: Parabolic}.  

\begin{remark}
	In the standard case with no default, since the bankruptcy level is absent, the first line of \eqref{PDE_Ex1} holds for all $y\in\R$ and the second line becomes the terminal condition $u(T,y)=1$. This simplified \eqref{PDE_Ex1}, via the ansatz $u(t,y) = p(t)$ for all $(t,y)\in[0,T]\times \R$, reduces to the ODE   
	\begin{equation}\label{ODE_Ex1}
		p'(t) + \frac{1-\gamma}{\gamma}\left(r + \frac{(\alpha-r)^{2}}{2\gamma\sigma^{2}}\right)p(t) -\frac{\delta\theta}{\gamma}p(t)+\theta\frac{\delta^{\psi}}{\gamma\psi}\left(p(t)^{1-\frac{\gamma\psi}{\theta}}\right)=0,\quad p(T)=1.
	\end{equation}
	This is a Bernoulli equation with the closed-form solution
	\begin{equation*}
		p(t) = \left(e^{-C^{*}(T-t)} - \frac{\delta^{\psi}}{C^{*}}\left(e^{-C^{*}(T-t)}-1\right)\right)^{\frac{\theta}{\gamma\psi}},\quad t\in[0,T],
	\end{equation*}
	with $C^{*} := \delta\psi + (1-\psi)\big(r + \frac{(\alpha-r)^{2}}{2\gamma\sigma^{2}}\big)$. Plugging $u(t,y)=p(t)$ into \eqref{optimal_u} then recovers the optimal consumption and investment policies in \cite[Theorem 4]{Schroder96}, i.e. 
	\begin{equation}\label{optimal_no default}
		\pi^{*}_t = \frac{\alpha-r}{\gamma\sigma^{2}}\quad \text{and}\quad \tilde{c}^*_t= C^{*}\left(\left(\frac{C^{*}}{\delta^{\psi}}-1\right)e^{-C^{*}(T-t)}+1\right)^{-1}.
			\end{equation}
\end{remark}

In the sequel, we fix $T=5\text{ (years)},\, L= 90,\, \kappa=0,\, r=\delta=0.03,\, \alpha = 0.12$, and  $\sigma =0.3$, the same parameters used in \cite{Kraft06}. With $\alpha>r$, the firm value $S$ is expected to outpace $S^0$ in the long term. Such upward potential is taken away, however, once $S$ falls to the level $L=90$ and the forced bankruptcy takes place. In other words, compared with the standard no-default case, the expected growth of $S$ is cut down due to the imposed bankruptcy level, making investment in $S$ less valuable. The agent is likely to save more (relative to consumption and investment), relying more on the riskfree asset $S^0$ to compensate for the loss of return in the risky one $S$. 

\begin{figure}[h!]
	\centering
	\includegraphics[width=\linewidth]{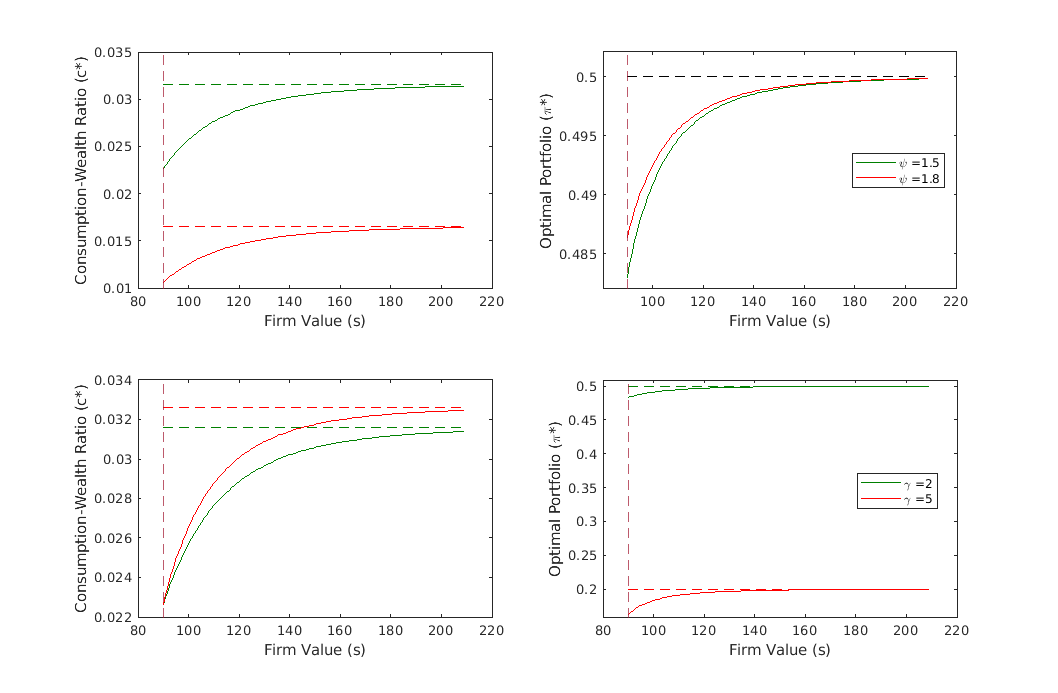}
	\caption{\small Optimal consumption and investment ratios at $t=0$ versus firm value. The upper panel fixes $\gamma=2$ and changes $\psi$; the lower one fixes $\psi=1.5$ and changes $\gamma$. The solid curves are computed via \eqref{optimal_u}; the dotted lines are benchmark levels in the no-default case, computed via \eqref{optimal_no default}. The upper-right plot has only one dotted line because the no-default level $\frac{\alpha-r}{\gamma\sigma^2}$ is independent of $\psi$.}
	\label{fig:cons.invest.default}
\end{figure}

Indeed, as shown in Figure \ref{fig:cons.invest.default}, for varying values of $\gamma$ and $\psi$, the optimal consumption and investment ratios (the solid curves) are both below the classical levels in the no-default case (the dotted lines), suggesting a larger proportion of wealth saved. Notably, the closer the firm value to the bankruptcy level, the more significant the departure from the no-default case.\footnote{This is in line with the finding in \cite{Kraft06} under time-separable utilities: for $\gamma>1$, the optimal investment strategy deviates below Merton's ratio $\frac{\alpha-r}{\gamma\sigma^{2}}$, more significantly as the firm value is closer to the bankruptcy level.} Intuitively, if the initial firm value $S_0$ is far above $L=90$, forced bankruptcy is unlikely and thus has little impact on the expected growth of $S$. The optimal strategies are then similar to those in the no-default case.\footnote{In fact, the optimal strategies converge to those in the no-default case as $S_0\to\infty$.} When $S_0$ is closer to $L=90$, forced bankruptcy is more probable,  distorting the optimal strategies more severely.

Intriguingly, the optimal investment ratio $\pi^*$ is sensitive to EIS $\psi$---in contrast to most findings under Epstein-Zin preferences. In the no-default (and thus fixed-horizon) case, under a constant investment opportunity, $\pi^*$ remains the standard Merton ratio $\frac{\alpha-r}{\gamma \sigma^2}$, independent of $\psi$; see \cite[Theorem 4]{Schroder96}. Even when the investment opportunity is stochastic, \cite[Section 3]{Xing} shows numerically that $\psi$ has little impact on $\pi^*$; see the discussion below \cite[(3.2)]{Xing}. The introduction of the bankruptcy level alters this markedly: $\pi^*$ now increases with $\psi$. 

Note that the bankruptcy level poses additional timing uncertainty---{\it When will the investment horizon end?} When the firm value is close to the bankruptcy level, it is highly probably that the horizon will end in near term. There is yet {\it risk} that the firm value rises again before touching the bankruptcy level, enlarging the horizon indefinitely. With $\gamma\psi>1$ (due to our specification $\gamma,\psi>1$), the agent prefers early resolution of uncertainty (see \cite{KP78, Skiadas98}), therefore desires a {\it risk premium} to compensate for the timing uncertainty. As $\psi$ increases, the corresponding rise in $\pi^*$ can be viewed, in a sense, as the added {\it risk premium}, under a stronger preference for early resolution of uncertainty. Indeed, if the risk actually materializes, i.e. $S$ rises again without touching $L=90$, a larger risky position yields a richer compensation.   


\subsection{The Exit Problem: Unbounded Time Horizon}\label{subsec:Heston}
In this example, we let the correlation between $W$ and $\hat W$ be constant, i.e. $\rho(t,y)\equiv \rho\in [-1,1]$, and take \eqref{EQ8} and \eqref{B&S} to be 
\begin{equation}\label{B&S'}
\begin{split}
dY_{t} &= -\alpha(Y_{t}-m^{2})dt + k\sqrt{Y_{t}}dW_{t},\quad Y_{0}=y>0,\\
dS^0_t &= r S^0_t dt,\quad S^0_0 =1,\\
dS_{t} &= S_{t}\left[\left(r+\lambda\cdot \big(Y_{t}+\eps\big)\right) dt + \sqrt{Y_{t}+\eps}dW_{t}^{\rho}\right],\quad S_{0}=s>0,
\end{split}
\end{equation}
where $\alpha$, $r$, $k$, $m$, $\lambda$ and $\eps$ are given nonnegative constants, with $2\alpha m^2>k^2$ satisfied such that $Y_t>0$ $\forall t\ge 0$ a.s. 
For $\eps=0$, this is the standard Heston model of stochastic volatility, which has been investigated in \cite{Kraft13, Kraft17, Xing}. For $\eps>0$, this forms an $\eps$-modification of the Heston model. 
It is used in \cite{Zawisza15} to obtain bounds of the market price of risk in the Scott and Stein-Stein models of stochastic volatility, as well as in \cite{Pham02}  to ensure uniform ellipticity in the Hull-White model. In the following, we will take $\rho=0$ (whence $\hat\rho=1$), so that $W^\rho = \hat W$. This is supported by empirical analysis in \cite{Silva03, Dragulescu02}, although there are other estimates of $\rho$ in the literature (such as \cite{Pan02, Chernov00}). 

As illuminated by the seminal work De Long et al.\ \cite{DeLong90}, a rational investor faces two distinct kinds of risk---the {\it fundamental risk} and the {\it noise trader risk}. The former refers to the inherent uncertainty in an asset's expected return (due to technological advances, macroeconomic conditions, etc.), while the latter is the risk that noise traders may {sustainably} push the asset price away from its fundamental value. This perspective differs from the classical arbitrage argument: Friedman \cite{Friedman53} and Fama \cite{Fama65}, among many others, assert that noise traders are met in the market by rational arbitrageurs who actively trade against them, which effectively drives the asset price back to normal. In reality, the opposite can happen: precisely in view of noise traders' unpredictable behavior (that may distort the asset price indefinitely), rational investors are less willing to trade; see e.g., \cite{DeLong90, SV97, Barberis98}. In a sense, noise traders intimidate rational investors into retreat. 

Bearing this in mind, we interpret \eqref{B&S'} as follows. Let $Y$ stand for an underlying economic factor. The expected return of $S$ (i.e., the term $r+\lambda\cdot(Y_{t}+\eps)$) then exemplifies the {\it fundamental risk}: the excess return beyond the riskfree rate $r$ is uncertain, influenced by $Y$ that evolves stochastically. The {\it noise trader risk}, on the other hand, is reflected through $\sqrt{Y_{t}+\eps}d\hat W_{t}$, which unpredictably alters the actual return of $S$ away from its mean. Here, we implicitly assume that the noise trader risk is more severe when $Y_t$ is larger, or equivalently, when the expected return of $S$ is larger. That is, a better economic condition (i.e., a larger $Y_t$) improves the fundamentals of $S$, which also raises the interest of noise traders in trading this stock. To better understand the sustained impact of noise traders on $S$, we introduce a process $\cW$ defined by
\begin{equation}\label{mean return}
d\cW_{t} = 
\sqrt{Y_{t}+\eps}d\hat{W}_{t},\quad \cW_{0}=w\in \R,
\end{equation}
which is the so-called zero-mean return process of $S$ in \cite{Masoliver08, Masoliver09}. 

For any $w\in \R$ and $y>0$, suppose that a rational investor considers the random horizon
\begin{equation}\label{tau_w,y}
\tau:= \inf\left\{t\ge 0 : (\cW_{t},Y_{t})\not\in\left(-\frac{L}{2},\frac{L}{2}\right)\times(y_1,y_2)\right\},
\end{equation}
where $L>0$ and $0<y_1<y_2$ are chosen by the investor a priori. Specifically, this investor exits the market when the sustained impact of noise traders on $S$, either positive or negative, is too significant (as depicted in \cite{DeLong90}) or the underlying economic factor reaches extreme values. In view of \eqref{B&S'} and \eqref{tau_w,y}, it is straightforward to check that Assumption~\ref{A1} is satisfied.

Set $\cD:=\left(-\frac{L}{2},\frac{L}{2}\right)\times(y_1,y_2)$. The PDE characterization established in Appendix~\ref{sec:Markov} (particularly Theorem~\ref{PDEapproach} and \eqref{representation}) suggests that a solution $(D,Z,\hat Z)$ to \eqref{EQ11} can be constructed by
\begin{align}\label{representation'}
	D_{t} = u(\cW_{t}^{w},Y_{t}^{y}),\quad Z_{t} = k\sqrt{Y_t}\ u_y(\cW_{t}^{w},Y_{t}^{y}),\quad \hat{Z}_{t}=\sqrt{Y_t+\eps}\ u_w(\cW_{t}^{w},Y_{t}^{y}),
\end{align}
where $u\in C^{2}(\overline{\cD})$ is the unique solution to the elliptic boundary value problem
\begin{equation}\label{PDE_Ex2}
	\begin{split}
		-\alpha(y-m^{2})u_{y}+\frac{k^{2}y}{2}u_{yy} + \frac{y+\eps}{2}u_{ww} = -G\left(y,u,k\sqrt{y}u_{y},\sqrt{y+\eps}u_{w}\right),\quad &\forall (w,y)\in\cD\\
		u(w,y) = 0,\quad &\forall (w,y)\in\partial\cD,
	\end{split}
\end{equation}
with
\begin{align}\label{Hpde}
	G(y,d,z,\hat z):=\frac{z^{2}}{2}+\frac{{\hat z}^{2}}{2\gamma} +\frac{1-\gamma}{\gamma} \lambda \sqrt{y+\eps}\hat z +\frac{\delta^{\psi}\theta}{\psi}e^{-\frac{\psi}{\theta}d} + (1-\gamma)\bigg(r + \frac{\lambda^2}{2\gamma}(y+\eps)\bigg)-\delta\theta.
\end{align}
In view of \eqref{optimalstrategies}, $\rho=0$, and $\hat\rho=1$, the formulas of the optimal strategies now take the form
\begin{equation}\label{optimal in u}
	\pi^*_t= \frac{\lambda+u_{w}(\cW_t,Y_t)}{\gamma}\qquad\hbox{and}\qquad \tilde{c}^*_t = \delta^\psi e^{-\frac{\psi}{\theta} u(\cW_t,Y_t)}.
\end{equation}
For these strategies to be optimal, we still need to check Assumption~\ref{A4}; recall Lemma~\ref{C4}. To this end, we need the following two observations and one technical lemma: 
\begin{itemize}
\item [(i)] $\tZ^2_t = \hat Z^2_t$ (as $\rho =0$ and $\hat\rho=1$) is bounded on $[0,\tau]$: By \eqref{representation'}, $\hat Z^2_t= (Y_t+\eps) u^2_w(\cW_t,Y_t)$ for $0\le t\le \tau$. As $u\in C^2(\overline \cD)$ and the domain $\cD$ is bounded, $\{\hat Z_t\}_{0\le t\le \tau}$ is bounded, for which an upper bound can be found by numerically solving \eqref{PDE_Ex2}.\footnote{We solve \eqref{PDE_Ex2} via finite element methods, using a triangular mesh with maximal edge length taken to be 0.005. Also, a suitable mollification of $\cD = \left(-\frac{L}{2},\frac{L}{2}\right)\times(y_1,y_2)$ is used to ensure the boundary regularity in Assumption \ref{A2}. }
\item [(ii)] $\widetilde C\le 0$: By applying the maximum principle in \cite{Barles95} to \eqref{PDE_Ex2}, we have $u\le 0$. Recalling that $D_t = u(\cW_t,Y_t)$ from \eqref{representation'}, we conclude $\widetilde C = \esssup\big(\sup_{t\ge 0} D_t\big) \le0 $.  
\end{itemize}

\begin{lemma}\label{lem:MGF}
For any $L>0$ and $(w,y)\in\cD$, consider $\tau_w:= \inf\left\{t\ge 0 : \cW_{t}\not\in\left(-\frac{L}{2},\frac{L}{2}\right)\right\}\ge \tau$. Then, $\E\left[e^{c\tau_w}\right]<\infty$ for all $c\in[0,c^*)$, with 
\begin{equation}\label{c^*}
c^*:= \left(\frac{\alpha m}{k}\right)^2\left(\sqrt{1+\left(\frac{k\pi}{\alpha L}\right)^{2}} -1\right) + \frac{\pi^{2}\eps}{2 L^{2}}>0.
\end{equation}
\end{lemma}
The proof of Lemma~\ref{lem:MGF} is relegated to Appendix~\ref{subsec:proof of Lemma MGF}.


In the following, we fix $\gamma = 2$, $\psi = 1.5$, $\delta = 0.08$, 
$r=0.05$, $\alpha=5$, $k^{2}=0.25$, $m^{2}=0.0225$, and  $\lambda =0.47$, the same parameters used in \cite{Kraft17, Xing}. Also, we take $y_1=0.001$ and $y_2=1$, leaving $L>0$ the only free variable. By the observations (i) and (ii) above and Lemma~\ref{lem:MGF}, as long as $L>0$ is small enough (so that $c^*$ in \eqref{c^*} is large enough), $\tau$ satisfies the two exponential moment conditions in Assumption \ref{A4}. Under our choice of parameters, numerical computations show that, with $\eps=0$, $\tau$ satisfies Assumption \ref{A4} for all $0<L\le 0.02$.\footnote{For $L>0.02$, Assumption \ref{A4} may still be satisfied by $\tau$, because our computation involves several upper bounds that need not be sharp.} 
An $\eps$-modification enlarges the allowable range of $L$. For instance, with $\eps=0.05$, $\tau$ satisfies Assumption \ref{A4} for all $0<L\le 0.08$. We plot the optimal strategies \eqref{optimal in u} on the state space of $(Y_t,\cW_t)$ in Figure \ref{fig:zeroEpsilon} (for $\eps=0$ and $L=0.02$) and Figure \ref{fig:positiveEpsilon} (for $\eps=0.05$ and $L= 0.02$, $0.08$).  

\begin{figure}[!htbp]
	\centering
	\includegraphics[width=\linewidth]{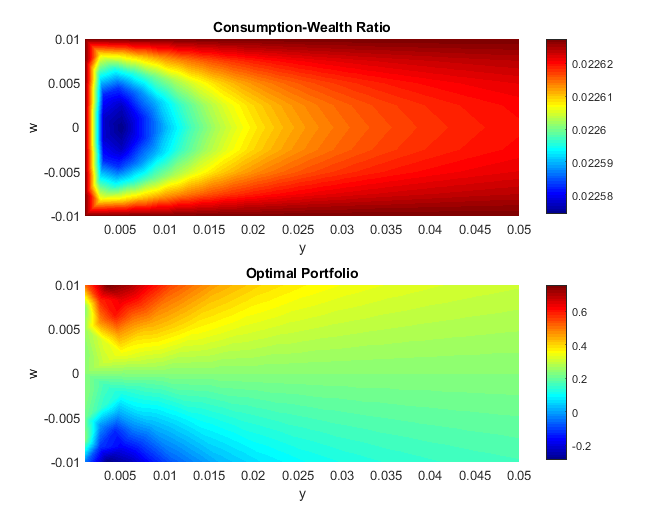}
	\setlength{\abovecaptionskip}{-25pt}
	\caption{\small Optimal strategies on the state space of $(Y_t,\cW_t)$ (for $\eps=0$ and $L=0.02$).}
	\label{fig:zeroEpsilon}
\end{figure}

\begin{figure}[!htbp]
	\centering
	\begin{subfigure}{.55\textwidth}
		\centering
		\includegraphics[width=\linewidth]{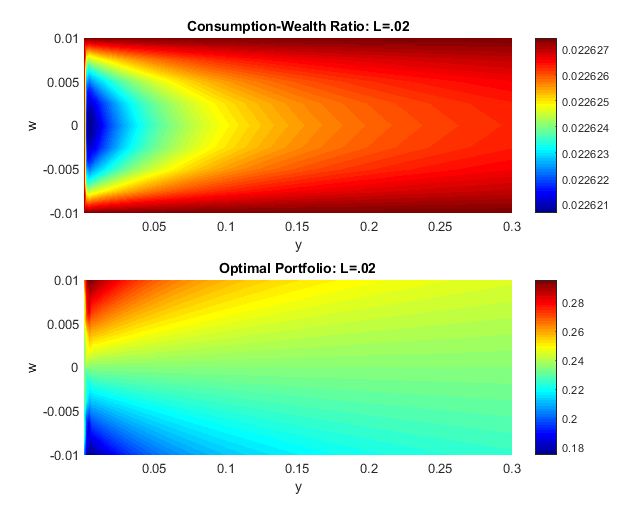}
		\caption{}
		\label{fig:optL02}
	\end{subfigure}%
	\begin{subfigure}{.55\textwidth}
		\centering
		\includegraphics[width=\linewidth]{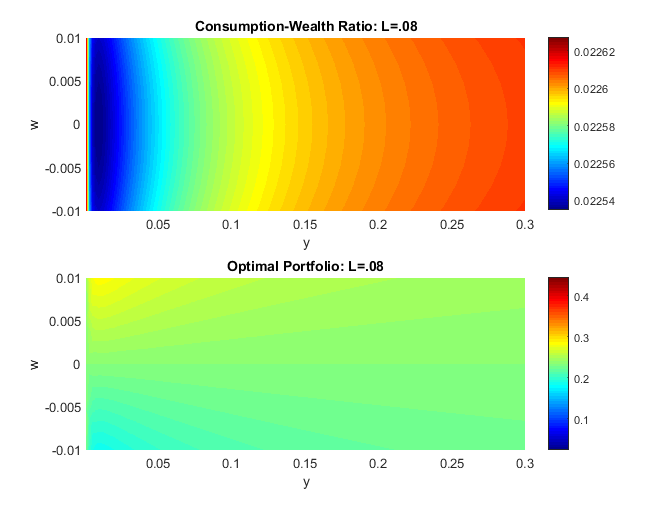}
		\caption{}
		\label{fig:optL08}
	\end{subfigure}
	\caption{\small Optimal strategies on the state space of $(Y_t,\cW_t)$ (for $\eps =0.05$ and $L=0.02$, $0.08$).}
	\label{fig:positiveEpsilon}
\end{figure}

The results differ largely from the classical ones on a fixed horizon. For a fixed horizon $T>0$, by \cite[Theorem 5.1]{Kraft17} and \cite[(2.14)]{Xing}, the optimal investment ratio under \eqref{B&S'} (with $\rho=0$ and $\hat\rho=1$) is
\[
\pi^*_{\text f} = \frac{\lambda_{t}}{\gamma\sigma_{t}^2} \equiv \frac{\lambda}{\gamma}.
\]
That is, it is optimal to keep a constant proportion of wealth in $S$, regardless of market conditions. Note that this can also be derived from \eqref{optimalstrategies}, as $\hat Z\equiv 0$ on a fixed horizon; see Remark~\ref{rem:hat Z=0} for details. By contrast, on the random horizon $\tau$, Figure \ref{fig:zeroEpsilon} shows that the optimal investment ratio $\pi^*$ changes continuously as the market evolves. To understand the various features of $\pi^*$ illustrated in Figure \ref{fig:zeroEpsilon}, we note from \eqref{B&S'} and \eqref{mean return} that given $0\le t<\tau$ with $\cW_t =w\in(-L/2,L/2)$, 
\begin{equation}\label{S expression}
S_{\tau} = S_t  \exp\big(A_{t,\tau}+(\cW_{\tau}-w)\big),\quad\hbox{with}\ A_{t,\tau} := {\int_t^{\tau}\big(r+(\lambda-1/2)(Y_s+\eps)\big) ds}
\end{equation}
By taking $0\le \eps<2/3$, we have $r+(\lambda-1/2)(Y_s+\eps)  = 0.05-0.03 (Y_s+\eps)\ge 0.02-0.03\eps>0$. Hence,  $A_{t,\tau}$ is strictly positive. 
\begin{itemize}
\item [1.] 
Suppose that $\cW_t=w\in(-L/2,L/2)$ is close to $L/2$.  
\begin{itemize}
\item [(a)] As it is likely that in near future the event $\{\cW_\tau = L/2\}$ will occur (so that $S_\tau = S_t e^{A_{t,\tau}+(L/2 -w)}>S_t$), holding more shares of $S$ at time $t$ can take advantage of this likely price increase from $S_t$ to $S_\tau$. Indeed, in Figure \ref{fig:zeroEpsilon}, as $w$ moves towards $L/2$ , $\pi^*$ is positive and increasing. 
\item [(b)] By \eqref{mean return}, a larger $Y_t=y$ implies a more volatile $\cW$, at least for a period right after time $t$. This enlarges the probability of $\{\cW_\tau = -L/2\}$, even though $\cW_t=w$ is currently near $L/2$. As $Y_t=y$ becomes smaller, $\cW$ is more stable and stays near $L/2$ for a longer period after $t$, which makes $\{\cW_\tau = L/2\}$ more probable. This strengthens the incentive to hold shares of $S$ at time $t$, as argued in (a). Indeed, in Figure \ref{fig:zeroEpsilon}, for a fixed $\cW_t=w$ closer to $L/2$, $\pi^*$ increases as $Y_t=y$ decreases. Intriguingly, $\pi^*$ attains its maximum near $y=0.005$ and then changes its behavior: as $y$ decreases further towards $y_1=0.001$, $\pi^*$ decreases. For $Y_t=y$ very close to $y_1$, it is likely that $Y$ reaches $y_1$ before $\cW$ reaches $L/2$. In such a case, one has $\cW_\tau<L/2$ and the argument in (a) no longer ensures $S_\tau>S_t$. The incentive to hold shares of $S$ is thus diminished, whence $\pi^*$ decreases. 
\end{itemize}
\item [2.] Suppose that $\cW_t=w\in(-L/2,L/2)$ is close to $-L/2$. 
\begin{itemize}
\item [(a)] As argued in 1(b), when $Y_t=y$ is large, $\cW$ is volatile. There is then a reasonable chance for $\{\cW_\tau = L/2\}$ to occur (so that $S_\tau = S_t e^{A_{t,\tau}+(L/2 -w)}>S_t$), even though $\cW_t=w$ is currently near $-L/2$. A moderate holding of $S$ therefore seems appropriate. As $Y_t=y$ decreases, $\cW$ becomes more stable and stays near $-L/2$ for a longer period after $t$, which makes $\{\cW_\tau = -L/2\}$ more probable. The incentive to hold shares of $S$ is thus diminished. Indeed, in Figure \ref{fig:zeroEpsilon}, for a fixed $\cW_t=w$ closer to $-L/2$, $\pi^*$ is positive but small when $Y_t=y$ is large, and decreases as $Y_t=y$ decreases. 
\item [(b)] When $Y_t=y$ is near 0.005, $\{\cW_\tau = -L/2\}$ becomes so probable and imminent that $S_\tau = S_t e^{A_{t,\tau}+(-L/2 -w)}<S_t$; more precisely, as $\tau$ is likely only slightly larger than $t$, $A_{t,\tau}>0$ is not positive enough to compensate for the negative term $-L/2 -w$. This encourages shorting the asset $S$, which is confirmed in Figure \ref{fig:zeroEpsilon}: $\pi^*$ attains its minimum, which is negative, when $Y_t=y$ is near 0.005. Intriguingly, as $Y_t=y$ further decreases towards $y_1=0.001$, $\pi^*$ increases. For $Y_t=y$ very close to $y_1$, it is likely that $Y$ reaches $y_1$ before $\cW$ reaches $-L/2$. In such a case, one has $\cW_\tau>-L/2$ and it is no longer clear if $S_\tau<S_t$ holds. The incentive to short $S$ is then diminished, whence $\pi^*$ increases. 
\end{itemize}
\end{itemize}

By \cite[Theorem 5.1]{Kraft17} and \cite[(2.14)]{Xing}, the optimal consumption ratio on a fixed horizon $T>0$ is given by $\tilde{c}_{\text f}^*(t,Y_t) = \delta^\psi e^{-\frac{\psi}{\theta} v(t,Y_t)}$, where $v$ is the solution to a Cauchy problem. Clearly, $\tilde{c}^*_{\text f}$ differs from $\tilde{c}^*$ in \eqref{optimal in u}, as $u$ and $v$ are solutions to different differential equations. 

\begin{remark}
It is of interest to derive optimal strategies under time-separable utilities on the same random horizon $\tau$ in \eqref{tau_w,y} and compare them with our findings in this subsection. This turns out to be challenging: classical studies under time-separable utilities don't readily accommodate $\tau$. They require a random horizon to either be bounded by a fixed $T>0$ (e.g., \cite{Karatzas00, Blanchet08, Monique15}), depend solely on tradable assets (e.g., \cite{Karatzas00}), or depend on nontradable assets in a stylized, ad-hoc manner (e.g., \cite{Blanchet08}). All these are violated by $\tau$ in \eqref{tau_w,y}. Also note that our numerical calculations hinge on the existence of a solution $(D,Z,\hat Z)$ to BSDE \eqref{EQ11}, established under $\gamma, \psi>1$ (Proposition~\ref{P5}). As our framework doesn't cover time-separable utilities (which require $\gamma=1/\psi$), it is unclear whether and how modifying our calculations can produce results in the time-separable case. 
\end{remark}


\appendix
\section{Proofs for Section \ref{EZ}}\label{sec:proofs for EZ} 
Let us first present a useful estimation and a fundamental result. 
Recall the spaces of processes introduced above Proposition~\ref{P1}. For any $T>0$, define the spaces $\mathcal{S}_{q}([0,T])$, $\mathcal{M}_{q}([0,T])$, and $\mathcal{B}_{q}([0,T])$ similarly, with $\tau$ replaced by $T$. 

For any $(Y_{t},Z_{t})\in\mathcal{B}_{2}([0,T])$ and $q>0$, by Burkh\"{o}lder-Davis-Gundy's inequality, there exists $K>0$, independent of $Y$, $Z$, and $T$, such that
\begin{align}
q\cdot \mathbb{E}\bigg[\sup_{t\in [0,T]}\int_{0}^{t} Y_{s} Z_{s}dB_{s}\bigg] &\le q K \mathbb{E}\bigg[(\int_{0}^{T}|Y_{s}|^{2}\|Z_{s}\|^{2}ds)^{1/2}\bigg] \notag\\
&\le \mathbb{E}\bigg[\Big(\sup_{0\le t\le T}|Y_{t}|^{2}\Big)^{1/2}\Big(q^2 K^2\int_{0}^{T}\|Z_{s}\|^{2}ds\Big)^{1/2}\bigg]\notag\\
&\le \frac{1}{2}\mathbb{E}\bigg[\sup_{t\in[0, T]}|Y_{t}|^{2}\bigg]+\frac{q^2K^2}{2}\E\bigg[\int_{0}^{T}\|Z_{s}\|^{2}ds\bigg]<\infty, \label{BDG}
\end{align}
where the third inequality follows from $ab\le \frac{a^2}{2}+\frac{b^2}{2}$ for all $a,b\in \R$, and the finiteness is due to $(Y_{t},Z_{t})\in\mathcal{B}_{2}([0,T])$.

\begin{lemma}\label{L2}
	Fix $T<\infty$. Given $(Y_{t},Z_{t})\in\mathcal{B}_{2}([0,T])$, the continuous local martingale $\int_{0}^{t} Y_{s}Z_{s}dB_{s}$, $t\in[0,T]$, is a uniformly integrable martingale.
\end{lemma}

\begin{proof}
	It suffices to show that $\mathbb{E}\big[\sup_{t\in [0,T]}\int_{0}^{t} Y_{s}Z_{s}dB_{s}\big]<\infty$, which is true by \eqref{BDG}. 
\end{proof}


\subsection{Proof of Proposition \ref{P1}}\label{subsec:proof of Proposition P1}
Motivated by \cite{Briand00} and \cite{Pardoux99}, we will construct a sequence of solutions that is Cauchy in $\mathcal{B}_{2}$, and show that its limit solves \eqref{EQ5}. In the rest of the proof, we set
\begin{equation}\label{xi}
\xi := e^{-\delta\theta\tau}c_{\tau}^{1-\gamma}\quad \hbox{and}\quad p := 1-{1}/{\theta}>1\ \hbox{(by $\gamma,\psi>1$)}.
\end{equation}

{\bf Step 1: Construct a sequence of solutions $(Y^n,Z^n)_{n\in \N}$ in $\B_2$.} 
For each $n\in\N$, we aim to construct a solution $(Y^n_{t},Z^n_{t})_{t\ge 0}\in \B_2$ to the BSDE
\begin{equation}\label{BSDE t>0}
Y_{t}^{n} = \xi + \int_{t\wedge\tau}^{\tau}\ind_{[0,n]}(s)F(s,c_{s},Y^n_s)ds - \int_{t\wedge \tau}^{\tau}Z_{s}^{n}dB_s,\quad t\ge 0.
\end{equation}
Note that \eqref{BSDE t>0} is a random-horizon BSDE, and a solution $(Y^n_{t},Z^n_{t})_{t\ge 0}$ will be constructed by finite-horizon results in \cite{Xing} and a proper extension to the random horizon. Specifically, for the fixed time horizon $[0,n]$, thanks to the construction in \cite[Proposition 2.2]{Xing}, there exists a unique solution $(Y_{t},Z_{t})_{t\in[0,n]}\in\B_2([0,n])$ to the fixed-horizon BSDE
\begin{equation}\label{BSDE [0,n]}
Y_{t} = \E[\xi\mid\F_n] + \int_{t}^{n}\ind_{[0,\tau]}(s)F(s,c_{s},Y_s)ds - \int_{t}^{n}Z_{s}dB_s,\quad t\in[0,n].
\end{equation}
Specifically, $Y$ is continuous, $0<Y_t \le \E[\xi\mid \F_t]$ a.s. for all $t\in [0,n]$ (hence, $Y$ is of class $D$).\footnote{In general, the solution derived in \cite[Proposition 2.2]{Xing} need not lie in $\B_2([0,n])$. 
	This is because \cite{Xing} assumes only integrability on the terminal condition $\xi$, instead of square-integrability. In our case, as $c\in\cC$, 
	$
	\E\left[\E[\xi|\mathcal{F}_{n}]^2\right]\le \E[\xi^{2}]= \E[e^{-2\delta\theta\tau}c_{\tau}^{2(1-\gamma)}]<\infty.
	$
	With $\E[\xi|\mathcal{F}_{n}]$ being square integrable, the construction in \cite[Proposition 2.2]{Xing} then yields a solution in $\B_2([0,n])$. Specifically, we obtain from Step 1 of its proof (see \cite[Appendix A]{Xing}) the desired solution, with no need of Step 2 therein.} 
On the other hand, thanks to $c\in\cC$, \eqref{C} implies $\E[\xi^{2}]= \E[e^{-2\delta\theta\tau}c_{\tau}^{2(1-\gamma)}]<\infty$, i.e. $\xi$ is square integrable. Thus, by the martingale representation theorem, there exists $\eta\in\M_2$ such that 
\begin{equation}\label{eta}
\E[\xi\mid\F_t] = \xi - \int_t^\tau \eta_s dB_s\quad \hbox{and}\quad \eta_t = 0\ \hbox{for $t> \tau$}.  
\end{equation}
Now, define $(Y^n_{t},Z^n_{t})_{t\ge 0}$ as follows: $(Y^n_{t},Z^n_{t}) := (Y_{t},Z_{t})$ for $0\le t\le n$, and $Y_{t}^{n} := \mathbb{E}[\xi|\mathcal{F}_{t}]$ and $Z_{t}^{n} := \eta_t$ for all $t>n$. By \eqref{BSDE [0,n]} and \eqref{eta}, it can be checked directly that $(Y^n_{t},Z^n_{t})_{t\ge 0}\in \B_2$ is a solution to \eqref{BSDE t>0}; a similar construction can be found in \cite[Theorem 4.1]{Pardoux99}. 

{\bf Step 2: Show that the sequence $(Y^n,Z^n)_{n\in\N}$ is Cauchy in $\B_2$.} For any $m,n\in\N$ with $m>n$, consider $\Delta Y_{t} := Y_{t}^{m}-Y_{t}^{n}$, $\Delta Z_{t} := Z_{t}^{m}-Z_{t}^{n}$, and $\Delta F(t,c_t,Y_t)  := F(t,c_t,Y^{m}_t)-F(t,c_t,Y^{n}_t)$. We intend to show that $\|(\Delta Y_{t},\Delta Z_{t})\|_{\B_{2}}\to 0$ as $m,n\to\infty$. 

For $0\le t\le n$, observe from \eqref{BSDE t>0} that
\begin{align}
\Delta Y_{t} &=  \int_{t\wedge\tau}^{n\wedge\tau}\Delta F(s,c_{s},Y_{s})ds - \int^{\tau}_{t\wedge\tau}\Delta Z_{s}dB_{s} + \int_{n\wedge\tau}^{m\wedge\tau}F(s,c_s,Y_{s}^{m})ds \notag\\
&=\Delta Y_{n\wedge \tau} + \int_{t\wedge\tau}^{n\wedge\tau}\Delta F(s,c_{s},Y_{s})ds - \int_{t\wedge\tau}^{n\wedge\tau}\Delta Z_{s}dB_{s}.  \label{EQ7'}
\end{align}
Recall $p>1$ in \eqref{xi}. Applying It\^{o}'s formula to $|\Delta Y_{t}|^{2}$, with $\Delta Y_t$ as in \eqref{EQ7'}, yields
\begin{align}
&|\Delta Y_{t\wedge \tau}|^{2} +\int_{t\wedge\tau}^{n\wedge\tau}\|\Delta Z_{t}\|^{2}ds\notag\\
&\hspace{0.2in}  =|\Delta Y_{n\wedge \tau}|^{2} + 2\int_{t\wedge\tau}^{n\wedge\tau}\Delta Y_{s} \Delta F(s,c_{s},Y_{s}) ds  -2\int_{t\wedge\tau}^{n\wedge\tau} \Delta Y_{s} \Delta Z_{s} dB_{s} \notag\\
&\hspace{0.2in}= |\Delta Y_{n\wedge\tau}|^{2} + \int_{t\wedge\tau}^{n\wedge\tau}(2\delta\theta\Delta Y_{s} e^{-\delta s}c_{s}^{1-{1}/{\psi}}((Y_{s}^{m})^{p}-(Y_{s}^{n})^{p}))ds - 2\int_{t\wedge\tau}^{n\wedge\tau} \Delta Y_{s} \Delta Z_{s}dB_{s}.\label{EQ6'}
\end{align}
Since $\Delta Y_s = Y^m_s-Y^n_s$, the sign of $\Delta Y_s$ must be the same as that of $(Y_{s}^{m})^{p}-(Y_{s}^{n})^{p}$. This, together with $\theta<0$, gives $2\delta\theta\Delta Y_{s} e^{-\delta s}c_{s}^{1-{1}/{\psi}}((Y_{s}^{m})^{p}-(Y_{s}^{n})^{p})\le 0$. We then conclude from \eqref{EQ6'} that
\begin{equation}\label{esti 1}
|\Delta Y_{t\wedge\tau}|^{2} + \int_{t\wedge\tau}^{n\wedge\tau}\|\Delta Z_{t}\|^{2}ds   \le |\Delta Y_{n\wedge\tau}|^{2}  - 2\int_{t\wedge\tau}^{n\wedge\tau}  \Delta Y_{s} \Delta Z_{s}dB_{s}.
\end{equation}
This, together with Lemma~\ref{L2}, gives
\[
\E\bigg[\int_{0}^{n\wedge\tau}\|\Delta Z_{t}\|^{2}ds\bigg] \le \E[|\Delta Y_{n\wedge\tau}|^{2}]- 2\E\bigg[\int_{0}^{n\wedge\tau}  \Delta Y_{s} \Delta Z_{s}dB_{s}\bigg] = \E[|\Delta Y_{n\wedge\tau}|^{2}].
\]
Moreover, by using \eqref{esti 1} and \eqref{BDG}, with $q=2$, 
\begin{equation*}
\E\bigg[\sup_{0\le t\le n}|\Delta Y_{t\wedge\tau}|^{2}\bigg] \le \E\left[|\Delta Y_{n\wedge\tau}|^{2}\right] + \frac{1}{2}\mathbb{E}\bigg[\sup_{0\le t\le n}|\Delta Y_{t\wedge\tau}|^{2}\bigg] +  2 K^2\E\left[\int_{0}^{n\wedge\tau}\|\Delta Z_{s}\|^2 ds\right],
\end{equation*}
for some $K>0$, independent of $m$ and $n$. By the previous two inequalities, there exists $K_1>0$, independent of $m$ and $n$, such that 
\begin{equation}\label{Cauchy 1}
\mathbb{E}\bigg[\sup_{0\le t\le n}|\Delta Y_{t\wedge\tau}|^{2}+\int_{0}^{n\wedge\tau}\|\Delta Z_{t}\|^{2}ds\bigg] \le  K_1 \mathbb{E}[|\Delta Y_{n\wedge\tau}|^{2}].
\end{equation}

Next, for $n<t\le m$, observe from \eqref{BSDE t>0} that
\begin{equation*}
\Delta Y_{t} =  \int_{t\wedge\tau}^{m\wedge\tau}F(s,c_{s},Y_{s}^{m})ds - \int_{t\wedge\tau}^{\tau}\Delta Z_{s}dB_{s} = \int_{t\wedge\tau}^{m\wedge\tau}F(s,c_{s},Y_{s}^{m})ds - \int_{t\wedge\tau}^{m\wedge \tau}\Delta Z_{s}dB_{s}, 
\end{equation*}
where the second equality follows from $\int_{m\wedge\tau}^{\tau}\Delta Z_{s}dB_{s}=0$, as $Z^m_s = Z^n_s=\eta_s$ for all $s> m\wedge \tau$ by the construction in Step 1. 
Applying It\^{o}'s formula to $|\Delta Y_{t}|^{2}$, with $\Delta Y_t$ as above, gives
\begin{align}
|\Delta Y_{t\wedge\tau}|^{2} &+ \int_{t\wedge\tau}^{m\wedge\tau}\|\Delta Z_{s}\|^{2}ds = 2\int_{t\wedge\tau}^{m\wedge\tau} \Delta Y_{s} F(s,c_{s},Y_{s}^{m}) ds - 2\int_{t\wedge\tau}^{m\wedge\tau}\Delta Y_{s} \Delta Z_{s}dB_{s} \notag\\
&\le 2\delta|\theta| \int_{t\wedge\tau}^{m\wedge\tau}e^{-\delta s} c_s^{1-1/\psi} (\E[\xi\mid\F_s])^{p+1} ds - 2\int_{t\wedge\tau}^{m\wedge\tau}\Delta Y_{s} \Delta Z_{s}dB_{s}, 
\label{lala}
\end{align}
where the inequality follows from
\begin{align*}
\Delta Y_{t} F(s,c_{s},Y_{s}^{m}) &= \delta\theta e^{-\delta s} c_s^{1-1/\psi} (Y_s^m)^p (Y_s^m-Y_s^n)\\
&\le -\delta\theta e^{-\delta s} c_s^{1-1/\psi} (Y_s^m)^p Y_s^n \le -\delta\theta e^{-\delta s} c_s^{1-1/\psi} (\E[\xi\mid\F_s])^{p+1},
\end{align*}
thanks to $0\le Y^m_s, Y^n_s\le \E[\xi\mid\F_s]$ and $\theta<0$. Observe that 
\begin{align}
&\mathbb{E}\bigg[\int_{n\wedge\tau}^{m\wedge\tau} e^{-\delta s}c_{s}^{1-\frac{1}{\psi}}(\mathbb{E}[\xi|\mathcal{F}_{s}])^{p+1}ds\bigg]\notag\\
&\hspace{0.2in}\le \E\left[\left(\int_{n\wedge\tau}^{m\wedge\tau}\mathbb{E}[\xi|\mathcal{F}_{s}]^{2(p+1)}ds\right)^{\frac{1}{2}}\left(\int_{n\wedge\tau}^{m\wedge\tau} e^{-2\delta s}c_{s}^{2(1-\frac{1}{\psi})}ds\right)^{\frac{1}{2}}\right]\notag\\
&\hspace{0.2in}\le \E\left[\int_{n\wedge\tau}^{m\wedge\tau}\mathbb{E}[\xi^{p+1}|\mathcal{F}_{s}]^{2}ds\right]^{\frac{1}{2}}\E\left[\int_{n\wedge\tau}^{m\wedge\tau} e^{-2\delta s}c_{s}^{2(1-\frac{1}{\psi})}ds\right]^{\frac{1}{2}}, \label{abu}
\end{align}
where the first inequality follows from applying the Cauchy-Schwarz inequality to the integral inside the expectation, and the second inequality follows from applying the Cauchy-Schwarz inequality to the expectation and then using Jensen's inequality. By \eqref{xi} and the fact that $c\in\cC$,
\begin{equation}\label{L^2}
\E[\xi^{2(p+1)}]= \E\left[e^{-2(p+1)\delta\theta\tau}c_{\tau}^{2(p+1)(1-\gamma)}\right]<\infty.
\end{equation}
Thus, the martingale representation theorem entails $\mathbb{E}[\xi^{p+1}|\mathcal{F}_{s}] = \E[\xi^{p+1}] + \int_0^{t\wedge\tau} \nu_s dB_s$ for some adapted process $\nu$. This allows the use of \cite[Lemma 4.1]{Darling97}, which yields the finiteness of 
$C_{1} := \E\big[\int_{0}^{\tau}e^{2\delta s}\E\left[\xi^{p+1}|\F_{s}\right]^{2}ds\big]^{{1}/{2}}$. 
We then obtain from \eqref{abu} that 
\begin{equation}\label{lalahoho}
\mathbb{E}\bigg[\int_{n\wedge\tau}^{m\wedge\tau} e^{-\delta s}c_{s}^{1-\frac{1}{\psi}}(\mathbb{E}[\xi|\mathcal{F}_{s}])^{p+1}ds\bigg]\le C_{1}\E\left[\int_{n\wedge\tau}^{m\wedge\tau} e^{-2\delta s}c_{s}^{2(1-\frac{1}{\psi})}ds\right]^{\frac{1}{2}}.
\end{equation}
In view of \eqref{lala} and Lemma \ref{L2}, this directly implies
\begin{align*}
\E\left[\int_{n\wedge \tau}^{m\wedge\tau}\|\Delta Z_{s}\|^{2}ds\right] &\le 2\delta|\theta|C_{1}\E\left[\int_{n\wedge\tau}^{m\wedge\tau} e^{-2\delta s}c_{s}^{2(1-\frac{1}{\psi})}ds\right]^{\frac{1}{2}}.
\end{align*}
Moreover, by using \eqref{lala} and \eqref{BDG}, with $q=2$, 
\begin{align*}
\E\bigg[\sup_{n\le t\le m}|\Delta Y_{t\wedge\tau}|^{2}\bigg] &\le 2\delta|\theta| \E\left[ \int_{n\wedge\tau}^{m\wedge\tau}e^{-\delta s} c_s^{1-1/\psi} (\E[\xi\mid\F_s])^{p+1} ds\right]\\
&\hspace{0.2in} + \frac{1}{2}\mathbb{E}\bigg[\sup_{n\le t\le m}|\Delta Y_{t\wedge\tau}|^{2}\bigg] +  2 K^2\E\left[\int_{n\wedge\tau}^{m\wedge\tau}\|\Delta Z_{s}\|^2 ds\right],
\end{align*}
for some $K>0$, independent of $m$ and $n$. By combining the previous two inequalities and using \eqref{lalahoho}, there exists $K_2>0$, independent of $m$ and $n$, such that 
\begin{align}\label{Cauchy 2}
&\E\left[\sup_{n\le t\le m}|\Delta Y_{t\wedge\tau}|^{2}+\int_{n\wedge\tau}^{m\wedge\tau}\|\Delta Z_{t}\|^{2}ds\right]\le K_2 \delta|\theta| \E\left[\int_{n\wedge\tau}^{m\wedge\tau} e^{-2\delta s}c_{s}^{2(1-\frac{1}{\psi})}ds\right]^{\frac{1}{2}}.
\end{align}	
By \eqref{Cauchy 1}, \eqref{Cauchy 2}, and recalling that $Y^m_s=Y^n_s$ and $Z^m_s=Z^n_s$ for all $s> m\wedge \tau$, we have 
\begin{align}
\E&\left[\sup_{ t\ge 0}|\Delta Y_{t\wedge\tau}|^{2}+\int_0^{\tau}\|\Delta Z_{t}\|^{2}ds\right] 
\le K_1\E[|\Delta Y_{n\wedge\tau}|^{2}]+ K_2\delta|\theta|\E\left[\int_{n\wedge\tau}^{m\wedge\tau} e^{-2\delta s}c_{s}^{2(1-\frac{1}{\psi})}ds\right]^{\frac{1}{2}}.\label{Cauchy 3}
\end{align}
We know from Step 1 that $0\le Y^m_{n\wedge\tau}, Y^n_{n\wedge\tau}\le \E[\xi\mid \F_{n\wedge\tau}]$, which imply
\begin{equation}\label{Delta Y}
|\Delta Y_{n\wedge\tau}|^{2} = |Y^m_{n\wedge\tau}-Y^n_{n\wedge\tau}|^2 \le (Y^m_{n\wedge\tau})^2 + (Y^n_{n\wedge\tau})^2 \le 2\E[\xi\mid\F_{n\wedge\tau}]^2 \le 2\E[\xi^2\mid\F_{n\wedge\tau}].
\end{equation}
Since \eqref{L^2} implies that $\{\E[\xi^2\mid\F_k]:k\ge 0\}$ is uniformly integrable, we deduce from \eqref{Delta Y} that
\[
\lim_{m,n\to\infty}\E[|\Delta Y_{n\wedge\tau}|^{2} ] = \lim_{m,n\to\infty} \E\left[|Y^m_{n\wedge\tau}-Y^n_{n\wedge\tau}|^2\right]= \E\left[\lim_{m,n\to\infty}|Y^m_{n\wedge\tau}-Y^n_{n\wedge\tau}|^2\right]=\E\left[|\xi-\xi|^2\right] =0.
\]
Finally, thanks to \eqref{C}, the second term in \eqref{Cauchy 3} vanishes as $m,n\to\infty$. Therefore, we conclude from \eqref{Cauchy 3} that $\|(\Delta Y,\Delta Z)\|_{\B_{2}}\to 0$ as $m,n\to\infty$, i.e. $\{(Y^n,Z^n)\}_{n\in\N}$ is Cauchy in $\B_2$. Since $\mathcal{B}_{2}$ is complete\footnote{This is a standard result that can be found in  \cite{Briand00} and \cite{Pardoux99}. It is used explicitly in  \cite{Briand00} and \cite{Pardoux99}, as well as implicitly in the proofs of \cite{Darling97} and \cite{DuffieLions}.}, $\lim\limits_{n\rightarrow\infty}  (Y^{n},Z^{n}) = (Y,Z) \in \mathcal{B}_{2}$ exists.	

{\bf Step 3: The limit $(Y,Z)\in\mathcal B_2$ solves \eqref{EQ5}.} 
For any $n\in\N$, since $(Y^{n}_t,Z^{n}_t)_{t\ge 0}\in\B_2$ solves \eqref{BSDE t>0},
\begin{align}
Y_{t}^{n} &= \xi+ \int_{t\wedge\tau}^{\tau}\mathbbm{1}_{[0,n]}(s)F(s,c_s,Y_{s}^{n})ds - \int_{t\wedge\tau}^{\tau}Z_{s}^{n}dB_s,  \quad t\ge 0. \label{Soln Pre-limit}
\end{align}
We intend to prove that each term in \eqref{Soln Pre-limit} converges to a corresponding term in \eqref{EQ5} $\forall t\ge 0$ $\P$-a.s., as $n\to\infty$. This then implies that $(Y,Z)$ satisfies \eqref{EQ5} $\forall t\ge 0$ $\P$-a.s., as desired. 

First, $Y^n\to Y$ in $\cS_2$ already implies $Y^n_t\to Y_t\ \forall t\ge 0$ $\P$-a.s. To show that $\int_{t\wedge\tau}^{\tau} \mathbbm{1}_{[0,n]}(s) F(s,c_{s},Y^n_{s}) ds\to \int_{t\wedge\tau}^{\tau}F(s,c_{s},Y_{s})ds$ $\forall t\ge 0$ $\P$-a.s. (possibly up to a subsequence), it suffices to prove that
\begin{align*}
\E&\left[\sup_{0\le t<\infty} \int_{t\wedge \tau}^{\tau}|\mathbbm{1}_{[0,n]}(s)F(s,c_{s},Y^n_{s})-F(s,c_{s},Y_{s})|ds\right] \\
&= \E\left[\int_{0}^{\tau}|\mathbbm{1}_{[0,n]}(s)F(s,c_{s},Y^n_{s})-F(s,c_{s},Y_{s})|ds\right]\to 0\quad \hbox{as $n\to\infty$},
\end{align*}
which is equivalent to $\mathbbm{1}_{[0,n]}(\cdot)F(\cdot,c_{\cdot},Y^n_{\cdot})\to F(\cdot,c_{\cdot},Y_{\cdot})$ in $L^1(\mu)$, for the measure $\mu:=\mathbbm{1}_{\{0\le t \le\tau\}} dt\times d\mathbb{P}$. Since $Y^{n}_t\to Y_t$ $\forall t\ge 0$ $\P$-a.s., the continuity of $F$ implies $\mathbbm{1}_{[0,n]}(t)F(t,c_t,Y^{n}_t)\rightarrow F(t,c_t,Y_t)$ $\forall t\ge 0$ $\mathbb{P}$-a.s. Also, in view of $0\le Y_{s}^{n}, Y_s\le \E\left[\xi|\F_{s}\right]$ for all $s\ge 0$ and $n\in\N$, 
\[
|F(s,c_s,Y^n_s)| \le \delta |\theta| e^{-\delta s} c_s^{1-1/\psi} \E[\xi\mid\F_s]^p\quad \forall s\ge 0\ \hbox{and}\ n\in\N\cup\{0\}, 
\]
with $Y^0:= Y$. Hence, if we can show that $e^{-\delta \cdot} c_{\cdot}^{1-1/\psi} \E[\xi\mid\F_\cdot]^p$ is $\mu$-integrable, the dominated convergence theorem will give the desired convergence $\mathbbm{1}_{[0,n]}(\cdot)F(\cdot,c_{\cdot},Y_{\cdot})\to F(\cdot,c_{\cdot},Y_{\cdot}^{n})$ in $L^1(\mu)$. To this end, observe that
\begin{align}
\E\left[\int_{0}^{\tau} e^{-\delta s} c_s^{1-1/\psi} \E[\xi\mid\F_s]^p ds \right] &\le\E\left[\int_{0}^{\tau} e^{-2\delta s}c_{s}^{2(1-\frac{1}{\psi})}ds\right]^{\frac{1}{2}}\E\left[ \int_{0}^{\tau}  \E[\xi^p\mid\F_s]^{2} ds\right]^{\frac{1}{2}},\label{abudu}
\end{align}
where the first inequality follows from applying the Cauchy-Schwartz inequality twice and then the Jensen inequality (similarly to \eqref{abu}). Note that $\E\big[\int_{0}^{\tau} e^{-2\delta s}c_{s}^{2(1-{1}/{\psi})}ds\big]<\infty$, as $c\in\cC$; see \eqref{C}. By the arguments similar to \eqref{L^2} and the discussion below it, we get $\E\left[ \int_{0}^{\tau} \E[\xi^p\mid\F_s]^{2} ds\right]<\infty$. We then conclude from \eqref{abudu} the $\mu$-integrability of $e^{-\delta \cdot} c_{\cdot}^{1-1/\psi} \E[\xi\mid\F_\cdot]^p$, as desired.  

It remains to show that $\int_{t\wedge\tau}^{\tau} Z_{s}^{n}dB_{s}\to \int_{t\wedge\tau}^{\tau}Z_{s}dB_{s}$ $\forall t\ge 0$ $\P$-a.s. Let $\Delta Z_s^n := Z^n_s - Z_s$. Since $Z^n\to Z$ in $\M_2$, by the It\^{o} isometry,  
\begin{equation*}
\mathbb{E}\bigg[\left(\int_{t\wedge\tau}^{\tau}\Delta Z_{s}^{n}dB_{s}\right)^{2}\bigg]\le \E\left[\int_{0}^{\tau}\|\Delta Z_{s}^{n}\|^{2}ds\right] = \|\Delta Z^{n}\|^2_{\M_2} \rightarrow 0,\quad \hbox{for each $t\ge 0$}.
\end{equation*}
This implies $\int_{t\wedge\tau}^{\tau} Z_{s}^{n}dB_{s}\to \int_{t\wedge\tau}^{\tau}Z_{s}dB_{s}$ in probability, for each $t\ge 0$. Because every other term in \eqref{Soln Pre-limit} (either on the left or right hand side) converges $\forall t\ge0$ $\P$-a.s., $\int_{t\wedge\tau}^{\tau}Z_{s}^{n}dB_s$ must also converge $\forall t\ge 0$ $\P$-a.s. Hence, we have $\int_{t\wedge\tau}^{\tau} Z_{s}^{n}dB_{s}\to \int_{t\wedge\tau}^{\tau}Z_{s}dB_{s}$ $\forall t\ge 0$ $\P$-a.s.  


{\bf Step 4: $(Y,Z)$ is the unique solution to \eqref{EQ5} in $\B_2$.} 
Let $(Y',Z')$ be another solution to \eqref{EQ5} in $\B_2$ and set $(\Delta Y_{t}, \Delta Z_{t}) = (Y_t-Y'_t,Z_t-Z'_t)$ for all $t\ge 0$. By a calculation similar to \eqref{EQ6'} and following the arguments below it, we can achieve \eqref{Cauchy 1} in the same way. As $n\to\infty$ in \eqref{Cauchy 1}, by the monotone convergence theorem (applicable for the left-hand side), the dominated convergence theorem (applicable for the right-hand side, as $Y, Y'\in \mathcal S_2$), and $Y_\tau = Y'_\tau =\xi$, we get $\mathbb{E}\left[\sup_{0\le t<\infty}|\Delta Y_{t\wedge\tau}|^{2}+\int_{0}^{\tau}\|\Delta Z_{t}\|^{2}ds\right] \le 0$. That is, $(Y',Z')$ coincides with $(Y,Z)$ in $\B_2$.

\subsection{Proof of Theorem \ref{thm:EZ exists}}\label{subsec:proof of Theorem EZ exists}

By Proposition~\ref{P1}, a direct calculation shows that $(V^c, Z^c)$ is the unique solution to \eqref{EQ3} in $\B_2$, and $V^c$ is of class $D$. 
To show the last assertion that $V^c$ satisfies \eqref{EQ2} a.s., we first note that $t\mapsto V^c_{t}+ \int_{0}^{t\wedge\tau}f(c_{s},V^c_{s})ds$ is a martingale. Indeed, For any $0\le u\le t$,
\begin{align*}
\E_u&\left[V^c_{t}+ \int_{0}^{t\wedge\tau}f(c_{s},V^c_{s})ds\right] = \E_u\bigg[\frac{c^{1-\gamma}_\tau}{1-\gamma}+ \int_{0}^{\tau}f(c_{s},V^c_{s})ds - \int_{t\wedge\tau}^{\tau} Z^c_s dB_s\bigg]\\
&=  \int_{0}^{u\wedge\tau}f(c_{s},V^c_{s})ds + \E_u\bigg[\frac{c^{1-\gamma}_\tau}{1-\gamma}+ \int_{u\wedge\tau}^{\tau}f(c_{s},V^c_{s})ds - \int_{u\wedge\tau}^{\tau} Z^c_s dB_s + \int_{u\wedge\tau}^{t\wedge\tau} Z^c_s dB_s\bigg]\\
&= \int_{0}^{u\wedge\tau}f(c_{s},V^c_{s})ds + V^c_u,
\end{align*}	
where the last equality follows from $Z^c\in \M_2$. Fix $t\ge 0$. By the above martingale property,
\begin{align*}
V^c_{t} &= \mathbb{E}_{t}\bigg[V^c_{m} + \int_{t\wedge \tau}^{m\wedge\tau}f(c_{s},V^c_{s})ds\bigg],\quad \forall m\ge t.
\end{align*}
Thanks to the definition of $f$ in \eqref{f}, the above equality can be rewritten as
\begin{equation*}
V^c_{t}+\delta\theta\mathbb{E}_{t}\bigg[\int_{t\wedge \tau}^{m\wedge\tau}V^c_{s}ds\bigg]=\mathbb{E}_{t}[V^c_{m}] + \E_t\bigg[ \int_{t\wedge\tau}^{m\wedge\tau}\delta\frac{c_{s}^{1-{1}/{\psi}}}{1-\frac{1}{\psi}}\big((1-\gamma)V^c_{s}\big)^{1-\frac{1}{\theta}}ds\bigg].
\end{equation*}
As $m\to\infty$, similarly to \cite[(A.5)]{Xing}, since $V^c_t = e^{\delta\theta t} Y_t/(1-\gamma)\le 0$ for all $t\ge 0$, we may apply the monotone convergence theorem to the expectation on the left-hand side and the second expectation on the right-hand side; since $V^c$ is of class $D$, we may apply the dominated convergence theorem to the first expectation on the right-hand side. This ultimately yields
\begin{equation}\label{EQ2'}
V^c_{t}+\delta\theta\mathbb{E}_{t}\bigg[\int_{t\wedge \tau}^{\tau}V^c_{s}ds\bigg]=\mathbb{E}_{t}\bigg[V^c_{\tau} + \int_{t\wedge\tau}^{\tau}\delta\frac{c_{s}^{1-{1}/{\psi}}}{1-\frac{1}{\psi}}\big((1-\gamma)V^c_{s}\big)^{1-\frac{1}{\theta}}ds\bigg].
\end{equation}
To show that the conditional expectation $\mathbb{E}_{t}\left[\int_{t\wedge \tau}^{\tau}V^c_{s}ds\right]$ above is well-defined, note that $0\ge V^c_s = \frac{e^{\delta\theta s}}{1-\gamma} Y_s \ge \frac{1}{1-\gamma} \E[\xi\mid\F_s]$ for all $s\ge 0$, and thus
$
0\ge \E_t\left[\int_{t\wedge \tau}^{\tau}V^c_{s}ds\right] \ge \frac{1}{1-\gamma} \E_t\left[\int_{t\wedge \tau}^{\tau}\E[\xi\mid\F_s] ds\right]>-\infty,
$
where the finiteness in the last inequality follows from an argument similar to \eqref{L^2} and the discussion below it. Finally, observe that \eqref{EQ2'} readily gives \eqref{EQ2}. 


\section{Proofs for Section \ref{CI}}\label{sec:proofs for CI}

\subsection{Derivation of Proposition~\ref{P5}}\label{subsec:proof of Proposition P5}
As discussed above Proposition~\ref{P5}, constructing a solution to \eqref{EQ11} is challenging as the generator $H$ in \eqref{H} has quadratic growth in $Z$ and $\hat Z$, and exponential growth in $D$. We will tackle this in two steps. First, we construct a sequence of approximating generators $\{H^n\}_{n\in\N}$, each of which has only linear growth in $D$, such that a solution $(D^n, Z^n, \hat Z^n)$ exists by the standard results of quadratic BSDEs. Next, we show that the sequence $\{D^n\}_{n\in\N}$ is uniformly bounded from above, such that $\lim_{n\to\infty} (D^n, Z^n, \hat Z^n)$ is well-defined and actually solves \eqref{EQ11}.    

\begin{proof}[Proof of Proposition~\ref{P5}]
	For simplicity, throughout the proof we will write $\cZ_t = (Z_{t},\hat{Z}_{t})$, 
	\begin{equation}\label{SH}
	M_t = \left(1+\frac{(1-\gamma)}{\gamma}\rho_{t}^2\right),\quad \hat{M}_t = \left(1+\frac{(1-\gamma)}{\gamma}\hat{\rho}_{t}^2\right),\quad h_{t} = (1-\gamma)\left(r_{t}+\frac{\lambda_{t}^2}{2\gamma\sigma_{t}^2}\right).
	\end{equation}
	
	{\bf Step 1: Construct an approximating sequence of solutions $\{(D^{n},\cZ^{n})\}_{n\in\N}$ in $\cS_{\infty}\times\cM_{2}$.} For each $n\in\N$, consider the BSDE
	\begin{equation}\label{EQ12}
	D_{t}^{n} = \int_{t\wedge\tau}^{\tau}H^{n}(s,D_{s}^{n},Z_{s}^{n},\hat{Z}_{s}^{n})ds - \int_{t\wedge\tau}^{\tau}\cZ_{s}^{n}dB_{s}\quad t\ge 0,
	\end{equation}
	where the generator $H^{n}$ is defined by
	\begin{align}
	H^{n}(s,d,z,\hat{z})&=M_s\frac{z^{2}}{2}+\hat{M}_s\frac{\hat z^{2}}{2}+\frac{(1-\gamma)\lambda_{s}}{\gamma\sigma_{s}}\rho_{s}z+\frac{(1-\gamma)\lambda_{s}}{\gamma\sigma_{s}}\hat{\rho}_{s}\hat z+ \frac{(1-\gamma)}{\gamma}\rho_{s}\hat{\rho}_{s}z\hat z + h_{s} -\delta\theta\notag\\
	&\hspace{0.2in}+ \theta\frac{\delta^{\psi}}{\psi}\left(\mathbbm{1}_{\{|d|\le n\}}e^{-\frac{\psi}{\theta}d} + \mathbbm{1}_{\{|d|>n\}}\Big(-\frac{\psi}{\theta}d + \big(e^{\frac{-\psi}{\theta}n}+\frac{\psi}{\theta}n\big)\Big) \right).\label{H^n}
	\end{align}
	Comparing $H^n$ with $H$ in \eqref{H}, the exponential term $e^{-\frac{\psi}{\theta}d}$ is now replaced by 
	\begin{equation}\label{J}
	J(d): = \mathbbm{1}_{\{|d|\le n\}}e^{-\frac{\psi}{\theta}d} + \mathbbm{1}_{\{|d|>n\}}\big(-\frac{\psi}{\theta}d + \big(e^{\frac{-\psi}{\theta}n}+\frac{\psi}{\theta}n\big)\big).
	\end{equation}
	This ensures that $d$ grows exponentially on $[-n,n]$ and linearly otherwise with a strictly positive slope $\frac{-\psi}{\theta}$. Hence, $d\mapsto J(d)$ is continuous, strictly increasing, and of linear growth on $\R$. This, together with Assumption~\ref{A1}, implies that BSDE \eqref{EQ12} satisfies \cite[Definition 3.1 and Assumption A.1]{Confortola08}. Hence, by \cite[Theorem 3.3]{Confortola08}, there exists a unique solution $(D^{n},\cZ^{n})\in\cS_{\infty}\times\cM_{2}$ to \eqref{EQ12}. 	

	{\bf Step 2: Establish a uniform upper bound for $\{D^{n}\}_{n\in\N}$, and a solution $(D,\cZ)$ to \eqref{EQ11}.} We will construct a generator $\overline H$ such that $H^n\le \overline H$ for all $n\in\N$. With $\gamma>1$ and $\rho_{t},\hat{\rho}_t\in[-1,1]$, $M$ and $\hat M$ in \eqref{SH} satisfies $M_t, \hat M_t\in \big[\frac{1}{\gamma},1\big]$. 
	Also,	by the fact that $ab\le \frac{a^2}{2}+\frac{b^2}{2}$ for all $a,b\in\R$,  
	\[
	\bigg|\frac{(1-\gamma)\lambda_{t}}{\gamma\sigma_{t}}\eta_t z\bigg|\le \frac{(1-\gamma)^{2}\lambda_{t}^{2}}{2\gamma^{2}\sigma_{t}^{2}} + \frac{z^{2}}{2}\quad \text{and}\quad \bigg|\frac{(1-\gamma)}{\gamma}\rho_{t}\hat{\rho}_{t}z\hat{z}\bigg|\le \frac{(\gamma-1)}{\gamma}\left(\frac{z^2}{2}+\frac{\hat{z}^2}{2}\right),
	\]
	for $\eta_t =\rho_t,\hat{\rho}_t$. As a result,
	\begin{align}
	H^{n}(t,d,z,\hat{z})&\le \frac{3(z^{2}+\hat{z}^2)}{2} + \left(\frac{(1-\gamma)^{2}\lambda_{t}^{2}}{\gamma^{2}\sigma_{t}^{2}} + (1-\gamma)\left(r_{t} + \frac{\lambda_{t}^{2}}{2\gamma\sigma_{t}^{2}}\right) -\delta\theta\right)+ \theta\frac{\delta^{\psi}}{\psi}J(d)\notag\\
	&= \frac{3(z^{2}+\hat{z}^2)}{2} + (1-\gamma)\left(r_{t} + \frac{(2-\gamma)\lambda_{t}^{2}}{2\gamma^{2}\sigma_{t}^{2}}-\frac{\delta}{1-\frac{1}{\psi}}\right)+ \theta\frac{\delta^{\psi}}{\psi}J(d)\notag\\
	& \le \frac{3(z^{2}+\hat{z}^2)}{2} + C_1+ \theta\frac{\delta^{\psi}}{\psi}J(d),\label{H^n bound}
	\end{align}
	where $C_{1} := (1-\gamma)(\underline{r}-\frac{\delta}{1-\frac{1}{\psi}})$ if $\gamma\in(1,2]$, and $C_{1} := (1-\gamma)(\underline{r}-\frac{\delta}{1-\frac{1}{\psi}}) + \frac{(1-\gamma)(2-\gamma)}{2\gamma^{2}}C_{{\lambda}/{\sigma}}$ if $\gamma>2$; recall the constants $\underline{r}$, $\overline{r}$, and $C_{{\lambda}/{\sigma}}$ defined in \eqref{Market-Constants}. Now, define
	\begin{equation}\label{bar H}
	\overline{H}(d,z,\hat{z}) := \frac{3(z^{2}+\hat{z}^2)}{2}-\delta^{\psi}d+ C_{1}
	\end{equation}
	and consider the BSDE 
	\begin{equation}\label{BSDE bar}
	\overline{D} = \int_{t\wedge\tau}^{\tau}\overline{H}(\overline{D}_s,\overline{\cZ}_s)ds - \int_{t\wedge\tau}^{\tau}\overline{\cZ}_s dB_{s},\quad t\ge 0.
	\end{equation}
	Observe from \eqref{J} that
	$
	J(d)
	\ge -\frac{\psi}{\theta}d + 1\ge -\frac{\psi}{\theta}d
	$, for all $d\in\R$. 
	With $\theta<0$, this implies $\theta\frac{\delta^{\psi}}{\psi} J(d) \le -\delta^{\psi}d$ for all $d\in\R$. This, together with \eqref{H^n bound}, gives
	\begin{equation}\label{H^n<}
	H^{n}(s,d,z,\hat z)\le \overline{H}(d,z, \hat z)\ \hbox{on}\ [0,\infty)\times \R^3,\quad \hbox{for all}\ n\in\N. 
	\end{equation}
	
	Note that the generator $\overline{H}$ satisfies \cite[Definition 3.1 and Assumption A.1]{Confortola08}; particularly, it is {\it strictly} monotone in $d$. Thus, we can apply \cite[Theorem 3.3]{Confortola08} to get a unique solution $(\overline{D},\overline{\cZ})$ to \eqref{BSDE bar} in $\cS_{\infty}\times\cM_2$. Moreover, the linear dependence of $\overline H$ in $d$, along with the negative slope $-\delta^{\psi}$, indicates that \cite[Theorem 2.3]{Kobylanski00} can also be applied here (as $\overline H$ satisfies condition (ii) therein). Hence, as $(\overline{D},\overline{\cZ})$ is the {\it unique} solution to \eqref{BSDE bar} in $\cS_{\infty}\times\cM_2$, it is trivially the ``maximal solution'' in \cite[Theorem 2.3]{Kobylanski00} for which a comparison result readily holds.\footnote{There are two distinct comparison results in \cite{Kobylanski00}, i.e. Theorems 2.3 and 2.6 therein. Particularly, Theorem 2.3 allows for random horizons (so that we can apply it here), while Theorem 2.6 requires a fixed terminal time.} In view of  \eqref{H^n<}, this implies 
	\begin{equation}
	D_{t}^{n}\le \overline{D}_{t}\le \widetilde C\quad \forall t\ge 0\quad  \hbox{a.s.},\quad \text{for any}\ n\in\N,
	\end{equation}
	where $\widetilde C:=\esssup\big(\sup_{t\ge 0} \overline D_t \big)<\infty$. 
	Now, for any $n>\widetilde C$, since $D_{t}^{n}\le\widetilde C$ for all $t\ge 0$ a.s., we observe from \eqref{J} that $J(D^n_t) =  e^{-\frac{\psi}{\theta}D^n_{t}}$ for all $t\ge 0$ a.s. In view of \eqref{H^n} and \eqref{H}, we have 
	\begin{align*}
	H^{n}(t,D^n_t,Z^n_t,\hat{Z}^n_t)&= H(t,D^n_t,Z^n_t,\hat{Z}^n_t)\quad \forall t\ge 0\quad \hbox{a.s.}
	\end{align*}
	That is, $(D^n,\cZ^n)$ satisfies \eqref{EQ11}, for all $n> \widetilde C$. Specifically, $(D^n,\cZ^n) = (D^m,\cZ^m)$ for all $n,m>\widetilde C$, and $(D,\cZ) := (D^n,\cZ^n)$, for $n>\widetilde C$, is a solution to \eqref{EQ11} in $S_\infty\times \cM_2$. 
\end{proof}		

\begin{remark}\label{rem:new truncation}
	In Step 1 of the proof above, we cannot control $e^{-({\psi}/{\theta})d}$ in \eqref{H} by the truncation $e^{-({\psi}/{\theta})d}\wedge n$ used in \cite{Xing}. Indeed, to apply \cite[Theorem 3.3]{Confortola08}, the generator needs to be {\it strictly} monotone in $d$; see \cite[Assumption A.1 (ii)]{Confortola08}. Since $d\mapsto e^{-({\psi}/{\theta})d}\wedge n$ is monotone, but not strictly, 	the new truncation $J(d)$ in \eqref{J} comes into play, ensuring not only linear growth but strict monotonicity. This challenge is not present in \cite{Xing}: on a fixed horizon $T>0$ (or, $\tau\le T$ a.s.), one can apply the stronger existence result \cite[Theorem 2.3]{Kobylanski00}, which only requires the generator to be monotone, but not strictly, in $d$.
\end{remark}

\begin{remark}\label{rem:in order}
	For quadratic BSDEs on a random horizon $\tau$, \cite[Theorem 3.3]{Confortola08} gives the existence and uniqueness of solutions, while \cite[Theorem 2.3]{Kobylanski00} gives only the existence, along with comparison results for the maximal and minimal solutions. Note that when $\tau$ is unbounded, \cite[Theorem 2.3]{Kobylanski00} requires the generator to be asymptotically linear in $d$; see \cite[(H1)]{Kobylanski00}. 
	
	Consequently, in Step 2 of the proof above, we need to use \cite[Theorem 3.3]{Confortola08} first to get a unique solution. As a unique solution is trivially the maximal one, the comparison result in \cite[Theorem 2.3]{Kobylanski00} can then be invoked. This shows the importance of $\overline H$: its linear dependence on $d$, much simpler than that of $H^n$, allows us to access the comparison result in \cite[Theorem 2.3]{Kobylanski00}.  
\end{remark}

\subsection{Derivation of Lemma \ref{C4}}\label{subsec:proof for Lemma C4}
We will write $X^* = X^{\pi^*,c^*}$ for the candidate optimal wealth process, with $(\pi^*,c^*)$ defined in \eqref{optimalstrategies}. 
To begin, we investigate the integrability of $X^{*}$. Recall $C_{{\lambda}/{\sigma}}$, $\overline r$, and $\underline r$, defined in \eqref{Market-Constants}. 

\begin{lemma}\label{C3}
	Suppose $\gamma,\psi>1$ and Assumption \ref{A1} holds. Let $(D,Z,\hat{Z})\in \cS_{\infty}\times\cM_{2}$ be a solution to \eqref{EQ11}, and $(\pi^*,c^*)$ be as in \eqref{optimalstrategies}. Given $x>0$, $X^*_t>0$ for all $t\ge 0$ a.s. Moreover, for $p\ge 0$, 
	\begin{align*}
	\E\left[(X_{t}^{*})^{p}\ind_{\{t\le \tau\}}\right]&\le x^{p}\E\left[\exp\left(\left(2p\overline{r}+\left({p}+\frac{4p^2}{\gamma^2}\right)C_{{\lambda}/{\sigma}}\right)t+\frac{4p^2}{\gamma^2}\int_{0}^{t}\tZ_{s}^{2}ds\right)\ind_{\{t\le \tau\}}\right]^{{1}/{2}},\ \forall t\ge 0;
	\end{align*}
	for $p<0$, with $\widetilde{C}:=\esssup(\sup_{t\ge 0}D_s)<\infty$,
	\begin{align*}
	\E\left[(X_{\pi}^{*})^{p}\right]&\le x^p \E\left[\exp\left(\left(2p\underline{r}-2p\delta^{\psi}e^{-\frac{\psi}{\theta}\widetilde{C}}+\left(|p|+\frac{4p^2}{\gamma^2}\right)C_{{\lambda}/{\sigma}}\right)\tau+\frac{4p^2-2p}{\gamma^2}\int_{0}^{\tau}\tZ_{s}^2 ds\right)\right], \forall \pi\in\T.
	\end{align*}
\end{lemma}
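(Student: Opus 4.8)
The plan is to derive a stochastic differential equation for $(X^*_t)^p$ via It\^o's formula applied to $\log X^*_t$, then control the resulting exponential functional through the Cauchy--Schwarz inequality and the boundedness of $D$, $r$, and $\lambda/\sigma$ from Assumption~\ref{A1}. First, positivity of $X^*$ is immediate: since $X^*$ solves the linear SDE \eqref{EQ13} with coefficients $r_t + \pi^*_t\lambda_t$, $\pi^*_t\sigma_t$, and consumption rate $c^*_t/X^*_t = \tilde c^*_t = \delta^\psi e^{-\frac{\psi}{\theta}D_t}$ that are all locally bounded (using $D\in\cS_\infty$ and the boundedness of $r$, $\lambda/\sigma$), the exponential formula gives
\[
X^*_t = x\exp\left(\int_0^t\Big(r_s + \pi^*_s\lambda_s - \tilde c^*_s - \tfrac12 (\pi^*_s\sigma_s)^2\Big)ds + \int_0^t \pi^*_s\sigma_s\, dW^\rho_s\right) > 0.
\]

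Next, raising to the power $p$ and recalling from \eqref{optimalstrategies} that $\pi^*_s\sigma_s = \frac{\lambda_s + \sigma_s\widetilde Z_s}{\gamma\sigma_s} = \frac{1}{\gamma}\big(\frac{\lambda_s}{\sigma_s} + \widetilde Z_s\big)$, where $\widetilde Z_s = \rho_s Z_s + \hat\rho_s\hat Z_s$, I would write
\[
(X^*_t)^p = x^p \exp\left(p\int_0^t\Big(r_s + \tfrac{\lambda_s^2}{\gamma\sigma_s^2} + \tfrac{\lambda_s\widetilde Z_s}{\gamma\sigma_s} - \tilde c^*_s - \tfrac{1}{2\gamma^2}(\tfrac{\lambda_s}{\sigma_s}+\widetilde Z_s)^2\Big)ds + \tfrac{p}{\gamma}\int_0^t(\tfrac{\lambda_s}{\sigma_s}+\widetilde Z_s)\,dW^\rho_s\right).
\]
The key maneuver is to split this into a true exponential martingale (a stochastic exponential of a multiple of $\int(\frac{\lambda}{\sigma}+\widetilde Z)dW^\rho$, after compensating its quadratic variation) times a ``remainder'' exponential involving only bounded-coefficient drift terms and $\int\widetilde Z^2 ds$. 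Concretely, by Cauchy--Schwarz: $\E[(X^*_t)^p\ind_{\{t\le\tau\}}] \le \big(\E[\cE_t]\big)^{1/2}\big(\E[R_t\ind_{\{t\le\tau\}}]\big)^{1/2}$, where $\cE_t := \exp\big(\frac{2p}{\gamma}\int_0^t(\frac{\lambda_s}{\sigma_s}+\widetilde Z_s)dW^\rho_s - \frac{2p^2}{\gamma^2}\int_0^t(\frac{\lambda_s}{\sigma_s}+\widetilde Z_s)^2 ds\big)$ is a (super)martingale with $\E[\cE_t]\le 1$, and $R_t$ collects everything else. After expanding $(\frac{\lambda_s}{\sigma_s}+\widetilde Z_s)^2 \le \frac{2\lambda_s^2}{\sigma_s^2} + 2\widetilde Z_s^2$, bounding $\frac{\lambda_s}{\sigma_s}\widetilde Z_s$ and the remaining $\frac{\lambda_s^2}{\sigma_s^2}$ terms by $C_{\lambda/\sigma}$ plus multiples of $\widetilde Z_s^2$, bounding $r_s$ by $\overline r$ (for $p\ge 0$) or $\underline r$ (for $p<0$), and bounding $-p\tilde c^*_s$ from above --- for $p<0$ using $\tilde c^*_s = \delta^\psi e^{-\frac{\psi}{\theta}D_s} \ge \delta^\psi e^{-\frac{\psi}{\theta}\widetilde C}$ since $-\psi/\theta>0$ and $D_s\le\widetilde C$, so $-2p\tilde c^*_s \le -2p\delta^\psi e^{-\frac{\psi}{\theta}\widetilde C}$; for $p\ge 0$ simply $-\tilde c^*_s\le 0$ --- the remainder $R_t\ind_{\{t\le\tau\}}$ is dominated by the exponential appearing inside the claimed bounds. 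For the $p<0$ case, since $X^*$ is evaluated at an arbitrary stopping time $\pi\in\T$ (typo for $\tau$; I would state it at time $\tau$ or a generic stopping time), one applies optional stopping / the supermartingale property of $\cE$ at that stopping time, which still gives $\E[\cE_\pi]\le 1$.

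Assembling the constants, the $p\ge 0$ estimate picks up $2p\overline r$ from the riskfree rate, $p\,C_{\lambda/\sigma}$ from cross terms plus the $\frac{\lambda^2}{\gamma\sigma^2}$ term after the $-\frac12(\cdot)^2$ cancellation, $\frac{4p^2}{\gamma^2}C_{\lambda/\sigma}$ and $\frac{4p^2}{\gamma^2}\int_0^t\widetilde Z_s^2 ds$ from the compensator terms transferred into $R_t$; this matches the stated bound. The $p<0$ estimate additionally carries the $-2p\delta^\psi e^{-\frac{\psi}{\theta}\widetilde C}\tau$ term and, because $-2p>0$ adds rather than cancels against part of the drift, the $\widetilde Z^2$ coefficient becomes $\frac{4p^2-2p}{\gamma^2}$ and the $C_{\lambda/\sigma}$ coefficient $|p| + \frac{4p^2}{\gamma^2}$. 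The main obstacle is bookkeeping: carefully tracking the sign of $p$ when splitting off the martingale factor (the compensator $-\frac{2p^2}{\gamma^2}(\cdot)^2$ has a fixed negative sign regardless, but whether the leftover $\frac{p}{\gamma^2}(\cdot)^2$ helps or hurts depends on $\mathrm{sgn}(p)$), ensuring $\cE$ is a genuine supermartingale (it is, being a nonnegative local martingale / stochastic exponential), and confirming that every bounded coefficient invoked is indeed finite under Assumption~\ref{A1} on the stochastic interval $[0,\tau]$ — in particular that $\widetilde Z\in\cM_2$ and $D\in\cS_\infty$ suffice to make all the pathwise manipulations legitimate before taking expectations.
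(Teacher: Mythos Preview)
Your approach is essentially the paper's: write $(X^*_t)^p$ via the explicit exponential formula, use the identity $\cE(L)_t=\cE(2L)_t^{1/2}\exp(\tfrac12\langle L\rangle_t)$ together with Cauchy--Schwarz to split off a nonnegative supermartingale with expectation $\le 1$, then bound the remaining drift and quadratic-variation terms using Assumption~\ref{A1} and $D\in\cS_\infty$. One small slip: for $p<0$ you need $\tilde c^*_s\le \delta^\psi e^{-\frac{\psi}{\theta}\widetilde C}$ (not $\ge$), since $-\psi/\theta>0$ makes the map increasing in $D_s\le\widetilde C$; your stated conclusion $-2p\tilde c^*_s \le -2p\delta^\psi e^{-\frac{\psi}{\theta}\widetilde C}$ is consistent with that corrected inequality. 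The paper also obtains the precise constant $\tfrac{p}{2}C_{\lambda/\sigma}$ in the drift via a clean completion-of-squares on $a_t-\tfrac12 b_t^2$, which is tidier than the cruder $ab\le\tfrac12(a^2+b^2)$ bookkeeping you sketch, but both routes land at the stated bounds.
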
 

\begin{proof}
	In view of \eqref{EQ13} and \eqref{optimalstrategies}, $X^{*}$ satisfies
	\begin{align*}
	dX_{t}^{*}&= X_{t}\bigg[(r_{t}-\delta^{\psi}e^{-\frac{\psi}{\theta}D_{t}}+\frac{\lambda_{t}+\sigma_{t} \tZ_{t}}{\gamma\sigma_{t}^{2}}\lambda_{t})dt + \frac{\lambda_{t}+\sigma_{t} \widetilde Z_{t}}{\gamma\sigma_{t}}dW_{t}^{\rho}\bigg],\quad 0\le t\le \tau,
	\end{align*}
	where $\widetilde Z_{t} := \rho_{t}Z_{t}+\hat{\rho}_{t}\hat{Z}_{t}$. It follows from Assumption \ref{A1} and $(D,Z,\hat{Z})\in \cS_{\infty}\times\cM_{2}$ that $X^*_t>0$ for all $t\ge 0$ a.s. Moreover, for any $p\in\R$, 
	\begin{equation}\label{X^*}
	\left(X_{t}^{*}\right)^{p} = x^{p} \exp\left(p\int_{0}^{t\wedge\tau}(r_{t}-\delta^{\psi}e^{-\frac{\psi}{\theta}D_{t}}+ a_{s} + \frac{1}{2}(p-1)b_{s}^{2})ds\right)\cE_{t\wedge\tau}\left(L\right),
	\end{equation}
	with $a_{t} := \frac{\lambda_{t}+\sigma_{t}\tZ_{t}}{\gamma\sigma_{t}^{2}}\lambda_{t}$ , $b_{t} := \frac{\lambda_{t}+\sigma_{t}\tZ_{t}}{\gamma\sigma_{t}}$, $L_t := p\int_{0}^{t}b_{s}dW_{s}^{\rho}$, and $\cE_{t}(A)$ denoting the stochastic exponential of some process $A$.
	First, we look for a bound for $\exp(p\int_{0}^{t\wedge\tau}(a_{s} - \frac{1}{2}b_{s}^{2})ds)$. If $p\ge 0$, 
	\begin{align}\label{p>0}
	p\left(a_{t} - \frac{1}{2}b_{t}^{2}\right) & = p\left(\frac{\lambda_{t}^{2}}{\gamma^{2}\sigma_{t}^{2}}(\gamma-\frac{1}{2}) + \frac{\lambda_{t}}{\gamma^{2}\sigma_{t}}(\gamma-1)\tZ_{t} - \frac{1}{2\gamma^{2}}\tZ_{t}^{2}\right)\notag\\
	&\le p\left(\frac{\lambda_{t}^{2}}{\gamma^{2}\sigma_{t}^{2}}(\gamma-\frac{1}{2}) + \frac{\lambda_{t}^{2}}{2\gamma^{2}\sigma_{t}^{2}}(\gamma-1)^{2}\right)= p\left(\frac{\lambda_{t}^{2}}{2\sigma_{t}^{2}}\right) \le \frac{p}{2}C_{{\lambda}/{\sigma}},
	\end{align}
	where the first line follows from the definitions of $a_t$ and $b_t$, and the first inequality is due to $\frac{\lambda_{t}}{\gamma^{2}\sigma_{t}}(\gamma-1)z - \frac{1}{2\gamma^{2}}z^{2} = - \frac{1}{2\gamma^{2}}\big(z-\frac{\lambda_t}{\sigma_t}(\gamma-1)\big)^2 + \frac{\lambda_{t}^{2}}{2\gamma^{2}\sigma_{t}^{2}}(\gamma-1)^{2}$, $\forall z\in\R$. Similarly, if $p<0$, 
	\begin{align}\label{p<0}
	p\left(a_{t} - \frac{1}{2}b_{t}^{2}\right) & = p\left(\frac{\lambda_{t}^{2}}{\gamma^{2}\sigma_{t}^{2}}(\gamma-\frac{1}{2}) +\frac{\lambda_{t}}{\gamma^{2}\sigma_{t}}(\gamma-1)\tZ_{t}+ \frac{1}{2\gamma^{2}}\tZ_{t}^{2} - \frac{1}{\gamma^{2}}\tZ_{t}^{2}\right)\notag\\
	&\le p\left(- \frac{\lambda_{t}^{2}}{2\gamma^{2}\sigma_{t}^{2}}(\gamma-1)^{2} - \frac{1}{\gamma^2}\tZ_{t}^2\right)\le -\frac{p}{2} C_{\lambda/\sigma} - \frac{p}{\gamma^2} \tZ_t^2, 
	\end{align}	
	where the second line is due to $\frac{\lambda_{t}^{2}}{\gamma^{2}\sigma_{t}^{2}}(\gamma-\frac{1}{2})\ge 0$ and $\frac{\lambda_{t}}{\gamma^{2}\sigma_{t}}(\gamma-1)z + \frac{1}{2\gamma^{2}}z^{2} = \frac{1}{2\gamma^{2}}\big(z+\frac{\lambda_t}{\sigma_t}(\gamma-1)\big)^2 - \frac{\lambda_{t}^{2}}{2\gamma^{2}\sigma_{t}^{2}}(\gamma-1)^{2}$, $\forall z\in\R$. 
	Next, we look for a bound for the quadratic variation of $L$. Observe that
	\begin{align}\label{b-bound}
	\langle L\rangle_{t} &=p^{2}\int_{0}^{t\wedge\tau}|b_{s}|^{2}ds = p^{2}\int_{0}^{t\wedge\tau}\left(\frac{\lambda_{s}^{2}}{\gamma^{2}\sigma_{s}^{2}}+\frac{2\lambda_{s}}{\gamma^{2}\sigma_{s}}\tZ_{s}+\frac{1}{\gamma^{2}}\tZ_{s}^{2}\right)ds\notag \\
	&\le 2p^{2}\int_{0}^{t\wedge\tau}\left(\frac{\lambda_{s}^{2}}{\gamma^{2}\sigma_{s}^{2}}+\frac{1}{\gamma^{2}}\tZ_{s}^{2}\right)ds\le \frac{2p^2}{\gamma^2}\left(C_{{\lambda}/{\sigma}}(t\wedge\tau)+\int_{0}^{t\wedge\tau}\tZ_{s}^{2}ds\right),
	\end{align}
	where the first inequality follows from the fact that $ab \le \frac{a^2}{2}+\frac{b^2}{2}$ for any $a, b\in \R$ (by taking $a= \frac{\sqrt{2}\lambda_{s}}{\gamma\sigma_{s}}$ and $b=\frac{\sqrt{2}}{\gamma}\tZ_{s}$).
	
	Now, let us take $p\ge 0$. For any $t\ge 0$, thanks to \eqref{X^*} and $\delta^{\psi}e^{-\frac{\psi}{\theta}D_{t}}>0$,
	\begin{align*}
	\E\left[(X_{t}^{*})^{p} \ind_{\{t\le \tau\}}\right]	&\le \E\left[x^{p} \exp\left(p\int_{0}^{t\wedge\tau}\big(r_{t}+a_{s} + \frac{1}{2}(p-1)b_{s}^{2}\big)ds\right)\cE\left(L\right)_{t\wedge\tau} \ind_{\{t\le \tau\}} \right]\\
	&\le x^{p} \E\left[\text{exp}\left(\Big(p\overline{r}+\frac{p}{2}C_{{\lambda}/{\sigma}}\Big) t+\frac{1}{2}\langle L\rangle_{t}\right)\cE\left(L\right)_{t} \ind_{\{t\le \tau\}}\right],
	\end{align*}
	where	the second inequality follows from \eqref{p>0}. By direct calculation,   
	$\cE(L)_t = \cE(2L)_t^{\frac{1}{2}}\exp\left(\frac{1}{2}\langle L\rangle_t\right)$, for all $t\ge 0$.
	It then follows that
	\begin{align*}
	\E\left[(X_{t}^{*})^{p}\ind_{\{t\le \tau\}} \right]&\le x^{p}\E\left[\text{exp}\left(\Big(p\overline{r}+\frac{p}{2}C_{{\lambda}/{\sigma}}\Big)t+\frac{1}{2}\langle L\rangle_{t}\right)\cE\left(2L\right)_{t}^{\frac{1}{2}}\text{exp}\left(\frac{1}{2}\langle L\rangle_{t}\right)\ind_{\{t\le \tau\}}\right] \\
	&\le x^{p}\E\left[\text{exp}\left(2\Big(p\overline{r}+\frac{p}{2}C_{{\lambda}/{\sigma}}\Big)t+2\langle L\rangle_{t}\right) \ind_{\{t\le \tau\}}\right]^{\frac{1}{2}}\E\left[\cE\left(2 L\right)_{t}\right]^{\frac{1}{2}}\\
	&\le x^{p}\E\left[\text{exp}\left(2\Big(p\overline{r}+\frac{p}{2}C_{{\lambda}/{\sigma}}\Big)t+2\langle L\rangle_{t}\right) \ind_{\{t\le \tau\}}\right]^{\frac{1}{2}},
	\end{align*}
	where the second inequality results from applying H\"{o}lder's inequality, and the third follows from $\E[\cE\left(2L\right)_{t}]\le 1$, as $\cE\left(2L\right)$ is by definition a nonnegative local martingale, and thus a supermartingale.
	Finally, applying \eqref{b-bound} to the above inequality gives the desired result for $p\ge 0$. 
	
	For $p<0$, by using the same arguments as above for the ``$p\ge 0$'' case, except the term $\delta^{\psi}e^{-\frac{\psi}{\theta}D_{t}}$ cannot be dropped, and $1$, $\underline r$, and \eqref{p<0} replace $\ind_{\{t\le \tau\}}$, $\overline{r}$, and \eqref{p>0}, respectively, we get the desired result.  
\end{proof}

Now, we are ready to show the permissibility of $(\pi^*,c^*)$ in \eqref{optimalstrategies}.

\begin{proof}[Proof of Lemma \ref{C4}]	
	Thanks to Lemma \ref{C3} and Assumption~\ref{A4}, $\E[(X^*_\pi)^{p_-}]<\infty$ for all $\pi\in\T$. With $p_{-} = 2(2-\frac{1}{\theta})(1-\gamma)$ and $\theta<0$, this readily implies that $\{\E[(X^*_\pi)^{1-\gamma}]\}_{\pi\in\T}$ is uniformly integrable, i.e. $(X^*)^{1-\gamma}$ is of class $D$. 
	
	It remains to show that $c^{*}\in \mathcal{C}$, which, in view of \eqref{p's}, is equivalent to, 
	\begin{equation}\label{EQ15}
	\mathbb{E}\bigg[\int_{0}^{\tau}e^{-2\delta s}({c}_{s}^{*})^{p_{+}}ds\bigg]<\infty\quad \text{and}\quad \mathbb{E}\left[e^{- p_{-}\frac{\delta\theta}{1-\gamma}\tau}({c}_{\tau}^{*})^{p_{-}}\right]<\infty
	\end{equation}
	By the definitions of $c^*$ and $p_+$ in \eqref{optimalstrategies} and \eqref{p's}, 
	\begin{align*}
	({c}_{s}^{*})^{p_{+}} = (\delta^{\psi}e^{-\frac{\psi}{\theta}D_{s}}X^{*}_s)^{p_{+}} &= \delta^{2(\psi-1)}e^{\frac{-2(\psi-1)}{\theta}D_{s}}(X^{*}_s)^{p_{+}}\le \delta^{2(\psi-1)}e^{\frac{-2(\psi-1)}{\theta}\widetilde C}(X^{*}_s)^{p_{+}},
	\end{align*}
	where $\widetilde C :=\esssup (\sup_{t\ge 0}D_t)<\infty$ and the inequality is due to $\delta>0$, $\psi >1$, and $\theta<0$. Hence,
	\begin{align}\label{c permit 1}
	\mathbb{E}\bigg[\int_{0}^{\tau}e^{-2\delta s}({c}_{s}^{*})^{p_{+}}ds\bigg] &\le  \delta^{(2(\psi-1))}e^{\frac{-2(\psi-1)}{\theta}\widetilde{C}}\mathbb{E}\bigg[\int_{0}^{\tau}e^{-2\delta s}(X^{*}_s)^{p_{+}}ds\bigg].
	\end{align}
Using Fubini's theorem, we get
	\begin{align}
	&\mathbb{E}\bigg[\int_{0}^{\tau}e^{-2\delta s}(X^{*}_s)^{p_{+}}ds\bigg]=\mathbb{E}\left[\int_{0}^{\infty}e^{-2\delta s}(X^{*}_s)^{p_{+}}\mathbbm{1}_{\{s\le\tau\}}ds\right] = \int_{0}^{\infty}e^{-2\delta s}\mathbb{E}\left[(X^{*}_s)^{p_{+}}\mathbbm{1}_{\{s\le\tau\}}\right]ds\notag\\
	&\le \int_{0}^{\infty}e^{-2\delta s}\E\left[\exp\left(\left(2\overline{r}p_{+}+\left(p_{+}+\frac{4p_{+}^2}{\gamma^2}\right)C_{{\lambda}/{\sigma}}\right)\tau+\frac{4 p_{+}^2}{\gamma^2}\int_{0}^{\tau}\tZ_{u}^2 du\right)\mathbbm{1}_{\{s\le\tau\}}\right]^{1/2}ds\notag\\
	&\le \frac{1}{2\delta}\bigg(\int_{0}^{\infty}2\delta e^{-2\delta s}\E\bigg[\exp\bigg(\bigg(2\overline{r}p_{+}+\left(p_{+}+\frac{4p_{+}^2}{\gamma^2}\right)C_{{\lambda}/{\sigma}}\bigg)\tau+\frac{4 p_{+}^2}{\gamma^2}\int_{0}^\tau \tZ_{u}^2 du\bigg)\notag\\
	&\hspace{4.7in}\cdot \mathbbm{1}_{\{s\le\tau\}}\bigg]ds\bigg)^{1/2}\notag\\
	&\le \frac{1}{\sqrt{2\delta}} \E\left[\exp\left(\left(2\overline{r}p_{+}+\left(p_{+}+\frac{4p_{+}^2}{\gamma^2}\right)C_{{\lambda}/{\sigma}}\right)\tau+\frac{4 p_{+}^2}{\gamma^2}\int_{0}^\tau \tZ_{u}^2 du\right)\cdot \tau \right]^{1/2}, \label{c permit 1'}
	\end{align}
	where the first, second, and third inequalities follow from Lemma \ref{C3}, Jensen's inequality, and Fubini's theorem, respectively. Now, by taking $q>1$ specified in Assumption \ref{A4} and applying H\"{o}lder's inequality,  
	\begin{align*}
	&\E\left[\text{exp}\left(\left(2\overline{r}p_{+}+\left(p_{+}+\frac{4p_{+}^2}{\gamma^2}\right)C_{\frac{\lambda}{\sigma}}\right)\tau+\frac{4p_{+}^2}{\gamma^2}\int_{0}^{\tau}\tZ_{s}^2 ds\right)\cdot \tau\right]\\
	&\le \E\left[\text{exp}\left(q\left(2\overline{r}p_{+}+\left(p_{+}+\frac{4p_{+}^2}{\gamma^2}\right)C_{\frac{\lambda}{\sigma}}\right)\tau+\frac{4qp_{+}^2}{\gamma^2}\int_{0}^{\tau}\tZ_{s}^2 ds\right)\right]^{1/q}  \E\left[\tau^{\frac{q}{q-1}}\right]^{\frac{q-1}{q}} <\infty,
	\end{align*}
	where the finiteness is guaranteed by Assumption \ref{A4}. This, together with \eqref{c permit 1} and \eqref{c permit 1'}, establishes the first part of \eqref{EQ15}. On the other hand, a straightforward calculation, using the definitions of $c^*$ and $p_-$ and Lemma \ref{C3}, shows that the second part of \eqref{EQ15} holds under Assumption \ref{A4}. 
\end{proof}


\subsection{Proof of Theorem \ref{T1}}\label{subsec:proof of Theorem T1}
For any $(\pi,c)\in\cP$, define $R_{t}^{\pi,c}:= \frac{(X^{\pi,c}_{t\wedge\tau})^{1-\gamma}}{1-\gamma}e^{D_{t\wedge\tau}}$ and $F_{t}^{\pi,c} := R_{t}^{\pi,c} + \int_{0}^{t\wedge\tau}f(c_s,R^{\pi,c}_s)ds$, for $t\ge0$. 
Also, recall from Theorem~\ref{thm:EZ exists} the solution $(V^{c},Z^{c})$ to \eqref{EQ3}. 
In view of $H$ in \eqref{H} and the calculation in Section~\ref{subsec:ansatz}, $F^{\pi,c}$ is by construction a local supermartingale. By the Doob-Meyer decomposition and the martingale representation theorem, there exists an increasing processes $A^{\pi,c}$ and $Z^{\pi,c}$ such that $F_{t}^{\pi,c} = \int_{0}^{t\wedge\tau}Z_{s}^{\pi,c}dB_{s}-A^{\pi,c}_{t\wedge\tau}$, for all $t\ge 0$. We deduce from the definition of $F^{\pi,c}$ and its decomposition that
\begin{align*}
R^{\pi,c}_{t} 
&=\frac{(X^{\pi,c}_{\tau})^{1-\gamma}}{1-\gamma}e^{D_{\tau}} + \int_{t\wedge\tau}^{\tau}f(c_s,R^{\pi,c}_{s})ds -\int_{t\wedge\tau}^{\tau}Z_{s}^{\pi,c}dB_{s} + (A^{\pi,c}_{\tau}-A^{\pi,c}_{t\wedge\tau}),\quad t\ge 0.
\end{align*}
Noting that $D_{\tau}=0$ from \eqref{EQ11}, this shows that $(R^{\pi,c}_{t},Z^{\pi,c}_{t})_{t\ge 0}$ is a supersolution to \eqref{EQ3} with $R^{\pi,c}_\tau = V^c_\tau$.  
Then, a comparison result implies $R^{\pi,c}_{0}\ge V_{0}^{c}$.
To see this, consider $(Y_t, Z_t) := e^{-\delta\theta t}(1-\gamma) (V^c_t, Z^c_t)$ and $(\tilde Y_t, \tilde Z_t) := e^{-\delta\theta t}(1-\gamma) (R^{\pi,c}_t,Z^{\pi,c}_{t})$ for $t\ge 0$. In view of Theorem~\ref{thm:EZ exists}, $(Y, Z)$ is a solution to \eqref{EQ5} and $Y$ is of class $D$ (and thus of class $DL$). By direct calculation, $(\tilde Y, \tilde Z)$ is a subsolution to \eqref{EQ5}; moreover, as $R^{\pi,c}$ is of class $D$ (thanks to $(\pi,c)\in\mathcal P$ and $D\in\mathcal S_\infty$), $\tilde Y$ is by definition of class $DL$. Following \cite[page 246]{Xing}, we define 
\[
\alpha_t := 
\begin{cases}
\frac{F(t,c_t,Y_t)-F(t,c_t, \tilde Y_t)}{Y_t- \tilde Y_t},\quad&\hbox{for}\ Y_t\neq \tilde Y_t\\
0,\quad&\hbox{for}\ Y_t= \tilde Y_t.
\end{cases}
\]
As argued in \cite{Xing}, since $y\mapsto F(t,c,y)$ is decreasing, we have $\alpha\le 0$. By It\^{o}'s formula, we directly see that $M_\cdot:= e^{\int_0^\cdot \alpha_s ds }(Y_\cdot- \tilde Y_\cdot)$ is a local supermartingale. 
With $e^{\int_0^\cdot \alpha_s ds}$ bounded (as $\alpha\le 0$) and $Y$, $\tilde Y$ of class $DL$, $M$ is also of class $DL$ and thus a true supermartingale. 
Moreover, $M_\infty:= \lim_{t\to\infty} M_t = 0$ is well-defined, thanks to the boundedness of $e^{\int_0^\cdot \alpha_s ds}$ and $Y_t-\tilde Y_t\to Y_\tau -\tilde Y_\tau=0$ as $t\to\infty$. 
Hence, by \cite[Theorem 1.3.22]{Shreve98},  
$0=\E[M_{\tau}] \le M_{0}= Y_{0}-\tilde Y_{0}$, which implies $R^{\pi,c}_0\ge V^c_0$. 
Thus, we obtain
\[
\frac{x^{1-\gamma}}{1-\gamma}e^{D_{0}}\ge V_{0}^{c},\quad \forall(\pi,c)\in\cP. 
\]
Recall from Lemma \ref{C4} that $(\pi^{*},c^{*})\in\cP$. We will show that the upper bound $\frac{x^{1-\gamma}}{1-\gamma}e^{D_{0}}$ is achieved by $(\pi^{*},c^{*})$. Again, in view of $H$ in \eqref{H} and the calculation in Section~\ref{subsec:ansatz}, $F^{\pi^*,c^*}$ is by construction a local martingale; hence, there exists $Z^{*}$ such that $F_{t}^{\pi^*,c^*} = \int_{0}^{t\wedge\tau}Z^*_{s}dB_{s}$, for all $t\ge 0$. This gives
\[
R_{t}^{\pi^{*},c^{*}} = \frac{(X_{\tau}^{*})^{1-\gamma}}{1-\gamma}+\int_{t\wedge\tau}^{\tau}f(c_{s}^{*},R_{s}^{\pi^{*},c^{*}})ds -\int_{t\wedge\tau}^{\tau}Z_{s}^{*}dB_{s},\quad t\ge 0,
\]
implying
\[
\frac{x^{1-\gamma}}{1-\gamma}e^{D_{0}}=\E\left[\int_{0}^{\tau}f\bigg(c_{s}^{*},\frac{(X_{s}^{*})^{1-\gamma}e^{D_{s}}}{1-\gamma}\bigg)ds+ \frac{(X_{\tau}^{*})^{1-\gamma}}{1-\gamma}\right] = V^{c^*}_0,
\]
where the last equality follows from Theorem~\ref{thm:EZ exists}.



\section{The Framework of Hitting Times}\label{sec:Markov}
In this appendix, we take the random horizon $\tau$ to be the hitting time 
of a diffusion process 
to a certain region. This additional structure allows us to connect the BSDE \eqref{EQ11} to an elliptic PDE with a Dirichlet boundary condition. 
This facilitates the comparison between our random-horizon analysis and the classical one on a fixed horizon, as prior studies often use the PDE approach. We find, particularly, that the inclusion of $\hat Z$ in \eqref{optimalstrategies}-\eqref{H}, while superfluous for the fixed-horizon case, is indispensable on a random horizon; see Remark~\ref{rem:hat Z=0} for details. 
Recall that Section~\ref{subsec:Heston} employs the PDE characterization in this appendix to compute numerically the optimal strategies in the Heston model of stochastic volatility.

Let us remind ourselves of the related notation for elliptic equations in \cite{Gilbarg15}. 
Consider an open subset $\cD$ of $\R^n$, $k\in\N$, and $\nu\in(0,1)$. The H\"{o}lder space $C^{k,\nu}(\overline{\cD})$ (resp. $C^{k,\nu}(\cD)$) are defined as the subspace of $C^{k}({\cD})$ consisting of functions whose $k$th-order partial derivatives are uniformly (resp. locally) H\"{o}lder continuous with exponent $\nu$ in $\cD$.

Recall the setup in Section~\ref{subsec:setup}. In addition to $Y$ in \eqref{EQ8}, we introduce an additional state process $\cW$ given by 
\begin{equation}\label{cW}
	d\cW_{t} = \alpha(\cW_{t},Y_{t})dt + \beta(\cW_{t},Y_{t})dW_{t}+\Gamma(\cW_{t},Y_{t})d\hat{W}_{t},\quad \cW_0 = w\in \R, 
\end{equation}
for some given Borel measurable $\alpha,\beta,\Gamma:\R\times E\to\R$. 
As in \cite{Darling97, Kobylanski00, Confortola08}, we take the random horizon as the exit time of $(\cW,Y)$ from some open set $\cD\subset \R\times E$, i.e. 
\[
\tau_{w,y} :=\inf\left\{t\ge 0: (\cW_{t}^{w},Y_{t}^{y})\notin\cD \right\}.
\]
To ensure the existence of a strong solution to \eqref{cW} and sufficient regularity for subsequent analysis, 
we impose on the states $(\cW, Y)$ the following conditions, inspired by those in \cite[Section 6]{Kobylanski00}.

\begin{assump}\label{A2} $\cD\subset \R\times E$ is an open bounded set with $\partial\cD\in C^{2,\nu}$ for some $\nu\in(0,1)$. There exists an open set $U\subset \R\times E$ containing $\cD$ such that   
	\begin{itemize}
		\item [(i)] $\alpha,\beta,\Gamma$ are Lipschitz on $U$, $\inf_{U}\beta(w,y)>0$, $\inf_{U}\Gamma(w,y)>0$, and $\beta, \Gamma\in C^{1}(U)$;
		\item [(ii)] $\sigma, r, \lambda, \rho, \hat{\rho}, a$, and $b$ depend only on $y$, $\inf_{U}b(y)>0$, and $b^2\in C^{1}(U)$. %
	\end{itemize}
\end{assump}
The ellipticity conditions in Assumptions \ref{A1} and \ref{A2} guarantee the non-degeneracy of $(\cW, Y)$ in $\cD$, implying $\tau_{w,y}<\infty$ a.s. In view of \eqref{EQ8} and \eqref{cW}, the infinitesimal generator of $(\cW,Y)$ is
\[
\cL := a\frac{\partial}{\partial y} +\alpha\frac{\partial}{\partial w}+ \frac{b^{2}}{2}\frac{\partial^2}{\partial y^2} + \frac{1}{2}\left(\beta^{2}+\Gamma^2\right)\frac{\partial^2}{\partial w^{2}} + b\beta\frac{\partial^2}{\partial y\partial w},  
\]
and the corresponding elliptic boundary value problem is
\begin{equation}\label{pdeYW}
	\begin{split}
		\cL u(w,y) + G\left(y, u,\left(b\frac{\partial u}{\partial y}+\beta\frac{\partial u}{\partial w}\right),\Gamma\frac{\partial u}{\partial w}\right) &= 0,\quad (w,y)\in\cD,\\ 
		u(w,y) &=0,\quad (w,y)\in\partial\cD,
	\end{split}
\end{equation}
where 
\begin{align}\label{Hpde}
	G&(y,d,z,\hat z):=\left(1 + \frac{(1-\gamma)}{\gamma}\rho(y)^{2}\right)\frac{z^{2}}{2}+\left(1 + \frac{(1-\gamma)}{\gamma}\hat{\rho}(y)^{2}\right)\frac{{\hat z}^{2}}{2} +\frac{(1-\gamma)\lambda(y)}{\gamma\sigma(y)}\rho(y)z\notag\\
	&+\frac{(1-\gamma)\lambda(y)}{\gamma\sigma(y)}\hat{\rho}(y)\hat z + \frac{(1-\gamma)}{\gamma}\rho(y)\hat{\rho}(y)z\hat z+\frac{\delta^{\psi}\theta}{\psi}e^{-\frac{\psi}{\theta}d} + (1-\gamma)\bigg(r(y) + \frac{\lambda(y)^{2}}{2\gamma\sigma(y)^{2}}\bigg)-\delta\theta.
\end{align}

\begin{theorem}\label{PDEapproach}
	Suppose Assumptions \ref{A1} and \ref{A2} hold. Then, \eqref{pdeYW} has a unique solution $u\in C^{2,\nu}(\overline{\cD})$, with $\sup_{\cD}|\nabla u|<\infty$.
\end{theorem}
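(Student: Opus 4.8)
The plan is to solve the elliptic boundary value problem \eqref{pdeYW} by the classical method of continuity combined with a priori estimates, following the scheme used in \cite[Section 6]{Kobylanski00} but adapting it to the specific generator $G$ in \eqref{Hpde}. The first step is to record that, under Assumptions \ref{A1} and \ref{A2}, the operator $\cL$ is uniformly elliptic on $\overline{\cD}$ with $C^{0,\nu}(\overline{\cD})$ coefficients (indeed $b^2,\beta,\Gamma\in C^1(U)$, so they are Lipschitz, hence H\"older on the bounded set $\overline{\cD}$), and that $\partial\cD\in C^{2,\nu}$; this puts us squarely in the framework of \cite{Gilbarg15}. The nonlinearity $G(y,d,z,\hat z)$ is smooth in $(d,z,\hat z)$, quadratic in $(z,\hat z)$, and—crucially, since $\theta<0$—the term $\frac{\delta^\psi\theta}{\psi}e^{-\frac{\psi}{\theta}d}$ is decreasing in $d$, so $G$ is nonincreasing in $d$ (after possibly absorbing the constant $-\delta\theta+(1-\gamma)(r+\lambda^2/(2\gamma\sigma^2))$, which is bounded on $\overline\cD$). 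This monotonicity in $d$ is what will give uniqueness and the zeroth-order a priori bound.

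The core of the argument is the a priori estimate: one must show that any $C^{2,\nu}(\overline{\cD})$ solution $u$ of \eqref{pdeYW}, and more generally of the family of problems $\cL u + t\,G(\cdot) = 0$ for $t\in[0,1]$, satisfies $\|u\|_{C^{1}(\overline{\cD})}\le M$ for a constant $M$ independent of $t$. The sup-bound $\|u\|_{\infty}\le M_0$ comes from the maximum principle: using the monotonicity of $G$ in $d$ and the boundedness of the coefficients on the bounded domain $\cD$, one constructs explicit barrier functions (e.g. of the form $A - Be^{\mu w}$ or polynomial barriers) dominating the inhomogeneity; alternatively one invokes the connection to the bounded BSDE solution $D$ from Proposition~\ref{P5}, noting $D_t = u(\cW_t,Y_t)$ and $\|D\|_\infty<\infty$. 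The gradient bound $\sup_{\cD}|\nabla u|<\infty$ is the main obstacle: because $G$ grows quadratically in $\nabla u$, one cannot simply appeal to linear Schauder theory. The standard device—again as in \cite{Kobylanski00}—is the exponential (Hopf–Cole type) change of variable: since $G$ is quadratic in $(z,\hat z)$ with controlled coefficients, setting $v := \Phi(u)$ for a suitable convex $\Phi$ (chosen so that $\Phi''/\Phi'$ dominates the quadratic coefficient of the gradient terms) transforms \eqref{pdeYW} into an equation for $v$ whose nonlinearity has at most linear growth in $\nabla v$; then interior and boundary gradient estimates for quasilinear equations (e.g. \cite[Chapters 13–15]{Gilbarg15}) apply, yielding $\|v\|_{C^1}\le M$ and hence $\|u\|_{C^1}\le M$ after inverting $\Phi$.

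Once the uniform $C^1$ a priori estimate is in hand, one upgrades it: the equation for $u$ (or $v$) then has bounded coefficients and bounded inhomogeneity in divergence or non-divergence form, so De Giorgi–Nash–Moser / Schauder bootstrapping gives a uniform $C^{2,\nu}(\overline{\cD})$ bound. The method of continuity (\cite[Theorem 17.8]{Gilbarg15} or the Leray–Schauder fixed point theorem) then produces a solution $u\in C^{2,\nu}(\overline{\cD})$ of \eqref{pdeYW}. Uniqueness follows from a comparison argument: if $u_1,u_2$ are two solutions, their difference $w:=u_1-u_2$ satisfies a linear elliptic equation $\cL w + \langle \mathbf{c}, \nabla w\rangle + c_0 w = 0$ with $c_0\le 0$ (by the monotonicity of $G$ in $d$, via the mean value theorem) and zero boundary data, so the maximum principle forces $w\equiv 0$. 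Finally, $\sup_\cD|\nabla u|<\infty$ is exactly the gradient estimate already established. The one point requiring care is verifying that the coefficient appearing in front of the quadratic gradient terms in $G$—namely the quantities $1+\frac{1-\gamma}{\gamma}\rho^2$, $1+\frac{1-\gamma}{\gamma}\hat\rho^2$ and the cross term $\frac{1-\gamma}{\gamma}\rho\hat\rho$—together with the linear-in-gradient terms, can indeed be absorbed by a single exponential substitution; this is where the structure $\rho^2+\hat\rho^2=1$ and $\gamma>1$ must be used, and it is the step I expect to demand the most bookkeeping.
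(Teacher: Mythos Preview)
Your approach is the classical PDE route (a priori $L^\infty$ and $C^1$ estimates, then method of continuity), whereas the paper takes a shortcut that leans on the BSDE machinery already built. Concretely, the paper first extracts the bound $\overline C := \|D^{w,y}\|_\infty$ from Proposition~\ref{P5}, then \emph{truncates} the exponential term in $G$, replacing $e^{-\frac{\psi}{\theta}d}$ by a function $\varphi(d)$ that is linear in $d$ outside $[-\overline C,\overline C]$, so that the modified nonlinearity $\cG^\varphi$ satisfies the sign condition $d\cdot \cG^\varphi(w,y,d,0,0)\le 0$ for $|d|$ large. This lets the paper invoke \cite[Theorem 15.12]{Gilbarg15} off the shelf to get a $C^{2,\nu}(\overline\cD)$ solution $g$ of the truncated problem. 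A BSDE comparison (\cite[Theorem 2.6]{Kobylanski00}) then shows $g(w,y)=D_0^{w,y}=:u(w,y)$, whence $|u|\le\overline C$ and the truncation is vacuous. Uniqueness is read off from \cite[Theorem 1.2]{Barles95}.

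Your plan can in principle be carried out, but you underestimate the $L^\infty$ step. Monotonicity of $G$ in $d$ is not enough for the \emph{lower} bound: since $\partial_d G = -\delta^\psi e^{-\frac{\psi}{\theta}d}\to 0$ as $d\to -\infty$, one has $G(y,d,0,0)\to (1-\gamma)\big(r+\tfrac{\lambda^2}{2\gamma\sigma^2}\big) -\delta\theta$ as $d\to-\infty$, and this limit may well be $\le 0$. In that case the maximum-principle argument at an interior minimum yields no constraint on $u$, and constant subsolutions fail; non-constant barriers are hard to build because the quadratic gradient terms in $G$ then enter with uncontrolled sign. Your fallback ``invoke the BSDE bound from Proposition~\ref{P5}'' is the right instinct, but to turn it into an a priori estimate you must compare an arbitrary $C^{2,\nu}$ solution (via It\^o) with the specific $D$ of Proposition~\ref{P5}, and that requires a BSDE comparison on a generator with exponential growth; the clean way to justify that comparison is to truncate outside the bounded range of both processes---at which point you have reproduced the paper's argument. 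Your gradient bound via a Hopf--Cole substitution and your uniqueness via linearization plus the maximum principle are sound and are essentially what is packaged inside \cite[Theorem 15.12]{Gilbarg15} and \cite[Theorem 1.2]{Barles95}, respectively; the paper simply cites those results rather than redoing them.
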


\begin{proof}
	For any $(w,y)\in\cD$, let $(D^{w,y}_{t},Z^{w,y}_{t},\hat{Z}^{w,y}_{t})\in \cS_\infty\times \cM_2$ be a solution to \eqref{EQ11} on the random horizon $\tau_{w,y}$ (obtained from Proposition~\ref{P5}), and set $\overline C := \|D^{w,y}\|_{\infty}<\infty$. By the ellipticity conditions in Assumptions \ref{A1} and \ref{A2}, $(\cW, Y)$ is non-degenerate in $\cD$, which implies that $\overline C$ is independent of the choice of $(w,y)$.
	For any $(w,y)\in{\cD}$, $d\in\R$, and $p = (p_{1},p_{2})\in\R^2$, define the function 
	\begin{equation}\label{cG} 
		\cG(w,y,d,p_1,p_2) := a(y)p_{1} + \alpha(w,y)p_{2} + G\left(y,d,b(y)p_{2}+\beta(w,y)p_{1},\Gamma(w,y)p_{1}\right),
	\end{equation}
	where $G$ is given in \eqref{Hpde}. We aim to show the existence of a solution to \eqref{PDEapproach} by using \cite[Theorem 15.12]{Gilbarg15}, which requires $\cG $ to satisfy 
	\begin{equation}\label{Gil-condition} 
		d\cdot \cG(w,y,d,0,0)\le 0\quad \text{as}\ |d|\ge M,\quad \hbox{for some}\ M>0. 
	\end{equation}
	Note that $\cG(w,y,d,0,0) =\frac{\delta^{\psi}\theta}{\psi}e^{-\frac{\psi}{\theta}d} + (1-\gamma)\big(r(y) + \frac{\lambda(y)^2}{2\gamma\sigma(y)^{2}}\big)-\delta\theta$, and thus the above condition need not hold in general. To remedy this, define $\varphi\in C^{0,1}(\R)$ by 
	\[\varphi(d):= \begin{cases} 
		d + \big(e^{-\frac{\psi}{\theta}\overline C}-\overline C\big), & d> \overline C, \\
		e^{-\frac{\psi}{\theta}d}, & -\overline C\le d\le \overline C,\\
		d + \big(e^{\frac{\psi}{\theta}\overline C}+\overline C\big), & d<-\overline C.
	\end{cases}
	\]
	Define the function $G^{\varphi}$ as $G$ in \eqref{Hpde}, with the term $e^{-\frac{\psi}{\theta}d}$ therein replaced by $\varphi(d)$. Consider the corresponding boundary value problem 
	\begin{equation}\label{PDE'}
		\begin{split}
			\cL g + G^{\varphi}(y, g,bg_{y}+\beta g_{w},\Gamma g_{w}) &=0,\text{ for }(w,y)\in\cD\\
			g(w,y)&=0,\text{ for }(w,y)\in\partial\cD.
		\end{split}
	\end{equation}
	Define $\cG^\varphi$ as in \eqref{cG} with $G$ replaced by $G^\varphi$. Note that $\cG^{\varphi}$ satisfies \eqref{Gil-condition} and is Lipschitz on $\overline{\cD}\times\R\times\R^2$ under Assumption~\ref{A1}. Hence, we can apply \cite[Theorem 15.12]{Gilbarg15} to obtain a solution $g\in C^{2,\nu}(\overline{\cD})$ to \eqref{PDE'}. Now, define the process $\overline{D}_{t}^{w,y}:= g(\cW_{t}^{w},Y_{t}^{y})$. Applying It\^{o}'s formula yields
	\begin{equation}\label{DynamicsPhi}
		d\overline{D}_{t}^{w,y} = -G^{\varphi}\left(Y_{t}^{y},\overline{D}_{t}^{w,y},\overline{Z}_{t}^{w,y},\overline{\cZ}_{t}^{w,y}\right)dt+\overline{Z}_{t}^{w,y}dW_{t} + \overline{\mathfrak Z}_{t}^{w,y}d\hat{W}_{t}
	\end{equation}
	where $\overline{Z}_{t}^{w,y} := b\frac{\partial g}{\partial y}(Y_{t}^{y},\cW_{t}^{w})+\beta \frac{\partial g}{\partial w}(Y_{t}^{y},\cW_{t}^{w})$ and $\overline{\mathfrak Z}_{t}^{w,y} := \Gamma\frac{\partial g}{\partial w}(Y_{t}^{y},\cW_{t}^{w})$. On the other hand, define $u(w,y):=D_{0}^{w,y}$ for all $(w,y)\in\overline{\cD}$. Thanks to \eqref{EQ11},
	\begin{align*}
		dD_{t}^{w,y} &= -H\left(t,D_{t}^{w,y},Z_{t}^{w,y},\hat{Z}_{t}^{w,y}\right)dt+Z_{t}^{w,y}dW_{t} + \hat{Z}_{t}^{w,y}d\hat{W}_{t}\\
		&=-G^{\varphi}\left(Y_{t}^{y},D_{t}^{w,y},Z_{t}^{w,y},\hat{Z}_{t}^{w,y}\right)dt+Z_{t}^{w,y}dW_{t} + \hat{Z}_{t}^{w,y}d\hat{W}_{t},
	\end{align*}
	where the second line follows from the definitions of $H$ and $G$ in \eqref{H} and \eqref{Hpde}, as well as $|D^{w,y}_{t}|\le \overline C$ for all $t\ge 0$. 
	Using the comparison result for quadratic BSDEs in \cite[Theorem 2.6]{Kobylanski00}, we conclude that $g(w,y) = \overline{D}_{0}^{w,y} = D_{0}^{w,y} = u(w,y)$, for all $(w,y)\in \overline{\cD}$. Hence, $u\in C^{2,\nu}(\overline{\cD})$ solves \eqref{PDE'}, and thus \eqref{pdeYW} (as $|u(w,y)|=|D_{0}^{w,y}|\le \overline C$, making $G^\varphi = G$). The uniqueness follows from the comparison principle for PDEs with quadratic growth in \cite[Theorem 1.2]{Barles95}.
	\end{proof}

Now, let $(D,Z,\hat{Z})$ be the solution to the BSDE \eqref{EQ11} on the random horizon $\tau_{w,y}$, obtained in Proposition~\ref{P5}. The connection between \eqref{EQ11} and \eqref{pdeYW} can be stated precisely through the following $\P$-a.s. representation: for all $t\ge 0$,
\begin{align}\label{representation}
	D_{t} = u(\cW_{t}^{w},Y_{t}^{y}),\quad Z_{t} = b\frac{\partial u}{\partial y}(\cW_{t}^{w},Y_{t}^{y})+\beta\frac{\partial u}{\partial w}(\cW_{t}^{w},Y_{t}^{y}),\quad \hat{Z}_{t}=\Gamma\frac{\partial u}{\partial w}(\cW_{t}^{w},Y_{t}^{y}),
\end{align}
where $u\in C^{2,\nu}(\overline{\cD})$ is the solution to \eqref{pdeYW}. This follows simply by applying It\^{o}'s formula to $u$. 
An important message of \eqref{representation} is that $\hat Z$ can be dropped completely on a fixed horizon, but is indispensable in general when a random horizon is considered. 

\begin{remark}\label{rem:hat Z=0}
	If $\Gamma(w,y)\equiv 0$ in \eqref{cW}, the randomness of $\tau_{w,y}$ comes exclusively from $W$. Then, \eqref{representation} indicates $\hat Z\equiv 0$, implying that one can drop $\hat Z$ completely in Section~\ref{subsec:ansatz}. This particularly covers the standard case with a fixed horizon $T>0$, by taking $\alpha\equiv 1$, $\beta\equiv 0$, $\Gamma\equiv 0$ in \eqref{cW} and $\cD = (w-\eps,w+T)\times E$, for any $\eps>0$. With $\hat Z\equiv 0$, the setup in Section~\ref{subsec:ansatz} is consistent with those in \cite{Xing, Pham02, Xing18, Kraft13, Kraft17} on a fixed horizon. In particular, $\hat Z\equiv 0$ leads to simpler $(\pi^*,c^*)$ and $H$ in \eqref{optimalstrategies} and \eqref{H}, which recover \cite[Theorem 5.1]{Kraft17} and \cite[(2.12), (2.13)]{Xing}. 
	
	When the randomness of $\tau_{w,y}$ comes jointly from $W$ and $\hat W$ (i.e. $\Gamma(w,y)\not\equiv 0$ in \eqref{cW}), \eqref{representation} indicates that $\hat Z$ is not identically zero and thus cannot be omitted in general.  
\end{remark}


\section{Proofs for Section \ref{Examples}}\label{section: proofs for examples}
\subsection{Proof of Propostition~\ref{Prop: Parabolic}}\label{subsec: Parabolic PDE}
	By \cite[Theorem 4.9]{Kraft17}, the Cauchy problem
	\begin{equation*}
	\begin{split}
		u_{t}(t,y)+\cH(u,-\sigma u_{y}) = \Big(\alpha-\frac{1}{2}\sigma^{2}\Big)u_{y}(t,y) - \frac{1}{2}\sigma^{2}u_{yy}(t,y),\quad &\forall y\in\R,\ t\in[0,T), \\
		u(T,y) = 1,\quad &\forall y\in\R,
	\end{split}
\end{equation*}
has a unique positive solution $h\in C^{1,2}([0,T]\times\R)$ that is bounded from above and away from zero; moreover, $\|h_y\|_\infty<\infty$. Take $0<\underline{h}<\overline{h}$ such that $0<\underline{h}\le h(t,y)\le \overline{h}$ for all $(t,y)\in[0,T)\times\R$. Define $\overline{u}(t,y) := h(t,y) + (\underline{h}+1)$ and $\hat{u}(t,y) := h(t,y) - (\overline{h}+1)$. A direct calculation confirms that $\overline{u}\text{ (resp. }\hat{u})$ is an ordered upper (resp. lower) solution of \eqref{PDE_Ex1}; see \cite[p.99, Definition 2.1]{Pao92}. Since $h$ and $h_{y}$ are bounded, the conditions of \cite[Theorem 4.2]{Pao92} are met, which gives the existence of a unique bounded solution $u\in C^{1,2}([0,T]\times\R)$ to \eqref{PDE_Ex1}.


\subsection{Derivation of Lemma~\ref{lem:MGF}}\label{subsec:proof of Lemma MGF}
To prove Lemma~\ref{lem:MGF}, we need to first characterize the survival probability of $\tau_w$.
\begin{lemma}\label{lem:density}
	For any $L>0$ and $(w,y)\in\cD$, 
	\begin{align}\label{prob}
	P(w,y,t) &:= \P(\tau_w > t \mid \cW_0=w,\ Y_0=y)\nonumber\\
	&= \sum_{n=0}^{\infty}\frac{4(-1)^{n}}{\pi(2n+1)}\exp\left(-A_{n}(\alpha t)-\frac{2\alpha}{k^{2}}B_{n}(\alpha t) y\right)\cos\left(\frac{(2n+1)\pi w}{L}\right)\quad \forall t\ge 0,
	\end{align}
	where, under the notation $ \beta_{n} := \frac{k}{\alpha}(2n+1)\pi$ and $\eps_{n}:=\left(\frac{(2n+1)\pi}{\sqrt{2\alpha}L}\right)^{2}\eps$,
	\begin{align}
	A_{n}(s) &:= \frac{2\alpha m^{2}}{k^{2}}\ln\left(\frac{(\Delta_{n}+1) + (\Delta_{n}-1)e^{-\Delta_{n}s}}{2\Delta_{n}}\right)+\left(\frac{\alpha m^2(\Delta_{n}-1)}{k^2}+\eps_{n}\right)s,\label{A_n}\\
	B_{n}(s) &:=\frac{\beta_{n}^{2}}{2L^{2}}\left[\frac{1-e^{-\Delta_{n}s}}{(\Delta_{n}+1)+ (\Delta_{n}-1)e^{-\Delta_{n}s}}\right],\quad \hbox{with}\quad \Delta_{n} := \sqrt{1+({\beta_{n}}/{L})^{2}}.\label{B_n}
	\end{align}
\end{lemma}

\begin{proof}
The associated backward Fokker-Planck equation of $P(w,y,t)$ is
\begin{equation}\label{HM-FokkerPlanck}
\frac{\partial P}{\partial t} = -\alpha(y-m^{2})\frac{\partial P}{\partial y} + \frac{1}{2}k^{2}y\frac{\partial^{2}P}{\partial y^{2}} + \frac{1}{2}(y+\eps)\frac{\partial^{2}P}{\partial w^{2}},
\end{equation}
with initial condition $P(w,y,0)=1$ and boundary condition $P(\pm \frac{L}{2},y,t)=0$. Similarly to \cite{Masoliver08}, as $w\mapsto P(w,y,t)$ is an even function (due to \eqref{mean return}), it can be expressed as a Fourier series, i.e.
\[
P(w,y,t) = \sum_{n=0}^{\infty}P_{n}(y,t)\cos\left(\frac{(2n+1)\pi w}{L}\right)
\]
with the Fourier coefficients
\begin{equation}\label{P_n}
P_{n}(y,t) = \frac{2}{L}\int_{-L/2}^{L/2}P(w,y,t)\cos\left(\frac{(2n+1)\pi w}{L}\right)dw. 
\end{equation}
In view of \eqref{HM-FokkerPlanck} and \eqref{P_n}, the change of variables $s = \alpha t$ and $v=(\frac{2\alpha}{k^{2}})y$ gives the equation  
\begin{equation}\label{HM-FokkerPlanck-Fourier}
\frac{\partial P_{n}}{\partial s} = -(v-\mu)\frac{\partial P_{n}}{\partial v} + v\frac{\partial^{2}P_{n}}{\partial v^{2}} - \left(\frac{\beta_{n}}{2L}\right)^{2}vP_{n} + \eps_{n}P_{n},
\end{equation}
with initial condition $P_{n}(v,0) = \frac{4(-1)^{n}}{\pi(2n+1)}$, where $\mu :=  \frac{2\alpha m^{2}}{k^{2}}$ and $\beta_n$, $\eps_n$ are defined in the statement of Lemma~\ref{lem:density}. This can be solved by the ansatz
\[
P_{n}(v,s) = \frac{4(-1)^{n}}{\pi(2n+1)}\text{exp}\left(-A_{n}(s)-B_{n}(s)v\right).
\]
Differentiating and substituting this back into \eqref{HM-FokkerPlanck-Fourier}, we find
\begin{equation*}
A_{n}'(s) = -v\left(B_{n}'(s)+B_{n}(s)+B_{n}^{2}(s)-\left(\frac{\beta_{n}}{2L}\right)^{2}\right)+\mu B_{n}(s)+\eps_{n}.
\end{equation*}
Notice that $B_{n}(s)$ must solve the Riccati equation 
\begin{equation}\label{Riccatti}
B_{n}'(s) = -B_{n}(s)-B_{n}^{2}(s) + \left(\frac{\beta_{n}}{2L}\right)^{2},\quad B_{n}(0) = 0,
\end{equation}
under which
$A_{n}'(s) = \mu B_{n}(s) + \eps_{n}$, implying $A_{n}(s)= \mu\int_{0}^{s}B_{n}(t)dt + \eps_{n}s$. The solution to \eqref{Riccatti}, derived in \cite{Masoliver08}, is given as in \eqref{B_n}. 
It follows that one can calculate $A_n(s)$ as in \eqref{A_n}.
Therefore, the survival probability has the representation  
\begin{equation}\label{Survival}
P(w,v,s) = \sum_{n=0}^{\infty}\frac{4(-1)^{n}}{\pi(2n+1)}\text{exp}\left(-A_{n}(s)-B_{n}(s)v\right)\cos\left(\frac{(2n+1)\pi w}{L}\right), 
\end{equation}
with $A_n, B_n$ specified as above. Changing the variables $(v,s)$ back to $(y,t)$ gives \eqref{prob}.
\end{proof}

Now, we are ready to prove Lemma~\ref{lem:MGF}. We will use the same notation in \eqref{A_n} and \eqref{B_n}. 

\begin{proof}[Proof of Lemma~\ref{lem:MGF}]
Recall $P(w,y,t)$ in \eqref{prob} and define $F^{w,y}(t):=1-P(w,y,t)$, $t\in[0,\infty)$, the distribution function of $\tau_w$. For any $c\in [0,c^*)$, thanks to integration by parts, 
\begin{align}\label{to show MGF}
\E\left[e^{c\tau_w}\right] &= \int_{0}^{\infty}e^{ct}d F^{w,y}(t)= -\lim_{t\to\infty} e^{ct} P(w,y,t) +1+ c \int_{0}^{\infty}e^{ct}P(w,y,t)dt.
\end{align}
To show that $\E[e^{c\tau_w}]<\infty$, it suffices to prove $\int_{0}^{\infty}e^{ct}P(w,y,t)dt<\infty$ (because it readily implies $\lim_{t\to\infty}e^{ct} P(w,y,t)=0$). 
As $c\in [0,c^*)$, take $\eta\in[0,1)$ such that $c=\eta c^*$. Observe from \eqref{A_n} that  
\begin{align*}
A_{n}(\alpha t) \ge -\frac{2\alpha m^2}{k^2}\ln(2) + \left( \left(\frac{\alpha m}{k}\right)^2 (\Delta_{n}-1)+(2n+1)^2 \frac{\pi^2\eps}{2L^2}\right) t,\quad \forall t\ge 0. 
\end{align*}
Hence, by the nonnegativity of $B_n(\alpha t)$ in \eqref{B_n}, 
\begin{equation}\label{ct-A-B}
ct-A_{n}(\alpha t)-\frac{2\alpha}{k^{2}}B_{n}(\alpha t)y \le \eta c^* t- A_{n}(\alpha t)\le \frac{2\alpha m^2}{k^2}\ln(2) -\left(4 n^2+4n+(1-\eta)\right)\frac{\pi^2\eps}{2L^2}t, 
\end{equation}
where the second inequality follows from the definition of $c^*$ in \eqref{c^*}. Now, by \eqref{prob} and \eqref{ct-A-B},
\begin{align}\label{e^ctP}
\int_{0}^{\infty}e^{ct}P(w,y,t)dt &\le \frac{4}{\pi} \int_0^\infty \sum_{n=0}^{\infty} \exp\left(ct -A_{n}(\alpha t)-\frac{2\alpha}{k^{2}}B_{n}(\alpha t) y\right) dt\notag\\
&\le \frac{2^{2 (\frac{\alpha m^2}{k^2}+1)} }{\pi} \int_0^\infty \sum_{n=0}^{\infty} \exp\left(-\left(4 n^2+4n+(1-\eta)\right)\frac{\pi^2\eps}{2L^2}t\right) dt\notag\\
&\le \frac{2^{2 (\frac{\alpha m^2}{k^2}+1)} }{\pi} \int_0^\infty e^{-\left(1-\eta\right)\frac{\pi^2\eps}{2L^2}t} \bigg(1+ \sum_{n=1}^{\infty} e^{-\frac{2\pi^2\eps t}{L^2}n^2}\bigg) dt.
\end{align}
Noting that
\[
\sum_{n=1}^{\infty} e^{-\frac{2\pi^2\eps t}{L^2}n^2} \le \int_0^\infty e^{-\frac{2\pi^2\eps t}{L^2}x^2} dx = \frac{L}{2^{3/2}\sqrt{\pi\eps t}} ,
\]
we conclude from \eqref{e^ctP} that
\begin{align*}
\int_{0}^{\infty}e^{ct}P(w,y,t)dt 
&\le \frac{2^{2 (\frac{\alpha m^2}{k^2}+1)} }{\pi} \int_0^\infty e^{-\left(1-\eta\right)\frac{\pi^2\eps}{2L^2}t} \bigg(1+ \frac{L  t^{-1/2}}{2^{3/2}\sqrt{\pi\eps}}\bigg) dt<\infty,
\end{align*}
as desired.
\end{proof}

\bibliographystyle{siam}
\bibliography{refsMathSci}
\end{document}